\def\showappendix{}
\tikzstyle{max}=[shape=rectangle,draw,inner sep=0pt,minimum size=6mm,thick]
\tikzstyle{ran}=[shape=circle,draw,inner sep=0pt,minimum size=6mm,thick]
\newcommand{\Zset}{\mathbb{Z}}
\newcommand{\A}{\mathcal{A}}
\newcommand{\G}{\mathcal{G}}
\newcommand{\Qset}{\mathbb{Q}}
\newcommand{\eps}{\varepsilon}
\newcommand{\size}[1]{||#1||}
\newcommand{\genTran}[2]{%
    {}\mathchoice%
    {\stackrel{#1}{#2}}
    {\mathop {\smash{#2}}\limits^{\vrule width 0pt height 0pt depth 4pt\smash{#1}}}
    {\stackrel{#1}{#2}}
    {\stackrel{#1}{#2}}
{}}
\newcommand{\tran}[1]{\genTran{#1}{\rightarrow}}
\newcommand{\btran}[1]{\genTran{#1}{\leadsto}}
\newcommand{\sP}[1]{#1_P}  %
\newcommand{\sMax}[1]{#1_\top}  %
\newcommand{\sMin}[1]{#1_\bot}  %
\newcommand{\SSG}[3]{(#1,#2,#3)}  %
\newcommand{\ssgG}{\SSG{S}{\btran{}}{Prob}}
\newcommand{\OCSSG}[3]{(#1,#2,#3)}  %
\newcommand{\OCMDP}{\OCSSG}
\newcommand{\rules}{\Delta}
\newcommand{\Prob}{\mathit{Prob}}
\newcommand{\ocssgA}{\OCSSG{Q}{\rules}{P}}
\newcommand{\ocmdpA}{\ocssgA}
\newcommand{\conf}[2]{(#1,#2)}  %
\newcommand{\confs}[1]{{#1}{\times}\Zset}
\newcommand{\stateIII}[3]{\langle#1,#2,#3\rangle}  %
\newcommand{\stateV}[5]{[#1,#2,#3,#4,#5]}  %
\newcommand{\stratCL}{counterless}
\newcommand{\len}[1]{len(#1)}
\newcommand{\run}{Run}  %
\newcommand{\runs}[1]{\run{}(#1)}  %
\newcommand{\CState}[1]{\text{State}^{(#1)}} %
\newcommand{\Cnt}[1]{\text{C}^{(#1)}} %
\newcommand{\Prb}[2]{\mathbb{P}^{#1}_{#2}}
\renewcommand{\Pr}[3]{\Prb{#1}{#2}\hspace{-0.16em}\left({#3}\right)}   %
\newcommand{\Exp}{\mathbb{E}}
\newcommand{\Ex}[3]{\Exp^{#1}_{#2}\hspace{-0.16em}\left[{#3}\right]}   %
\newcommand{\Term}{\mathit{Term}}
\newcommand{\Reach}[1]{\mathit{Reach}_{#1}}
\newcommand{\ReachT}{\Reach{T}}
\newcommandx{\valO}[3][1=]{\mathrm{Val}_{#1}({#2},{#3})}  %
\newcommandx{\vt}[3][1=]{\mathrm{Val}_{#1}(\Term,\conf{#2}{#3})}  
\newcommand{\LimInf}[2]{{\mathit{LimInf}(#1 #2)}}
\newcommand{\CN}{{\LimInf{=}{{-}\infty}}}
\newcommand{\opi}{\pi^{*}} %
\newcommand{\SSGReach}{\text{reach $s_0$}}
\newcommand{\dtr}[3]{\mathit{DTR}_{#2}^{#1}(#3)}  %
\newcommand{\mindtr}[2]{\dtr{#1}{#2}{*}}  %
\newcommand{\TermB}[1]{\Term^{\leq{#1}}}  %
\newcommand{\vtb}[3]{\valO{\TermB{#3}}{\conf{#1}{#2}}}  %
\newcommand{\NPcoNP}{\mathrm{NP}\cap\mathrm{coNP}}
\newcommand{\sL}{\mathcal{L}}
\newcommand{\mar}[1]{m^{(#1)}}
\newcommand{\poten}{z} %
\newcommand{\vdiff}{\bar{\poten}_{\max}}
\newcommand{\qd}{\mathsf{trap}}
\newcommand{\ms}[1]{m^{(#1)}}
\newcommand{\expo}[1]{\mathrm{e}^{#1}}
\newcommand{\ifApp}[2]%
{\ifthenelse{\isundefined{\showappendix}}{#2}{#1}}
\newcommand{\myqed}{}
\theoremstyle{plain}
\newtheorem{theorem}{Theorem}[section]
\newtheorem{corollary}[theorem]{Corollary}
\newtheorem{lemma}[theorem]{Lemma}
\newtheorem{fact}[theorem]{Fact}
\newtheorem{claim}{Claim}
\theoremstyle{definition}
\newtheorem{example}[theorem]{Example}
\newtheorem{definition}[theorem]{Definition}
\theoremstyle{remark}
\newtheorem{remark}[theorem]{Remark}
\newcounter{repthmcnt}
\title{Approximating the Termination Value of One-Counter MDPs and Stochastic Games}
\author[fimu,fn1]{Tom\'{a}\v{s} Br\'{a}zdil}
\ead{brazdil@fi.muni.cz}
\author[uoe,fn1]{V\'{a}clav Bro\v{z}ek}
\ead{vbrozek@inf.ed.ac.uk}
\author[uoe]{Kousha Etessami}
\ead{kousha@inf.ed.ac.uk}
\author[fimu,fn1]{Anton\'{\i}n Ku\v{c}era}
\ead{kucera@fi.muni.cz}
\address[fimu,fn1]{Faculty of Informatics, Masaryk University,\\
Botanick\'a 68a, 60200 Brno\\
Czech Republic
}
\address[uoe]{School of Informatics, University of Edinburgh \\
Informatics Forum\\
10 Crichton Street\\
EH8 9AB, Edinburgh\\
United Kingdom}
\begin{document}

\begin{abstract}
One-counter MDPs (OC-MDPs) and one-counter simple stochastic games (OC-SSGs)
are 1-player, and 2-player turn-based zero-sum, stochastic games played on
the transition graph of classic one-counter automata
(equivalently, pushdown automata with a 1-letter stack alphabet).
A key objective for the analysis and verification of these 
games is the \emph{termination}
objective, where  the players aim  to maximize (minimize, respectively) 
the probability of hitting counter value~$0$, 
starting at a given control state and given counter value.

Recently,
we studied \emph{qualitative} decision problems (``is the optimal
termination value equal to~$1$?'')  for OC-MDPs (and OC-SSGs) and
showed them to be decidable in polynomial time (in $\NPcoNP$, respectively).
However, \emph{quantitative} decision and approximation problems (``is
the optimal termination value at least~$p$'', or ``approximate the
termination value within~$\eps$'') are far more challenging.  This is
so in part because optimal strategies may not exist, and because even
when they do exist they can have a highly non-trivial structure.  It
thus remained open even whether any of these quantitative termination
problems are computable.

In this paper we show that all quantitative \emph{approximation}
problems for the termination value for OC-MDPs and OC-SSGs are
computable.  Specifically, given a OC-SSG, and given $\eps > 0$, we
can compute a value~$v$ that approximates the value of the OC-SSG
termination game within additive error~$\eps$, and furthermore we can
compute $\eps$-optimal strategies for both players in the game.

A key ingredient in our proofs is a subtle martingale, derived from
solving certain linear programs that we can associate with a maximizing OC-MDP.
An application of Azuma's inequality on these martingales yields a
computable bound for the ``wealth'' at which a ``rich person's
strategy'' becomes $\eps$-optimal for OC-MDPs.

\end{abstract}

\maketitle

\section{Introduction}
\label{sec-intro}
In recent years, there has been substantial research done to
understand the computational complexity of analysis and verification
problems for classes of finitely-presented but infinite-state
stochastic models, MDPs, and stochastic games, whose transition graphs
arise from basic infinite-state automata-theoretic models, including:
context-free processes, one-counter processes, and pushdown processes.
It turns out these models are intimately related to important
stochastic processes studied extensively in applied probability
theory.  In particular, one-counter probabilistic automata are
basically equivalent to (discrete-time) quasi-birth-death processes
(QBDs) (see \cite{EWY08}), which are heavily studied in queuing theory
and performance evaluation as a basic model of an unbounded queue with
multiple states (phases).  It is very natural to extend these purely
probabilistic models to MDPs and games, to model adversarial queuing
scenarios.

In this paper we continue this work by studying quantitative
\emph{approximation} problems for \emph{one-counter MDPs (OC-MDPs)}
and \emph{one-counter simple stochastic games (OC-SSGs)}, which are
1-player, and turn-based zero-sum 2-player, stochastic games on
transition graphs of classic one-counter automata.  In more detail, an
OC-SSG has a finite set of control states, which are partitioned into
three types: a set of \emph{random} states, from where the next
transition is chosen according to a given probability distribution,
and states belonging to one of two players: \emph{Max} or \emph{Min},
from where the respective player chooses the next transition.
Transitions can change the state and can also change the value of the
(unbounded) counter by at most $1$.  If there are no control states
belonging to \emph{Max} (\emph{Min}, respectively), then we call the
resulting 1-player OC-SSG a \emph{minimizing} (\emph{maximizing},
respectively) OC-MDP.  Fixing strategies for the two players yields a
countable state Markov chain and thus a probability space of infinite
runs (trajectories).

A central objective for the analysis and verification of OC-SSGs, is
the \emph{termination} objective: starting at a given control state
and a given counter value $j > 0$, player Max (Min) wishes to maximize
(minimize) the probability of eventually hitting the counter value $0$
(in any control state).  From well know fact, it follows that these
games are \emph{determined}, meaning they have a \emph{value}, $\nu$,
such that for every $\eps>0$, player Max (Min) has a strategy that
ensures the objective is satisfied with probability at least
$\nu-\eps$ (at most $\nu+\eps$, respectively), regardless of what the
other player does.  This value can be \emph{irrational} even when the
input data contains only rational probabilities, and this is so even
in the purely stochastic case of QBDs without players (\cite{EWY08}).

A special subclass of OC-MDPs, called
\emph{solvency games},
was 
studied in \cite{BKSV08} as a simple model of risk-averse 
investment.  
Solvency games correspond to OC-MDPs
where there is only one control state, but 
there are multiple actions that change the counter
value (``wealth''), possibly by more than 1 per transition, 
according to a finite support probability distribution 
on the integers associated with each
action.  The goal is to minimize the probability of going bankrupt,
starting with a given positive wealth.
It is not hard to see that these are subsumed by minimizing OC-MDPs
(see \cite{BBEKW10}).
It was shown in \cite{BKSV08} that if the
solvency game satisfies a number of restrictive assumptions
(in particular, on the eigenvalues of a matrix
 associated with the game),
then an optimal ``rich person's'' strategy (which does the same action
whenever the wealth is large enough)  can be computed for it 
(in exponential time).
They showed such strategies are not optimal
for unrestricted solvency games and left the unrestricted case 
unresolved in \cite{BKSV08}.

We can classify analysis problems for OC-MDPs and OC-SSGs into two kinds.
\emph{Quantitative} analyses, which include: 
``is the game value at least/at most $p$'' for a given $p \in [0,1]$; or
``approximate the game value'' to within a desired additive
error $\eps > 0$.
We can also restrict ourselves to 
\emph{qualitative} analyses, which asks
``is the game value = 1? = 0?''.\footnote{The problem 
``is the termination value = 0?'' 
is easier, and can be solved in polynomial time without
even looking at the probabilities labeling the transitions of the OC-SSG.}
We are also interested in strategies
(e.g., memoryless, etc.)
that achieve these.

In recent work \cite{BBEKW10,BBE10}, we have studied \emph{qualitative}
termination problems
for
OC-SSGs.  
For both \emph{maximizing} and \emph{minimizing} OC-MDPs,
we showed that these problems are decidable in P-time,
using linear programming, connections to the theory of 
random walks on integers, and other MDP objectives.
For OC-SSGs, we showed the qualitative termination problem 
``is the termination value = 1?'' is 
in NP $\cap$ coNP.  This problem is already as
hard as Condon's quantitative termination problem for finite-state SSGs.   
However we left open, as the main open question,  
the computability of \emph{quantitative} termination
problems for OC-MDPs and OC-SSGs.

\textbf{Our contribution.}
In this paper, we resolve positively the computability of all quantitative
\emph{approximation} problems associated with OC-MDPs and OC-SSGs.
Note that, in some sense, approximation 
of the termination value in the setting of OC-MDPs and OC-SSGs
can not be avoided.  This is so not only because the value can
be irrational, but
because (\ifApp{see Example~\ref{ex:noopt} in 
Section~\ref{sec:non-opt-ex}}{see~\cite{fullversion}})
for maximizing
OC-MDPs there need not exist any optimal strategy for maximizing
the termination
probability, only $\eps$-optimal ones (whereas Min does have an optimal
strategy in OC-SSGs).
Moreover, even for minimizing OC-MDPs, 
where optimal strategies do exist, they
can have a very complicated structure.  In particular, 
as already mentioned for solvency games, 
there need not exist any ``rich person's'' strategy
that can ignore the counter value when it is larger than some finite $N \geq 0$.

Nevertheless, we show all these difficulties
can be overcome
when the goal is to \emph{approximate}
the termination value of OC-SSGs and to compute $\eps$-optimal 
strategies. Our \emph{main result} (Theorem \ref{thm:main}) is the following:

\begin{trivlist}\it
\item There is an algorithm that,
given as input: a OC-SSG, $\G$,
an initial control state $s$,  an initial counter
value $j > 0$,  and a (rational) approximation threshold $\eps > 0$,
\begin{itemize}  
\item computes a rational number, $v'$, 
  such that $|v' - v^* | < \eps$,  where $v^*$ is  the value of the  
  OC-SSG termination game on~$\G$, starting in configuration $(s,j)$, and
\item computes $\eps$-optimal 
  strategies  for both players in the OC-SSG termination game. 
\end{itemize}
For OC-MDPs, i.e., 1-player OC-SSGs,
the algorithm runs in exponential time in the encoding size of 
the OC-MDP,
and in polynomial time in $\log(1/\eps)$ and $\log(j)$.
For 2-player OC-SSGs, the algorithm runs
in nondeterministic exponential time in the encoding size of the OC-SSG.\footnote{We shall explain after the statement of Theorem \ref{thm:main},
in footnote \ref{foot:nondet-explain}, p 
precisely what we mean by computing something 
in {\em nondeterministic}
exponential time.  It amounts to the standard notion of nondeterministic
computation used in the setting of total search problems.}
\end{trivlist}

We now outline our basic strategy for proving this theorem.
Consider the case of maximizing OC-MDPs, and suppose we
would like to approximate the optimal termination probability,
starting at state $q$ and counter value $i$.
Intuitively, it is not hard to believe that as the counter value %
goes to infinity,  
the optimal probability of termination starting at a state $q$
begins to approach the optimal probability, $v_q$, of forcing
the counter to have a $\liminf$ value $= -\infty$.
We prove that this is indeed the case.
But we can compute the optimal value $v_q$ and an optimal strategy
for achieving it, based on results in our prior work \cite{BBEKW10,BBE10}.
For a given $\eps > 0$, 
we need to compute a bound~$N$ on
the counter value, such that
for any state~$q$, 
and all counter values $N' > N$, the optimal
termination probability starting at $(q,N')$ is at most
$\eps$ away from the optimal probability for the counter
to have $\liminf$ value $= -\infty$.
\emph{A priori} it is not clear whether such a bound
$N$ is computable, although it is clear that $N$ exists.  
To show that it is computable, we employ a subtle
(sub)martingale, derived from solving a certain linear programming problem
associated with a given OC-MDP.    By applying Azuma's
inequality on this martingale, we are able to show there are
computable values $c < 1$, and $h \geq 0$, such that 
for all $i > h$,  
starting from a state $q$ and counter value $i$, 
the optimal probability of both terminating
and not encountering any state from which with probability 1 the player
can force the $\liminf$ counter value to go to $-\infty$,
is at most $c^i/(1-c)$.  Thus, the 
optimal termination probability
approaches from above
the 
optimal probability of forcing the $\liminf$ counter value
to be $- \infty$, and 
the difference between these two values is exponentially small in $i$, with a computable base $c$. 
This martingale argument extends to OC-MDPs an argument recently 
used in \cite{BKK11} for analyzing purely probabilistic one-counter automata
(i.e., QBDs).

These bounds allow us to reduce the problem of approximating
the termination value to the reachability problem for an
exponentially larger finite-state MDP, which we can solve
(in exponential time) using linear programming.
The case for general OC-SSGs and minimizing OC-MDPs turns out to 
follow a similar line of argument, reducing the essential problem to
the case of maximizing OC-MDPs.
In terms of complexity, the OC-SSG case 
requires ``guessing''
an appropriate (albeit, exponential-sized) strategy,
whereas the relevant exponential-sized strategy can
be computed in deterministic exponential time for OC-MDPs.
So our approximation algorithms run in exponential time for OC-MDPs
and nondeterministic exponential time for OC-SSGs.

\textbf{Related work.}
As noted, 
one-counter automata with a non-negative counter are equivalent 
to  pushdown automata restricted to a 1-letter stack alphabet
(see \cite{EWY08}), and  
thus OC-SSGs with the termination objective form a subclass of
pushdown stochastic games, or equivalently, Recursive 
simple stochastic games (RSSGs).
These more general stochastic games were studied in~\cite{EY05icalp}, 
where it was shown that
many interesting computational 
problems, including any nontrivial approximation
of the termination value  for general RSSGs and RMDPs is undecidable,
as are qualitative termination problems.
It was also shown in
\cite{EY05icalp} that for  stochastic context-free games (1-exit RSSGs),
which correspond to pushdown stochastic games with only one state,
both qualitative and quantitative termination problems are decidable,
and in fact qualitative termination problems are decidable in NP$\cap$coNP
(\cite{EY06stacs}), while quantitative termination problems are 
decidable in PSPACE.
Solving termination objectives is a key ingredient for many more
general analyses and model checking problems for such stochastic games
(see, e.g., \cite{BBFK06,BBKO09}).
OC-SSGs are incompatible with stochastic context-free games.
Specifically, for OC-SSGs,
the number of stack symbols is bounded by 1, instead of the number of control states.

MDP variants of QBDs, essentially 
equivalent to OC-MDPs, have been considered in 
the queueing
theory 
and stochastic modeling literature, see \cite{White05,LHB07}.
However, in order to keep their analyses tractable, these works 
perform 
a naive finite-state ``approximation'' by 
cutting off the value of the counter at
an arbitrary finite value $N$, and adding \emph{dead-end absorbing} states
for counter values higher than $N$.
Doing this can radically alter the
behavior of the model, even for purely probabilistic QBDs,
and these authors establish no rigorous
approximation bounds for their models.
In a sense, our work can be seen as a much more careful and
rigorous approach to finite approximation, employing
at the boundary other objectives like maximizing the probability that 
the $\liminf$ counter
value $= -\infty$.  Unlike the prior work we establish
rigorous bounds on how well our finite-state 
model approximates the original infinite OC-MDP.

\vspace*{-0.1in}

\section{Preliminaries}

We assume familiarity with basic notions from probability theory.
We call a probability distribution
$f$ over a discrete set, $A$, \emph{positive} if $f(a) > 0$ for all
$a \in A$.

\begin{definition}
A \emph{One-Counter Simple Stochastic Game (OC-SSG)}
is given as
$\A=\ocssgA$, where
\begin{itemize}
\item $Q$ is a finite non-empty set of \emph{control states}, partitioned
into the states $\sMax{Q}$ of player Max,
$\sMin{Q}$ of player Min, and stochastic states $\sP{Q}$;
\item a set $\rules\subseteq Q\times\{-1,0,+1\}\times Q$
of \emph{transition rules},
such that for all $q\in Q$ there is some $(q,a,r)\in\delta$;
\item a map $P$ taking each tuple $(q,a,r)\in \rules$ with $q\in \sP{Q}$ to
a positive rational number $P((q,a,r))$, so that
for every $q\in \sP{Q}$:
$\sum_{(q,a,r)\in\delta}P((q,a,r))=1$.
\end{itemize}
A \emph{configuration} is a pair $\conf{q}{c}$ of a control state, $q$,
and an integer counter value $c\in\Zset$.
The set of all configurations is $\confs{Q}$.
An OC-SSG where $\sMin{Q}=\emptyset$ is called a \emph{maximizing 
 One-Counter Markov Decision Process (maximizing OC-MDP)},
similarly $\sMax{Q}=\emptyset$ defines a \emph{minimizing OC-MDP}.
Finally, if $\sMax{Q}=\sMin{Q}=\emptyset$ we have a
\emph{One-Counter Markov Chain (OC-MC)}.
\end{definition}

Let us fix a OC-SSG, $\A=\ocssgA$.
A \emph{run} in $\A$ is an infinite sequence
of configurations
\(
\omega
=
\conf{q_0}{c_0}
\conf{q_1}{c_1}
\cdots
\)
such that for all $i\geq 1$ we have that $(q_{i-1},c_i-c_{i-1},q_i)\in \rules$.
We define for every $n\geq 0$ the following functions:
\begin{itemize}
\item
$\CState{n}: \run \to Q$ returns the $n$-th control state:
$\CState{n}(\omega)=q_n$.
\item
$\Cnt{n} : \run \to \Zset$
returns the $n$-th counter value:
$\Cnt{n}(\omega)=c_n$.
\end{itemize}

A finite prefix,
\(
w
=
\conf{q_0}{c_0}
\cdots
\conf{q_k}{c_k}
,
\)
of a run
is called a \emph{finite path},
and $\len{w}\coloneqq k$ is its length.
We denote by $\run$ the set of all runs, and by $\runs{w}$ the set of
all runs starting with a finite path $w$.
Closing the set $\{\runs{w} \mid \text{$w$ is a finite path}\}$
under complements and countable unions generates
the standard Borel $\sigma$-algebra of measurable sets of runs.
Note that the functions $\CState{n}$ and $\Cnt{n}$ have measurable pre-images.

\begin{sloppypar}
A \emph{strategy} for player Max is a function,
$\sigma$, which to each finite path
\(
w
=
\conf{q_0}{c_0}
\cdots
\conf{q_k}{c_k}
,
\)
also called \emph{history} in
this context, where $q_k\in \sMax{Q}$, assigns a
probability distribution on the set of
rules of the form $(q_k,a,r)\in\rules$.
It is called \emph{pure} if $\sigma(w)$
assigns probability~$1$ to some transition, for each history $w$. 
We call $\sigma$ \emph{\stratCL{}} if $\sigma(w)$ depends only on the
last control state, $q_k$.
Strategies for Min are defined similarly, just by substituting
$\sMax{Q}$ with $\sMin{Q}$.
\end{sloppypar}

Assume that a pair $(\sigma,\pi)$ of strategies for Max and Min, respectively,
is fixed. Consider a finite path
\(
w
=
\conf{q_0}{c_0}
\cdots
\conf{q_k}{c_k}
\)
and a rule $(q_{i-1},c_i-c_{i-1},q_i)\in\rules$, $1\leq i\leq k$.
We assign to this rule a weight, $x_i$, as follows:
If $q_{i-1}\in\sP{Q}$ then $x_i=\Prob((q_{i-1},c_i-c_{i-1},q_i))$.
If $q_{i-1}\in\sMax{Q}$ then $x_i$ is equal to the probability of
$(q_{i-1},c_i-c_{i-1},q_i)$ assigned by 
$\sigma(
\conf{q_0}{c_0}
\cdots
\conf{q_{i-1}}{c_{i-1}})$,
and similarly for
$q_{i-1}\in\sMin{Q}$
and $\pi$.
The weight of $w$ is then $x_w=\prod_{i=1}^{\len{w}-1}x_i$, where the empty
product is equal to~$1$.
Once we also fix an {\em initial configuration}, $\conf{q}{c}$,
we obtain a probability measure $\Prb{\sigma,\pi}{\conf{q}{c}}$.
This is defined by setting
$\Pr{\sigma,\pi}{\conf{q}{c}}{\run(w)}=0$ if $w$ does not start with $\conf{q}{c}$, and
$\Pr{\sigma,\pi}{\conf{q}{c}}{\run(w)}=x_w$ if $w$ starts with $\conf{q}{c}$.
This and the requirement of countable additivity of a measure
already uniquely describes $\Prb{\sigma,\pi}{\conf{q}{c}}$ 
(see, e.g., \cite[p.~30]{Puterman94} 
for the case of MDPs.  The extension of this to SSGs is straightforward.)
If $\A$ is a maximizing OC-MDP, a minimizing OC-MDP,
or a OC-MC, we denote the probability measure by
$\Prb{\sigma}{\conf{q}{c}}$,
$\Prb{\pi}{\conf{q}{c}}$, or
$\Prb{}{\conf{q}{c}}$, respectively.

\paragraph{Objectives}
In this paper, an \emph{objective} for an OC-SSG is
a measurable set of runs.
Player Max tries to maximize the probability of this
set, whereas player Min tries to minimize it.
Given an objective, $O$, for a OC-SSG, $\A$, and a configuration, 
$s = \conf{q}{c}$,
we define the \emph{value in $s$} as
\[
\valO[\A]{O}{s}  \coloneqq  \sup_{\sigma} \inf_{\pi} \Pr{\sigma,\pi}{s}{O}
= \inf_{\pi} \sup_{\sigma} \Pr{\sigma,\pi}{s}{O}
.
\]
The latter equality follows from Martin's Blackwell determinacy 
theorem~\cite{M98}. We write just  $\valO{O}{s}$ if
$\A$ is understood.
For an $\eps\geq 0$, a strategy $\sigma$ for Max is called \emph{$\eps$-optimal in $s$}
if $\Pr{\sigma,\pi}{s}{O} \geq \valO{O}{s}-\eps$ for every $\pi$.
Similarly a strategy $\pi$ for Min is \emph{$\eps$-optimal in $s$}
if $\Pr{\sigma,\pi}{s}{O} \leq \valO{O}{s}+\eps$ for every $\sigma$.
A $0$-optimal strategy is called \emph{optimal}.
Note that by determinacy both players have $\eps$-optimal strategies  
for every $\eps >0$.

The key objective is the \emph{termination} objective:
\[
\Term \coloneqq
\{\omega \in \run \mid 
\exists n: \Cnt{n}(\omega) \leq 0 \}
.
\]
The name ``termination'' stems from the connection to one-counter
automata. Such automata also have a finite number of control states and
a non-negative counter, and a run can be considered to ``terminate'' 
upon hitting counter value $0$.
OC-SSGs do not necessarily halt when the counter is $0$,
and allow negative counter values. However, this difference is
irrelevant from the perspective of the termination objective,
for which only the part of runs with non-negative counter values
matter.

\paragraph{Games without a Counter}
In our arguments we also use the notion of
Simple Stochastic Games (SSGs) of Condon~\cite{C92}, which are similar 
to OC-SSGs.
The main difference is the lack of a counter, and the focus
on the objective of reaching a distinguished sink state.

\begin{definition}
A \emph{simple stochastic game (SSG)}
is a tuple
$\G=\ssgG$, where
\begin{itemize}
\item
$S$ is a
finite
set of \emph{states}, partitioned
into the states $\sMax{S}$ of player Max,
$\sMin{S}$ of player Min,
and stochastic states $\sP{S}$;
\item
$\btran{}\subseteq S\times S$
is a transition relation 
such that for every state $s\in S$ there is at least one state 
$r\in S$ such that $s\btran{} r$;  
\item
$\Prob$ is a \emph{probability assignment} which
to each $s \in \sP{S}$
assigns a rational probability distribution on
its set of successors, where for a state $s \in \sP{S}$
its successors are defined to be the set $\{ r \mid s \btran{} r \}$.
\end{itemize}
If $\sMin{S}=\emptyset$ we call $\G$
a \emph{maximizing Markov decision process (maximizing MDP)}.
If $\sMax{S}=\emptyset$ we call it a \emph{minimizing MDP}.
If $\sMax{S}=\sMin{S}=\emptyset$ we call $\G$ a
\emph{Markov chain}.

The SSG also comes with a distinguished sink state $s_0\in S$,
and this implicitly defines the {\em reachability objective}
``$\SSGReach$'' defined  by runs $\omega$
which visit $s_0$.
\end{definition}

Runs, strategies, probability measures and values with respect to
objectives are defined analogously to those for OC-SSGs,
just by removing references to the counter.
In particular, runs are sequences of states.
The following is well known.

\begin{fact}
\label{fact:reach}
(See, e.g., \cite{Puterman94,C92,CY98}.)
For both maximizing and minimizing MDPs,
optimal pure memoryless strategies for reachability exist and can be computed,
together with the optimal reachability value, in polynomial time.
\end{fact}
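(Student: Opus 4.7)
The plan is to use the classical Bellman-equation and linear-programming approach to reachability in finite MDPs. For the maximizing case, take an MDP $\G = \ssgG$ with target $s_0$ and define $v^*(s) := \sup_\sigma \Pr{\sigma}{s}{\SSGReach}$. A standard monotone-convergence argument on finite-horizon reachability probabilities shows that $v^*$ satisfies the Bellman optimality equations: $v^*(s_0) = 1$; $v^*(s) = \sum_r \Prob(s)(r) \cdot v^*(r)$ for $s \in \sP{S}$; and $v^*(s) = \max_{r\,:\,s \btran{} r} v^*(r)$ for $s \in \sMax{S}$. Moreover $v^*(s) = 0$ on every state from which $s_0$ is unreachable in the underlying transition graph, and this set of states is identifiable in polynomial time by a single backward graph search.

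Next I would compute $v^*$ via linear programming. Restricting attention to the states that can reach $s_0$, I would form the LP that \emph{minimizes} $\sum_s x_s$ subject to: $x_{s_0} = 1$; $x_s \geq x_r$ for every $s \in \sMax{S}$ and every successor $r$ of $s$; $x_s = \sum_r \Prob(s)(r) \cdot x_r$ for every $s \in \sP{S}$; and $0 \leq x_s \leq 1$. The vector $v^*$ is feasible, and a routine argument shows minimization selects precisely the pointwise smallest feasible solution, which must coincide with $v^*$. Any polynomial-time LP solver thus yields $v^*$. From an optimal $x^*$ one extracts a pure memoryless optimal strategy $\sigma^*$ for Max by choosing, at each $s \in \sMax{S}$, any successor $r$ for which $x^*_s = x^*_r$; such an $r$ exists because at the LP optimum $x^*_s$ equals $\max_r x^*_r$. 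Correctness of $\sigma^*$ -- that under $\sigma^*$ the probability of reaching $s_0$ from $s$ equals $v^*(s)$ -- follows by observing that the sequence of values $v^*$ along the induced Markov chain is a bounded martingale, together with the fact that almost every run eventually either hits $s_0$ or is absorbed into the region where $v^* = 0$; the optional stopping theorem then identifies the initial value with the hitting probability.

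The minimizing case is entirely dual. Define $v^*(s) := \inf_\pi \Pr{\pi}{s}{\SSGReach}$, swap $\sMax{S}$ for $\sMin{S}$ and each max-inequality $x_s \geq x_r$ for the corresponding $x_s \leq x_r$, and solve the LP that now \emph{maximizes} $\sum_s x_s$ subject to the analogous constraints; this picks out the pointwise largest fixed point of the minimizing Bellman map, which equals $v^*$. A pure memoryless optimal strategy $\pi^*$ for Min is extracted by the same tightness selection, and correctness is again verified by the optional-stopping martingale argument.

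The only step that genuinely requires care -- and thus the main obstacle -- is the rigorous justification that the selected extremal fixed point of the Bellman system really equals $v^*$ and that the greedily extracted pure memoryless strategy really attains $v^*$. Both facts are classical in the finite-MDP literature, so granting them the polynomial-time bound on computing both the optimal values and the corresponding optimal strategies follows immediately from polynomial-time solvability of linear programming, for instance via the ellipsoid method.
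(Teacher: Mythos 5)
The paper merely cites this as a classical fact without giving a proof, so the comparison is only against the literature. Your overall approach (Bellman optimality equations plus a linear program, then strategy extraction) is the right one, but two steps as written would fail.

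First, in the maximizing case, the greedy extraction rule ``at each $s \in \sMax{S}$ pick \emph{any} successor $r$ with $x^*_s = x^*_r$'' is not correct. Consider Max-states $s_1, s_2$ with edges $s_1 \btran{} s_2$, $s_2 \btran{} s_1$, and $s_1 \btran{} s_0$: the LP gives $x^*_{s_1} = x^*_{s_2} = 1$, and the greedy rule may select $s_1 \btran{} s_2$ and $s_2 \btran{} s_1$, a strategy under which $s_0$ is never reached. Your optional-stopping justification presupposes exactly the property that fails here — that almost every run hits $s_0$ or the zero region — so it does not close the gap. One needs an additional tie-breaking rule, e.g.\ restricting to the value-optimal subgraph on states with $v^* > 0$ and choosing a successor that decreases the graph distance to $s_0$ within that subgraph (or, equivalently, an attractor-style selection).

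Second, the minimizing case is not ``entirely dual.'' You identify the zero set via ``a single backward graph search,'' i.e.\ states from which $s_0$ is unreachable in the underlying graph. For Min this is the wrong set: the relevant set is the set of states from which Min has a strategy to avoid $s_0$ forever, which is a \emph{safety} fixpoint (a state $s$ is in it iff $s \neq s_0$ and, when $s \in \sMin{S}$, \emph{some} successor is in it; when $s \in \sP{S}$, \emph{all} successors are in it). This set is in general strictly larger than the graph-unreachable set, and without zeroing it out first, the LP that maximizes $\sum_s x_s$ subject to $x_s \le x_r$ at Min states overshoots: the same two-state cycle $s_1 \leftrightarrow s_2$ with $s_1 \btran{} s_0$, now with $s_1,s_2 \in \sMin{S}$, has true value $0$ but the LP maximum is $1$. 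With the correct safety pre-processing the Bellman system has a unique fixed point on the remaining states and both the values and an optimal memoryless strategy are recoverable in polynomial time, but as written your reduction computes the wrong quantity.
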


\section{Main Result}
\label{sec:res}

\begin{theorem}[Main]
\label{thm:main}

There is an algorithm that, 
given an OC-SSG, $\A$, a configuration, $\conf{q}{i}$, $i\geq 0$, and
a rational $\eps >0$, computes a rational
number, $\nu$,
such that $|\vt{q}{i}-\nu|\leq\eps$, and 
computes strategies
$\sigma$ and $\pi$ for the Max and Min player, respectively,
such that both $\sigma$ and $\pi$ are $\eps$-optimal starting in
$\conf{q}{i}$ with respect to the termination objective. 
The algorithm runs in nondeterministic time exponential
in $\size\A$ and polynomial in $\log(i)$ and $\log(1/\varepsilon)$. If
$\A$ is an OC-MDP, then the algorithm runs in deterministic time
exponential in $\size\A$ and polynomial in 
$\log(i)$ and $\log(1/\varepsilon)$.\footnote{\label{foot:nondet-explain}
To make precise the meaning of this theorem,
we have to spell out precisely what we mean by a {\em nondeterministic}
algorithm that computes $\nu$, $\sigma$ and $\tau$ within given resource 
bounds.
This is a standard notion for total search problems.
We will say a nondeterministic algorithm (i.e., nondeterministic Turing
machine) 
computes  $\nu$, $\sigma$ and $\tau$
within the specified resource bounds (namely, exponential time) 
if given any input the algorithms halts 
in exponential time on all computation paths, and 
furthermore, if the input is not well-formed
the algorithm ``rejects'' it on all computation paths,
but if the input is well-formed
(i.e., if it is given a well formed OC-SSG, $\A$, initial 
configuration $(q,i)$, and $\epsilon > 0$)
the nondeterministic algorithm:
(a)
 has at least one accepting
computation path; and
(b)  on every accepting computation path
it outputs values $\nu$, $\sigma$, and $\pi$ which satisfy
that  $|\vt{q}{i}-\nu|\leq \eps$ and such that $\sigma$ and
$\pi$ are $\epsilon$-optimal strategies for Max and Min, respectively,
for the given input OC-SSG $\A$ and initial configuration $(q,i)$. 
On rejecting computation paths the algorithm need not output anything.
Note that the outputs on different
accepting executions may be different, but they must 
all satisfy the required specification.}

\end{theorem}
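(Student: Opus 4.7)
My plan is to organize the proof around three successive reductions. First, I would solve the case of maximizing OC-MDPs in deterministic exponential time; this is the technical core. Second, I would treat minimizing OC-MDPs by an analogous martingale-based argument. Third, I would handle general two-player OC-SSGs by nondeterministically guessing a compact description of an $\eps$-optimal strategy for one player and reducing to the OC-MDP case on the induced single-player game.

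For maximizing OC-MDPs the prior qualitative results of \cite{BBEKW10,BBE10} already give, in polynomial time, the optimal value $v_q$ of the objective ``$\liminf$ counter $=-\infty$'' for every state $q$, together with a pure memoryless strategy attaining it. Since every run that forces $\liminf=-\infty$ must cross $0$, we have $\valO{\Term}{\conf{q}{i}} \geq v_q$, and intuitively the termination value converges down to $v_q$ as $i\to\infty$. The central quantitative step is to make this convergence \emph{effective}: I would prove the existence of computable constants $c\in(0,1)$ and $h\geq 0$ with
\[
\valO{\Term}{\conf{q}{i}} - v_q \;\leq\; \frac{c^i}{1-c}
\qquad\text{for every $q$ and every $i>h$.}
\]
To obtain such $c$ and $h$ I would set up a linear program whose variables assign a real potential $z_q$ to each control state lying in the ``dangerous'' set (states not yet committed to driving $\liminf$ to $-\infty$), subject to constraints forcing the expected one-step change of (counter plus potential) to be uniformly negative while dangerous states are visited. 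The qualitative analysis guarantees feasibility. The potential then converts the counter process into a supermartingale with bounded increments and strictly negative drift on the dangerous part, and Azuma--Hoeffding, in the spirit of the QBD argument of \cite{BKK11}, bounds the probability that a run terminates without ever entering the safe set by $c^i$; this is precisely the gap between $\valO{\Term}{\conf{q}{i}}$ and $v_q$. Because the LP has polynomial size, its rational solutions have denominators bounded in $\size\A$, so $c$ and $h$ come out explicitly.

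With effective $c$ and $h$ I would set $N = O(h+\log(1/\eps)/\log(1/c))$ and build a finite-state MDP on configurations $\conf{q}{j}$ for $0\leq j\leq N$, replacing each boundary configuration $\conf{q}{N}$ by a stochastic gadget whose probability of reaching an absorbing ``terminated'' sink equals a rational $\eps/2$-approximation of $v_q$. Applying Fact~\ref{fact:reach} yields the desired approximation of $\vt{q}{i}$ together with a pure memoryless reachability-optimal strategy, which lifts to a strategy on $\A$ that plays the optimal $\liminf$-strategy above counter~$N$; the running time is exponential in $\size\A$ and polynomial in $\log i$ and $\log(1/\eps)$. The minimizing OC-MDP case is treated dually by the symmetric martingale, and the OC-SSG case then follows by nondeterministically guessing a finite description (memoryless above $N$, arbitrary below) of an $\eps$-optimal strategy for Min and verifying it by running the maximizing OC-MDP algorithm on the induced single-player game. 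The main obstacle, and the novelty, is the martingale step: producing a supermartingale with \emph{uniformly} negative drift on the dangerous states and extracting explicit constants from the LP, rather than merely proving some exponential decay. Getting this effective is what allows $N$ to be chosen and the whole algorithm to be computable; once it is in place, the finite-state reduction and the extensions to the other settings are essentially bookkeeping.
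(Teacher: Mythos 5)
Your roadmap for the maximizing OC-MDP core — rational potentials from a linear program, a bounded-increment martingale, Azuma--Hoeffding, then a finite-state reduction on counter values $\leq N$ — matches the paper's, but two pieces are missing or would not work as stated.

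First, the feasibility of the LP is not automatic once the states in $T$ are excluded. The paper has to reduce further to what it calls \emph{rising} OC-MDPs, which additionally rule out any \emph{idling} pure counterless strategy (one that almost surely revisits some state with the counter unchanged each time). In the presence of an idling strategy the long-run drift is zero, there can be no $\bar{x}>0$ satisfying the inequalities of Figure~\ref{fig:system-L}, and the potentials $\poten_q$ simply do not exist; removing $T$ does not cure this. Lemmas~\ref{lem:liminf}--\ref{lem:sol} depend on the rising hypothesis, and Lemma~\ref{lem:reduction} constructs, in polynomial time, a rising OC-MDP (with an auxiliary trap state $\qd$) that preserves termination values so that the bound can be pushed back to $\A'$. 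Your proposal needs an analogue of this step; ``the qualitative analysis guarantees feasibility'' is not enough. (A minor point: you want the expected one-step change of counter-plus-potential to be uniformly \emph{positive} so that termination from high counters is exponentially unlikely; the paper uses the submartingale $\mar{i}=\Cnt{i}+\bar{\poten}_{\CState{i}}-i\bar{x}$ with $\bar{x}>0$.)

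Second, your handling of the minimizing and two-player cases diverges from the paper and, as written, has a gap. The paper never runs a ``dual'' martingale for Min. Instead it fixes Min's $\CN$-optimal pure counterless strategy $\opi$ (guessable and verifiable via the qualitative results of \cite{BBE10}), obtains the maximizing OC-MDP $\A^*$, and shows in Lemma~\ref{lem:ssg} that $\vt[\A]{q}{i}\le\vt[\A^*]{q}{i}$ while the two have the same limit, so the $N$ computed for $\A^*$ works for $\A$. With that $N$, the finite SSG of Lemma~\ref{lem:fin-seg} is solved \emph{for both players at once}, yielding $\eps$-optimal strategies for Max and Min together. Your alternative — guess an $\eps$-optimal Min strategy and ``verify it by running the maximizing OC-MDP algorithm on the induced single-player game'' — is circular: fixing a Min strategy and solving the induced maximizing OC-MDP produces an upper bound on the game value, and without a matching Max strategy (exactly what the finite-SSG step supplies) there is no way to certify that this upper bound is within $\eps$ of $\vt{q}{i}$. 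Note also that $N$ can only be computed after $\opi$ is fixed, since it depends on $\A^*$, so ``memoryless above $N$, arbitrary below'' conflates two guesses that have to be made in the right order.
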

\setcounter{theorem}{\value{repthmcnt}}

Let us first briefly sketch the main ideas in the proof of 
Theorem~\ref{thm:main}.
First, observe that for all $q \in Q$ and $i \leq j$
we have that $\vt{q}{i} \geq \vt{q}{j}\geq 0$. Let
\[
\mu_q \coloneqq \lim_{i \rightarrow \infty}  \vt{q}{i}
.
\]
Since $\mu_q \leq \vt{q}{i}$ for an arbitrarily large $i$, 
Player Max should be able to decrease the counter by an arbitrary 
value with probability at least $\mu_q$, no matter what Player Min does. 
The objective of ``decreasing the counter by an arbitrary value''
can be formalized directly as the following ``limit'' objective,
which has useful connections to termination~\cite{BBEKW10}:
\[
\CN \coloneqq
\{\omega \in \run \mid 
\liminf_{n\to\infty} \Cnt{n}(\omega) =-\infty \}
.
\]
OC-SSG with this objective are determined, which means that the following
value is defined for every $q \in Q$:
\begin{equation}
\label{eq:nu}
\nu_q
\coloneqq
\valO\CN{\conf{q}{n}},
\quad\text{where $n=0$}
.
\end{equation}
\begin{remark}
\label{rem:q}
Observe that due to the nature of $\CN{}$ we would obtain the same
value $\nu_q$ if we used any other value of $n$.
It will be often the case that we will measure the (optimal) probability
of some events, where the resulting number will not depend on the initial
counter value.
From now on, in such cases we will specify only the initial state, so, e.g.,
(1) would become
\(
\nu_q
\coloneqq
\valO\CN{q}
.
\)
\end{remark}
One intuitively expects that $\mu_q = \nu_q$, and we show that this is 
indeed the case (see Corollary~\ref{cor:term-lim}).
Further, by~\cite[Theorem~2]{BBE10}, $\nu_q$ is rational 
and computable in non-deterministic time polynomial in~$\size\A$. 
Moreover, both players have optimal pure \stratCL{} strategies 
$(\sigma^*,\pi^*)$ computable in non-deterministic polynomial time. 
For OC-MDPs, both the value $\nu_q$ and the optimal strategies
can be computed in deterministic time polynomial in $\size\A$.

Obviously, there must be a (sufficiently large)
$N$ such that $\vt{q}{i}-\mu_q \leq \eps$ for all $q \in Q$ and $i \geq N$.
We show that an upper bound on $N$ is computable, and is at most exponential 
in $\size\A$ and polynomial in $\log(1/\varepsilon)$, in
Section~\ref{sec-OC-MDP}. As we shall see, 
this part 
is highly non-trivial. For all configurations $\conf{q}{i}$, where $i \geq N$,
the value $\vt{q}{i}$ can be approximated by $\mu_q$ ($= \nu_q$), 
and both players can use the optimal strategies 
$(\sigma^*,\pi^*)$ for the $\CN$ objective. For the remaining
configurations $\conf{q}{i}$, where $i < N$, we consider a (finite-state)
SSG $\G$ obtained by restricting ourselves to configurations with counter 
between $0$ and $N$, extended by two fresh stochastic states 
$s_0,s_1$ with self-loops. All configurations of the form $\conf{q}{0}$ have
only one outgoing edge leading to $s_0$, and all configurations of the 
form $\conf{q}{N}$ can enter either $s_0$ with probability $\nu_q$, or
$s_1$ with probability $1 - \nu_q$.  In this SSG,
we compute the values and optimal strategies for the objective of reaching
$s_0$. This can be
done in nondeterministic time polynomial in the size
of $\G$ (i.e., exponential in $\size\A$). 
If $\A$ is an OC-MDP, then $\G$ is a MDP, and the values and optimal strategies 
can be computed in deterministic polynomial time in 
the size of $\G$ (i.e., exponential in $\size\A$) by linear
programming (this applies both to the ``maximizing'' and the
``minimizing''  OC-MDPs). Thus, we obtain the required approximations of  
$\vt{q}{i}$ for $i < N$, and the associated $\eps$-optimal strategies.

\begin{proof}[Proof of Theorem~\ref{thm:main}]
The algorithm is given an OC-SSG $\A=\ocssgA$, an initial configuration $\conf{q}{i}$, and
a rational number $\eps>0$, as input.
Recall that for $r\in Q$ we set
\(
\nu_r
\coloneqq
\valO\CN{r}
.
\)
The algorithm does the following:
\begin{enumerate}
\item\label{alg:CN} Compute a pair $(\sigma^*,\pi^*)$ of pure \stratCL{} strategies, for players Max and Min, respectively,
which are optimal for $\CN$ starting 
at every state $r\in Q$.  Compute $\nu_r$, for every $r \in Q$.
\item\label{alg:N} Compute $N$ such that $\vt{r}{j}-\nu_r \leq \eps$ for all $r \in Q$ and $j \geq N$.
\item If $i\geq N$ then return $\nu_q$, $\sigma^*$, $\pi^*$.
\item\label{alg:fin} Otherwise apply the algorithm from Lemma~\ref{lem:fin-seg}
to $\A$, $\eps$, $(\nu_r)_{r\in Q}$, $N$, $\sigma^*$, $\pi^*$ and return
$\nu_{\conf{q}{i}}$, $\bar\sigma$, $\bar\pi$ from its output.
\end{enumerate}

A key step is obviously step \ref{alg:fin}, which 
is not described here.  We shall
describe and prove the correctness and complexity
of that step in Lemma~\ref{lem:fin-seg}.
If we can carry out the computations as specified
in Steps \ref{alg:CN} and \ref{alg:N}, then the correctness
of the output in Step 3 holds by definition.
Let us now evaluate the complexity of the first two steps
(using some of our earlier results, and some results that will be established in
Section \ref{sec-OC-MDP}):

\begin{itemize}
\item (Step~\ref{alg:CN}.)
The values $\nu_r$, $r\in Q$, which have polynomially big encoding
by~\cite[Proposition~9]{BBE10}, can be guessed and verified in polynomial
time by~\cite[Theorem~2]{BBE10}.  Strategies 
exist that are optimal with respect to $\CN$
and are pure and \stratCL{} by~\cite[Proposition~7]{BBE10}.
We can guess such a strategy $\sigma^*$ and verify, using the numbers 
$\nu_r$, that it
is $\CN$-optimal for Max; similarly for $\pi^*$ and Min.
If $\A$ is an OC-MDP, all the above can be computed deterministically
in time polynomial in $\size{A}$.

\item (Step~\ref{alg:N}.)
Fixing $\pi^*$ in $\A$ we obtain a maximizing OC-MDP
$\A^*$. Lemma~\ref{lem:c-max} applied to $\A^*$
allows us to compute deterministically
a bound $N\in \exp(\size{\A'}^{O(1)})\cdot O(\log(1/\eps))$ such that in $\A^*$,
$\vt{r}{j}-\nu_r \leq \eps$ for all $r \in Q$ and $j \geq N$.
By Lemma~\ref{lem:ssg} this $N$ satisfies the requirements
of step~\ref{alg:N}.
\end{itemize}
\end{proof}

\subsection{Bounding counter value $N$ for  maximizing OC-MDPs}
\label{sec-OC-MDP}
Consider a maximizing OC-MDP
$\A = \ocmdpA$. 
Recall the definition of $\nu_q$ from (\ref{eq:nu}) and the notational
convention introduced in Remark~\ref{rem:q}. Specifically, 
we have $\nu_q = \sup_\sigma\Pr\sigma{q}{\CN}$
for all $q\in Q$.  Given $\eps > 0$, we show here how to
obtain a computable (exponential) bound on a number $N$ such that \(
\left| \vt{q}{i} -
\nu_q
\right|
<
\eps
\)
for all $i\geq N$.
We denote by $T$ the set of all states $q$ with $\nu_q=1$, and we
define the objective of reaching~$T$ as follows:
\[
  \ReachT \coloneqq
    \{\omega \in \run \mid \CState{i}(\omega) \in T 
    \mbox{ for some } i \geq 0\}.
\]
Further, we define the objective $\neg\ReachT \coloneqq \run \smallsetminus
\ReachT$.

\begin{fact}[\protect{cf.\ \cite[Proposition~3.2]{BBEKW10}}]
\label{fact:CN} The number $\nu_q$ is the maximal probability
of reaching $T$ from $q$ (see Remark~\ref{rem:q}), i.e., 
\[
\nu_q
= 
\valO\ReachT{q}
=
\sup_\sigma \Pr\sigma{q}{\ReachT}
.
\]
\end{fact}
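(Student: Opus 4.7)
My plan is to prove the two inequalities of $\nu_q = \sup_\sigma \Pr{\sigma}{q}{\ReachT}$ separately; the $\geq$ direction is a short concatenation argument, whereas the $\leq$ direction carries the real work.

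For the direction $\nu_q \geq \sup_\sigma \Pr{\sigma}{q}{\ReachT}$, fix $\eps > 0$ and any strategy $\sigma$. For each $r \in T$ the definition $\nu_r = 1$ yields a strategy $\tau_r$ with $\Pr{\tau_r}{r}{\CN} \geq 1 - \eps$, and by the counter-shift invariance highlighted in Remark~\ref{rem:q} this holds for an arbitrary initial counter value. I form $\sigma'$ which plays $\sigma$ until the first visit to some $r \in T$ and then switches to $\tau_r$ applied to the remaining tail. By the strong Markov property, $\Pr{\sigma'}{q}{\CN} \geq (1 - \eps) \cdot \Pr{\sigma}{q}{\ReachT}$, hence $\nu_q \geq (1 - \eps) \cdot \Pr{\sigma}{q}{\ReachT}$; letting $\eps \to 0$ and taking the supremum over $\sigma$ finishes this direction.

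For the direction $\nu_q \leq \sup_\sigma \Pr{\sigma}{q}{\ReachT}$, I reduce matters to the key claim that for every strategy $\sigma$ and every state $r$,
\[
\Pr{\sigma}{r}{\CN \cap \neg\ReachT} \;=\; 0 .
\]
Granting this claim, $\Pr{\sigma}{q}{\CN} \leq \Pr{\sigma}{q}{\ReachT}$ for every $\sigma$, and the inequality follows by taking suprema. I would establish the claim by contradiction: suppose there are $\sigma, r$ with $\Pr{\sigma}{r}{\CN \cap \neg\ReachT} > 0$. On $\neg\ReachT$ the control-state sequence lies in the finite set $U = Q \setminus T$, so by pigeonhole some $r' \in U$ is visited infinitely often with positive probability on $\CN \cap \neg\ReachT$. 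Applying L\'evy's zero-one law to the tail event $\CN$ and the filtration generated by finite histories, the conditional probability of $\CN$ given the history converges almost surely under $\Prb{\sigma}{r}$ to the indicator of $\CN$; hence there is a finite history $w$ of positive probability that ends in control state $r'$ and satisfies $\Pr{\sigma}{r}{\CN \mid w} > 1 - \eps$ for any prescribed $\eps > 0$. This conditional probability coincides with the probability of $\CN$ under the tail strategy $\sigma_w$ started at $r'$, and by counter-shift invariance of $\CN$ it supplies a strategy from $r'$ achieving $\CN$ with probability at least $1 - \eps$. Hence $\nu_{r'} \geq 1 - \eps$ for every $\eps > 0$, so $\nu_{r'} = 1$, i.e., $r' \in T$, contradicting $r' \in U$.

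I expect the main obstacle to be making this second step fully rigorous in the infinite-state Markov chain induced by $\sigma$, in particular extracting the witnessing history $w$ from the zero-one law without a circular appeal to values at intermediate states. This reasoning is essentially the content of \cite[Proposition~3.2]{BBEKW10}, to which the fact statement refers; in a polished write-up I would either invoke that proposition directly or adapt its end-component-based argument.
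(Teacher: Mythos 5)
The paper states Fact~\ref{fact:CN} as a citation to \cite[Proposition~3.2]{BBEKW10} and does not reprove it, so there is no in-paper argument to match against; your sketch stands as a self-contained alternative, and it is correct. The easy direction is the standard switching/stitching argument; the one thing worth making explicit is that the witnessing strategies $\tau_r$ depend, in principle, on the counter value at which $T$ is first entered, and Remark~\ref{rem:q} (counter-shift invariance of $\CN$) is exactly what lets you suppress that dependence.

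The rigor concern you raise about the hard direction is not actually a gap. L\'evy's zero-one law applies verbatim to the filtration of finite histories of the countable chain induced by $\sigma$; on the positive-measure event $A = \CN \cap \neg\ReachT \cap \{r' \text{ visited infinitely often}\}$ the conditional probabilities $\Pr{\sigma}{r}{\CN \mid \mathcal{F}_n}$ tend to $1$ almost surely, so intersecting with the infinitely many visits to $r'$ produces a cylinder $\runs{w}$ with $w$ ending in $r'$ and $\Pr{\sigma}{r}{\CN \mid \runs{w}} > 1 - \eps$. Prefix-independence of $\CN$ and the cylinder factorization of the measure identify this conditional probability with the $\CN$-probability of the tail strategy $\sigma_w$ from the final configuration of $w$; counter-shift invariance then yields a strategy from $r'$, for an arbitrary initial counter, achieving $\CN$ with probability exceeding $1-\eps$. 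Nothing about $\nu_{r'}$ is assumed before the final comparison, so there is no circularity; the conclusion $\nu_{r'} = 1$ contradicts $r' \in Q \setminus T$. In effect you have rephrased, via martingale convergence in the countable induced chain, the end-component intuition that any recurrent class inside $Q\setminus T$ supporting $\CN$ would already exhibit a value-$1$ state; this avoids having to port finite-MDP end-component machinery to the infinite-state setting, which is a reasonable trade-off.
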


\begin{lemma}
\label{lem:gap}
For all $q\in Q$ and $i\geq 0$
\begin{equation}
\label{eq:gap}
\nu_q
\leq
\vt{q}{i}
\leq
\sup_\sigma \Pr\sigma{\conf{q}{i}}{\Term \cap \neg\ReachT}
+
\nu_q
.
\end{equation}
\end{lemma}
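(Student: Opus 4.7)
The plan is to prove the two inequalities separately; both reduce to elementary manipulations with the objectives, together with the counter-independence of $\valO{\CN}{\cdot}$ and $\valO{\ReachT}{\cdot}$ noted in Remark~\ref{rem:q}.

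For the lower bound $\nu_q \leq \vt{q}{i}$, I would first observe the set-theoretic inclusion $\CN \cap \{\omega \mid \Cnt{0}(\omega) = i\} \subseteq \Term$. Indeed, since $i \geq 0$ and the counter changes by at most one per step, any run whose $\liminf$ counter value is $-\infty$ must pass through $0$ on its way down. Consequently, for every strategy $\sigma$,
\[
\Pr{\sigma}{\conf{q}{i}}{\CN} \;\leq\; \Pr{\sigma}{\conf{q}{i}}{\Term}.
\]
Taking $\sup_{\sigma}$ on both sides and using Remark~\ref{rem:q} to identify $\sup_\sigma\Pr{\sigma}{\conf{q}{i}}{\CN}$ with $\nu_q$ (the $\CN$ objective does not depend on the starting counter) gives $\nu_q \leq \vt{q}{i}$.

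For the upper bound, I would start from the disjoint decomposition $\Term = (\Term\cap\ReachT)\,\cup\,(\Term\cap\neg\ReachT)$. For any strategy $\sigma$ this yields
\[
\Pr{\sigma}{\conf{q}{i}}{\Term}
\;=\;\Pr{\sigma}{\conf{q}{i}}{\Term\cap\ReachT}+\Pr{\sigma}{\conf{q}{i}}{\Term\cap\neg\ReachT}
\;\leq\;\Pr{\sigma}{\conf{q}{i}}{\ReachT}+\Pr{\sigma}{\conf{q}{i}}{\Term\cap\neg\ReachT}.
\]
Using the standard inequality $\sup_\sigma[A(\sigma)+B(\sigma)]\leq\sup_\sigma A(\sigma)+\sup_\sigma B(\sigma)$ and then taking $\sup_\sigma$ on the left to obtain $\vt{q}{i}$, I get
\[
\vt{q}{i}\;\leq\;\sup_\sigma\Pr{\sigma}{\conf{q}{i}}{\ReachT}\;+\;\sup_\sigma\Pr{\sigma}{\conf{q}{i}}{\Term\cap\neg\ReachT}.
\]
Because the OC-MDP has no counter guards on transitions and $\ReachT$ depends only on the sequence of control states, the first supremum is independent of the initial counter value $i$; by Fact~\ref{fact:CN} it equals $\nu_q$. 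This gives exactly the claimed upper bound.

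There is no real obstacle to overcome here: both directions are immediate once the right decompositions are written down, and the only mildly subtle point is invoking counter-independence for $\valO{\ReachT}{\cdot}$, which is justified by Remark~\ref{rem:q} (or equivalently, by a trivial counter-shift between strategies, since the transition rules do not depend on the counter). The substantive work of the section lies in the subsequent quantitative bound on $\sup_\sigma\Pr{\sigma}{\conf{q}{i}}{\Term\cap\neg\ReachT}$, which is what this lemma reduces the approximation problem to.
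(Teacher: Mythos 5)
Your proof is correct and follows essentially the same route as the paper's: the lower bound via the inclusion $\CN\subseteq\Term$ and counter-independence of $\valO{\CN}{\cdot}$, and the upper bound via the disjoint decomposition $\Term = (\Term\cap\ReachT)\cup(\Term\cap\neg\ReachT)$ together with Fact~\ref{fact:CN} to bound $\sup_\sigma\Pr{\sigma}{\conf{q}{i}}{\ReachT}$ by $\nu_q$. You simply spell out the sup-subadditivity step that the paper compresses into the phrase ``an easy application of the union bound.''
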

\begin{proof}
The first inequality is obvious.
Because $\CN\cap\runs{\conf{q}{i}}\subseteq\Term\cap\runs{\conf{q}{i}}$,
we have $\vt{q}{i}=1$ for all $q\in T$, $i\geq 0$,
from which the second inequality follows by an easy application of 
the union bound.  Namely, for under strategy $\sigma$, the event
of termination can be split into the event of terminating
and not reaching $T$ unioned with the event of terminating and reaching $T$.
The probability of the latter event is clearly upper bounded by $v_q$.
\myqed\end{proof}

To provide the promised bound on $N$ we will prove an upper bound on
\(
\sup_\sigma \Pr\sigma{\conf{q}{i}}{\Term \cap \neg\ReachT}
\)
which decreases toward $0$ exponentially fast in $i$.
We will first define a suitably restricted class
of OC-MDPs (we call them ``rising'' OC-MDPs)
and find such a bound using martingale theory (Lemma~\ref{lem:c-max-no-T})
for that restricted class.
We then extend the results to all OC-MDPs (Lemma~\ref{lem:c-max})
by showing that for every OC-MDP, $\A$, there is a polynomially bigger 
``rising'' OC-MDP, $\bar\A$,
which ``embedds'' in it the states of the original OC-MDP,
and preserves the rate with which
\(
\sup_\sigma \Pr\sigma{\conf{q}{i}}{\Term \cap \neg\ReachT}
\)
reaches $0$ from those corresponding states.  We will make this precise later.

To be able to use the martingale theory methods for rising OC-MDPs
we need to guarantee that in each rising OC-MDP, under 
every pure \stratCL{} strategy,
\(
\liminf_{i\to\infty}
\Cnt{i}/i
\)
is almost surely positive.
This value is sometimes called the mean-payoff,
see also~\cite{BBEKW10}.
We now state the definition of a rising OC-MDP using two simple properties,
and show that these two properties guarantee that the mean-payoff is almost surely
positive.

\begin{definition}
\label{def:idling}
A pure \stratCL{} strategy, $\sigma$, is \emph{idling},
if there is a state $q\in Q$, such that
\(
\Pr\sigma{\conf{q}{0}}{\exists i>0:\CState{i}=q}=1
\)
and for all $i\geq 0$:
\(
\Pr\sigma{\conf{q}{0}}{\CState{i}=q \implies \Cnt{i}=0}=1
.
\)

A maximizing OC-MDP is called \emph{rising}
if 
$T= \{q\in Q \mid \nu_q=1\}=\emptyset$ and
no pure \stratCL{} strategy is idling.
\end{definition}

\begin{lemma}
\label{lem:liminf}
Let $\A=\ocmdpA$ be a rising OC-MDP.
Then for every pure \stratCL{} strategy, $\sigma$,
and every $q\in Q$ we have
\(
\Pr\sigma{q}{\liminf_{i\to\infty}\Cnt{i}/i > 0}  =  1.
\)
\end{lemma}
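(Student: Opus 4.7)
\smallskip

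\noindent\textbf{Proof plan.} Fix a pure \stratCL{} strategy $\sigma$. Because $\sigma$ depends only on the current control state, fixing it in $\A$ induces a finite-state Markov chain $M_\sigma$ on $Q$ (with integer-valued edge labels giving counter changes in $\{-1,0,+1\}$). I will argue that from every starting state $q$, the chain $M_\sigma$ reaches some bottom strongly connected component (BSCC) $C$ almost surely, and that for each such $C$ the associated mean-payoff $\mu_C$ (the expected counter change per step under the stationary distribution of $C$) satisfies $\mu_C > 0$. Once this is done, the ergodic theorem for finite irreducible Markov chains gives $\Cnt{i}/i \to \mu_C$ almost surely conditional on entering $C$, and since only finitely many BSCCs can be entered the conclusion $\liminf_{i\to\infty}\Cnt{i}/i > 0$ follows with probability one.

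So the whole task reduces to showing $\mu_C > 0$ for every BSCC $C$ of $M_\sigma$ reachable from $q$. I will argue by contradiction. Suppose $\mu_C \leq 0$. In the strict case $\mu_C < 0$ the ergodic theorem yields $\Cnt{i} \to -\infty$ almost surely from any state of $C$; hence $\nu_{q'} = 1$ for every $q' \in C$, so $q' \in T$, contradicting $T = \emptyset$ in the definition of a rising OC-MDP.

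The delicate case is $\mu_C = 0$. Pick any $q' \in C$ and consider the successive excursions from $q'$ back to $q'$ in $M_\sigma$. Because $C$ is a BSCC of a finite chain the return time $\tau$ is a.s.\ finite with finite mean, and the counter changes $X_1, X_2, \ldots$ accumulated during successive excursions are i.i.d.\ with $\mathbb{E}[X_k] = \mu_C \cdot \mathbb{E}[\tau] = 0$ (and bounded second moment, since $|X_k| \leq \tau_k$ and $\mathbb{E}[\tau_k^2] < \infty$ in a finite chain). Now I split into two sub-cases. If $X_1$ is identically $0$ almost surely, then starting from $\conf{q'}{0}$ every visit to $q'$ occurs with counter value $0$, and $q'$ is visited infinitely often with probability one; this means $\sigma$ is idling at $q'$, contradicting the second defining property of a rising OC-MDP. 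Otherwise $X_1$ is a nondegenerate integer-valued random variable with mean $0$, hence takes values of both signs. Then by the Chung--Fuchs recurrence theorem the random walk $S_n = \sum_{k \leq n} X_k$ is recurrent on $\mathbb{Z}$, so $\liminf_n S_n = -\infty$ almost surely. The counter values at successive return times to $q'$ differ from $S_n$ by a constant (the starting value), so $\liminf_{i\to\infty}\Cnt{i} = -\infty$ almost surely; consequently $\nu_{q'} = 1$, contradicting $T = \emptyset$ again.

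The main obstacle is the mean-zero subcase: ruling out $\mu_C = 0$ requires the two structural assumptions in Definition~\ref{def:idling} to work in tandem, one to exclude the degenerate excursion and the other to exclude the nondegenerate one via the recurrence of the associated mean-zero random walk. Once $\mu_C > 0$ is established for every reachable BSCC, the ergodic theorem gives the claimed almost sure positivity of $\liminf_{i\to\infty}\Cnt{i}/i$ uniformly over all starting states $q \in Q$.
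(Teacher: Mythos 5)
Your proof is correct, and it takes a genuinely different route from the paper. The paper's argument, after the same BSCC decomposition, invokes a zero-one result from~\cite{GH-SODA10} (Theorem~3.2), Fact~\ref{fact:CN}, and Lemma~3.3 of~\cite{BBEKW10} to deduce from the assumption ``$p=0$'' that some state $r$ in the BSCC never sees a negative counter change from $\conf{r}{0}$, and then rules out the remaining possibility by exhibiting an idling strategy. You instead classify the BSCC by the sign of its stationary mean payoff $\mu_C$ and handle each case with classical probability: for $\mu_C > 0$ the ergodic theorem for additive functionals gives $\Cnt{i}/i \to \mu_C > 0$ directly; for $\mu_C < 0$ it gives $\Cnt{i} \to -\infty$, contradicting $T=\emptyset$; and for $\mu_C=0$ you look at the i.i.d.\ excursion increments at returns to a fixed $q'\in C$, deriving idling in the degenerate case and $\liminf \Cnt{i} = -\infty$ (hence $T \neq \emptyset$) in the nondegenerate case via Chung--Fuchs. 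Both proofs ultimately use the two structural assumptions (``$T=\emptyset$'' and ``no idling strategy'') to kill the same two obstructions, but yours makes this dichotomy explicit through the mean-payoff trichotomy and the degenerate/nondegenerate excursion split, and is self-contained modulo textbook theorems (Markov chain SLLN, Chung--Fuchs), whereas the paper's version leans on results specific to one-counter MDPs established in prior work. Neither approach is obviously shorter once dependencies are counted; your version is arguably more transparent to a reader not familiar with~\cite{GH-SODA10,BBEKW10}.
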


\begin{proof}
Let us fix a pure \stratCL{} strategy $\sigma$.
Because $\sigma$ is \stratCL{}, there is a collection of disjoint subsets of $Q$,
called ergodic sets, or bottom strongly connected components (BSCCs),
in the standard theory of Markov chains,
such that almost all runs end up visiting infinitely often exactly
the states of some of the BSCCs.
Let us focus, for a while, on a single BSCC $C\subseteq Q$.
By standard results, for each pair of states $r,s\in C$ the play from $r$ almost surely visits
$s$, and the expected time to visit $s$ from $r$ is finite.
As a consequence, there is a unique constant $p$
such that
\(
\Pr\sigma{r}{\liminf_{i\to\infty}\Cnt{i}/i > 0}  =  p
\)
for all $r\in C$,
because $\liminf_{i\to\infty}\Cnt{i}/i > 0$ is a prefix-independent
property. Moreover, we can use the result from \cite[Theorem~3.2]{GH-SODA10}
which says that in a presence of $r\in C$ such that
\(
\Pr\sigma{r}{\liminf_{i\to\infty}\Cnt{i}/i > 0}  >  0
\)
there must also be some $s\in C$ such that
\(
\Pr\sigma{s}{\liminf_{i\to\infty}\Cnt{i}/i > 0}  =  1.
\)
Thus either $p=0$ or $p=1$.
Let us first prove by contradiction that $p=1$, and then we shall
consider the more general case 
where rather than assuming $r \in C$ for a BSCC $C$,
we consider an arbitrary start state in the entire OC-MDP.

Assume that $p=0$.
By Fact~\ref{fact:CN}, $T=\emptyset$ implies that
\(
\Pr\sigma{r}{\liminf_{i\to\infty}\Cnt{i} = -\infty}  =  0
\)
for all $r\in C$.
It is easy to see that this implies that
\(
\Pr\sigma{r}{\liminf_{i\to\infty}\Cnt{i}/i \geq 0}  =  1
\)
for all $r\in C$ and all strategies $\sigma$.
Due to our assumption of $p=0$,
\(
\Pr\sigma{r}{\liminf_{i\to\infty}\Cnt{i}/i = 0}  =  1
\)
for all $r\in C$.
Now Lemma~3.3 from~\cite{BBEKW10} says:
``For all $q$, the pure \stratCL{} strategies $\tau$ which satisfy
\(
\Pr\tau{q}{\liminf_{i\to\infty}\Cnt{i} = -\infty}  =  1
\)
are exactly those which satisfy
\(
\Pr\tau{q}{\liminf_{i\to\infty}\Cnt{i}/i \leq 0}  =  1
\)
and
\(
\Pr\tau{\conf{q}{0}}{\exists i: \Cnt{i} < 0}  >  0.
\)''
But we do not have any strategies of the first kind,
so there is a state $r\in C$ such that
\(
\Pr\sigma{\conf{r}{0}}{\forall i: \Cnt{i} \geq 0}  =  1.
\)
If
\(
\Pr\sigma{\conf{r}{0}}{\exists i>0: \CState{i}=r \land \Cnt{i} > 0}  >  0
\)
then because the expected time between two visits to $r$ is finite,
it can fairly easily be established that
\(
\Pr\sigma{r}{\liminf_{i\to\infty}\Cnt{i}/i > 0}  >  0,
\)
which would contradict the assumption $p=0$.
Thus
\(
\Pr\sigma{\conf{r}{0}}{\forall i>0: \CState{i}=r \implies \Cnt{i} = 0}  =  1
\)
and $\sigma$ is thus idling. But this is not possible because 
by assumption $\A$ is rising,
so the assumption of $p=0$ cannot be satisfied, and we have proved that $p=1$.

Now consider an arbitrary state $q\in Q$.
Because $\liminf_{i\to\infty}\Cnt{i}/i > 0$ is prefix-independent,
and almost every run from $q$ reaches some BSCC, where
$\liminf_{i\to\infty}\Cnt{i}/i > 0$ is satisfied almost surely, we have
\(
\Pr\sigma{q}{\liminf_{i\to\infty}\Cnt{i}/i > 0}  =  1.
\)
\end{proof}

Now we define a suitable submartingale for a given rising
OC-MDP, and use Azuma's inequality to show that 
$\sup_\sigma \Pr\sigma{\conf{q}{i}}{\Term \cap \neg\ReachT}$
decreases to~$0$ exponentially fast in $i$. Recall that a stochastic process
$\ms{0},\ms{1},\dots$ is a submartingale if, for all $i \geq 0$,
$\Ex{}{}{|\ms{i}|} < \infty$, and \mbox{$\Ex{}{}{\ms{i+1} \mid
  \ms{1},\dots,\ms{i}} \geq \ms{i}$} almost surely. If we further assume
that $|\ms{i+1}-\ms{i}| \leq c$ almost surely for all $i \geq 0$, we can apply
the  \emph{Azuma-Hoeffding inequality}\footnote{In the
literature (see, e.g. \cite{GriSti92}), the
Azuma-Hoeffding inequality is usually stated for martingales
and supermartingales where is takes the form 
$\Pr{}{}{\ms{n}-\ms{0} \geq t} \ \leq\ \exp(-t^2/2nc^2)$. 
Inequality~(\ref{eq:azuma}) is obtained just by realizing that if
$\ms{0},\ms{1},\dots$ is a submartingale, then
$-\ms{0},-\ms{1},\dots$ is a supermartingale.}, which
says that the following holds for all $t>0$ and $n\geq 0$:
\begin{equation}
  \label{eq:azuma}
  \Pr{}{}{\ms{n}-\ms{0} \leq t} \quad \leq \quad
  \exp\left(\frac{-t^2}{2nc^2}\right)
\end{equation}

Let $\A=\ocmdpA$ be a rising OC-MDP. Since $\A$ is rising, 
the mean-payoff (i.e.,
the average change of the counter per transition) is almost surely
positive for all pure counterless strategies. Since there are only finitely
many pure counterless strategies, there is even a fixed bound $x > 0$ such 
that the mean payoff is larger than~$x$ almost surely. This means that after
performing $i$~transitions, the counter should increase at least by 
$i\cdot x$ on average. Hence, one might be tempted to define
\[
\mar{i}
\coloneqq
\begin{cases}
\Cnt{i} - i\cdot x
&
\text{if $\Cnt{j}>0$ for all $j,\ 0\leq j< i$,}\\
\mar{i-1} & \text{otherwise.}
\end{cases}
\]
and try to prove that $\mar{0},\mar{1},\ldots$ is a submartingale. 
Unfortunately, this does not work, because some control states may not 
allow to increase the counter by~$x$ or more. A similar problem was encountered
previously in \cite{BKK11} in the context of purely probabilistic OC automata,
and the difficulty was overcome by employing ``artificial'' additive
constants that compensate the difference among the individual control 
states. We show that a similar trick works also in our setting. That is,
we aim at designing a submartingale of the following form: 
\[
\mar{i}
\coloneqq
\begin{cases}
\Cnt{i} +
\poten_{\CState{i}} - i\cdot x
&
\text{if $\Cnt{j}>0$ for all $j,\ 0\leq j< i$,}\\
\mar{i-1} & \text{otherwise.}
\end{cases}
\]
Here $\poten_q$ is a suitable constant that depends only on~$q$. 
However, it is not clear whether the constants $\poten_q$ can 
be chosen so that $\ms{0},\ms{1},\dots$ becomes a submartingale,
and what is the size of these constants if they exist. This problem
is solved simply by observing that the defining property of a 
submartingale (see above) immediately gives a system of linear
inequality constraints that should be satisfied by $\poten_q$. For example,
suppose that $\Cnt{i} = j$ and $\CState{i} = q$ where $q \in \sP{Q}$.
For \emph{every} Max strategy, we would like to have that 
\mbox{$\Ex{}{}{\ms{i+1} \mid \ms{i}} \geq \ms{i}$}. This means to ensure
that this inequality is satisfied for every outgoing transition 
of $\conf{q}{j}$, i.e., for every $(q,k,r) \in \Delta$
we wish to have
\[
  \Ex{}{}{\ms{i+1} \mid \ms{i}} \quad = \quad 
  (j+k) + \poten_r - (i+1)\cdot\bar{x} \quad \geq \quad 
  \ms{i} \quad = \quad j + \poten_q - i \cdot\bar{x}.
\]
This yields $\poten_q \leq -x + k + \poten_r$. Note that if $q$ is 
stochastic, we need to consider the ``weighted sum'' of the  
outgoing transition of $\conf{q}{j}$ instead. Thus, we obtain the
system of linear inequalities of Figure~\ref{fig:system-L}.

Now we show that the linear system of inequalities given  in
 Figure~\ref{fig:system-L} 
has a non-negative rational solution, and derive a bound on its size.
Then, we take this solution, define the associated submartingale,
and use Azuma's inequality to derive the desired result. 

\begin{figure}[t]
\begin{align*}
\poten_q
&\leq
-x
+
k
+
\poten_r
&
\text{for all $q\in\sMax{Q}$ and $(q,k,r)\in\rules$,}
\\
\poten_q
&\leq
-x
+
\textstyle\sum_{(q,k,r)\in\rules}
P((q,k,r))\cdot
(k+\poten_r)
&
\text{for all $q\in\sP{Q}$,}
\\
x &> 0.
\end{align*}
\caption{The system $\sL$ of linear inequalities over $x$ and $\poten_q$, $q\in Q$.}
\label{fig:system-L}
\end{figure}

\begin{lemma}
\label{lem:sol}
Let $\A=\ocmdpA$ be a rising OC-MDP.
Then there is a non-negative
rational solution $(\bar{x},(\bar{\poten}_q)_{q\in Q})\in \Qset^{|Q|+1}$ to $\sL$,
such that $\bar{x}>0$.
(The binary encoding size of the solution is polynomial
in $\size{\A}$.)
\end{lemma}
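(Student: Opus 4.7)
The plan: the key insight is that $\sL$ is a primal LP whose optimum coincides with the infimum, over all counterless Max strategies, of the expected mean counter change per transition, and Lemma~\ref{lem:liminf} is exactly the tool needed to show this infimum is strictly positive. Non-negativity of the $\bar\poten_q$'s will follow for free by translation, and rationality plus the size bound from standard LP theory.

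First I would consider the LP $\mathrm{LP}_{\sL}$ that maximizes $x$ subject to the inequalities of $\sL$; any feasible point with $x>0$ immediately yields a solution to $\sL$. Its dual has non-negative variables $y_{(q,k,r)}$ (one per Max-action inequality) and $y_q$ (one per stochastic-state inequality), subject to a flow-conservation constraint at every control state, the normalization $\sum_i y_i = 1$, and the objective of minimizing the total weighted counter change $\sum_{(q,k,r),\,q\in\sMax{Q}} y_{(q,k,r)}\cdot k + \sum_{q\in\sP{Q}} y_q\sum_{(q,k,r)\in\rules} P((q,k,r))\cdot k$. By the classical correspondence between state-action-frequency LPs and MDP strategies, each feasible $y$ is (proportional to) the stationary distribution of a randomized counterless Max strategy concentrated on one of its BSCCs, and the objective value is the stationary expectation of the counter change per transition.

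Next I would show that this dual optimum is strictly positive. Since the dual polytope is bounded, the minimum is attained at a vertex; vertices correspond to pairs $(\sigma,C)$ where $\sigma$ is a pure counterless Max strategy and $C$ is a single BSCC of the Markov chain induced by $\sigma$. By Lemma~\ref{lem:liminf}, under $\sigma$ and every starting state, $\liminf_{i\to\infty}\Cnt{i}/i>0$ almost surely, so in particular on runs eventually trapped in $C$; the pointwise ergodic theorem then identifies this almost-sure limit with the stationary expectation of the counter change over $C$, which is therefore strictly positive. Minimizing over the finitely many such vertex pairs $(\sigma,C)$ gives $x^*>0$, so $\mathrm{LP}_{\sL}$ admits a feasible solution $(x^*,(\poten_q)_{q\in Q})$ with $x^*>0$.

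Finally, non-negativity is free: every constraint of $\sL$ involves the $\poten_q$'s only through differences $\poten_q-\poten_r$ or $\poten_q-\sum_r P(\cdot)\poten_r$, so it is invariant under adding the same constant to all $\poten_q$; subtracting $\min_q \poten_q$ yields $\bar\poten_q\geq 0$. Rationality and the polynomial-size bound then follow from standard LP theory: since $\mathrm{LP}_{\sL}$ has rational coefficients of encoding size polynomial in $\size{\A}$, it admits an optimal basic feasible solution of binary encoding size polynomial in $\size{\A}$ (e.g.\ by Khachiyan's theorem), and the common shift adds only $O(\size{\A})$ bits. The principal technical obstacle is the positivity step: one must carefully identify the vertices of the dual polytope with pure-counterless/BSCC pairs, and then upgrade the almost-sure liminf positivity supplied by Lemma~\ref{lem:liminf} into strict positivity of the stationary expectation of the counter change on each BSCC.
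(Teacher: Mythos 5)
Your proof is correct, but it takes a genuinely different route from the paper's. The paper constructs the solution explicitly: it sets $\bar\poten_q \coloneqq \mindtr\lambda{q}$, the minimum over strategies of the $\lambda$-discounted total counter change, for a discount factor $\lambda < 1$ chosen (using Lemma~\ref{lem:liminf}) so close to $1$ that $\dtr\lambda{q}{\tau} > 0$ for every pure \stratCL{} strategy $\tau$ and every $q$. Fixing a pure \stratCL{} strategy $\sigma$ that minimizes the discounted criterion, the Bellman recursion for $\dtr\lambda{\cdot}{\sigma}$ together with $\lambda < 1$ and $\dtr\lambda{\cdot}{\sigma} > 0$ directly yields strictly positive slack in every inequality of $\sL$, hence $\bar{x} > 0$. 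This approach is essentially self-contained: it relies only on the existence of a pure stationary optimal strategy for discounted reward and the Bellman fixed-point equations, and it gets $\bar\poten_q \geq 0$ for free since $\mindtr\lambda{q} > 0$ by the choice of $\lambda$.

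You instead maximize $x$ over $\sL$ and pass to the LP dual, arriving at the state-action-frequency polytope; the optimum of the dual is the minimum stationary mean counter increment, and the $\liminf$-positivity from Lemma~\ref{lem:liminf} together with the pointwise ergodic theorem forces this to be positive at every vertex (a pure \stratCL{} strategy paired with one of its BSCCs). This is conceptually slicker and closer to classical average-reward MDP theory, but it buys that cleanliness by delegating the real work to two facts you state without proof: (i) the extreme points of the dual polytope are exactly the occupancy measures of pure \stratCL{} strategies restricted to a single recurrent class, and (ii) the a.s.\ ergodic limit of $\Cnt{n}/n$ under a \stratCL{} strategy equals the stationary mean increment on the trapping BSCC. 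Both are standard (the first is the multichain occupancy-measure vertex characterization, the second is the Markov chain ergodic theorem applied to transition functionals) but neither is trivial, whereas the paper's discounted-reward computation avoids them entirely. Your translation step $\bar\poten_q \mapsto \poten_q - \min_r \poten_r$ is valid since every constraint of $\sL$ is invariant under adding a common constant to all $\poten_q$ (using $\sum_r P((q,k,r)) = 1$ at stochastic states), and the LP-size bound via a basic optimal solution agrees with the paper's.
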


\begin{proof}
We first prove that there is some non-negative solution to $\sL$ with $\bar{x}>0$.
The bound on size then follows by standard facts about
linear programming.
To find a solution, we will use optimal values for minimizing
\emph{discounted total reward} in $\A$.
For every discount factor, $\lambda$, $0<\lambda<1$, and a strategy, $\tau$,
we denote the discounted total reward, starting under $\tau$, by
\(
\dtr\lambda{q}{\tau}
\coloneqq
\sum_{i\geq 0}
\lambda^i\cdot \Ex\tau{q}{\Cnt{i+1}-\Cnt{i}}
\),
and set $\mindtr\lambda{q} \coloneqq \inf_\tau\dtr\lambda{q}\tau$.
We prove that there is some $\lambda$, such that setting
\(
\bar{\poten}_q\coloneqq \mindtr\lambda{q}
\)
and
\begin{multline*}
\bar{x} \coloneqq
\min
\big(
\{
k
+
\mindtr\lambda{r}
-
\mindtr\lambda{q}
\mid
q \in \sMax{Q},
(q,k,r)\in\rules
\}
\\
\cup
\{
P((q,k,r))\cdot
\left(
k +
\mindtr\lambda{r}
-
\mindtr\lambda{q}
\right)
\mid
q \in \sP{Q},
(q,k,r)\in\rules
\}
\big)
\end{multline*}
forms a non-negative solution to $\sL$ with $\bar{x}>0$.

\begin{sloppypar}
Now we proceed in more detail.
First we choose the right $\lambda$.
Lemma~\ref{lem:liminf} and our assumptions guarantee that
\(
\Pr{\tau}{q}{
\liminf_{i\to\infty}
\Cnt{i}/i > 0
}
=
1
\)
for every pure \stratCL{} strategy $\tau$.
Thus
\(
\sum_{i\geq 0}
\cdot \Ex\tau{q}{\Cnt{i+1}-\Cnt{i}}
=
\infty
,
\)
and hence for every such $\tau$ there is a $\Lambda_\tau<1$ such that
\(
\dtr\lambda{q}{\tau} > 0
\)
for all $q\in Q$ and $\lambda\geq\Lambda_\tau$.
There are only finitely many pure \stratCL{} strategies, and we choose
our $\lambda$ to be $\lambda \coloneqq \max_{\tau}\Lambda_\tau<1$.
\end{sloppypar}

Having fixed the $\lambda$ above, we now 
prove that there is a pure \stratCL{} strategy $\sigma$, such
that $\mindtr\lambda{q}=\dtr\lambda{q}\sigma$ for all $q$.
By standard results (e.g.,~\cite{Puterman94}), translated from the terminology of MDPs
with rewards to that of OC-MDPs, for a fixed state, $q$,
there is always a pure \stratCL{}
strategy $\sigma_q$
such that
\(
\dtr\lambda{q}{\sigma_q}
=\mindtr\lambda{q}
.
\)
Moreover, this strategy has to play optimally in successors of $q$ as well,
thus there is in fact a single pure \stratCL{} strategy $\sigma$
such that for all $q$:
\(
\dtr\lambda{q}{\sigma}
=\mindtr\lambda{q}
.
\)

Finally, $\bar{x}>0$, because for all $q\in \sP{Q}$
\begin{align*}
\dtr\lambda{q}\sigma
&=
\sum_{i\geq 0}
\lambda^i\cdot \Ex\sigma{q}{\Cnt{i+1}-\Cnt{i}}
\\
&=
\sum_{(q,k,r)\in\rules}
P((q,k,r))\cdot
\left(
k +
\lambda\cdot
\sum_{i\geq 0}
\lambda^i\cdot \Ex\sigma{r}{\Cnt{i+1}-\Cnt{i}}
\right)
\\
&=
\sum_{(q,k,r)\in\rules}
P((q,k,r))\cdot
\left(
k +
\lambda\cdot
\dtr\lambda{r}\sigma
\right)
\\
&<
\sum_{(q,k,r)\in\rules}
P((q,k,r))\cdot
\left(
k +
\dtr\lambda{r}\sigma
\right)
,
\end{align*}
the last inequality following from $\dtr\lambda{r}\sigma>0$ for all
$r\in Q$; and similarly for all $q\in \sMax{Q}$ and $(q,k,r)\in\rules$
\begin{multline*}
\dtr\lambda{q}\sigma
=
\sum_{i\geq 0}
\lambda^i\cdot \Ex\sigma{q}{\Cnt{i+1}-\Cnt{i}}
\\
\leq
k +
\lambda\cdot
\sum_{i\geq 0}
\lambda^i\cdot \Ex\sigma{r}{\Cnt{i+1}-\Cnt{i}}
=
k +
\lambda\cdot
\dtr\lambda{r}\sigma
<
k +
\dtr\lambda{r}\sigma
.
\end{multline*}
Here the first inequality follows from the fact that $\sigma$ minimizes the
discounted total reward.
\myqed\end{proof}

Given the solution
$(\bar{x},(\bar{\poten}_q)_{q\in Q})\in \Qset^{|Q|+1}$
from Lemma~\ref{lem:sol},
we define a sequence of random variables
$\{\mar{i}\}_{i\geq0}$
by setting
\[
\mar{i}
\coloneqq
\begin{cases}
\Cnt{i} +
\bar{\poten}_{\CState{i}} - i\cdot \bar{x}
&
\text{if $\Cnt{j}>0$ for all $j,\ 0\leq j< i$,}\\
\mar{i-1} & \text{otherwise.}
\end{cases}
\]

We shall now show that $\mar{i}$ defines a submartingale.

\begin{lemma}
Let $\A=\ocmdpA$ be a rising OC-MDP and $\{\mar{i}\}_{i\geq0}$ defined as above.
Under an arbitrary strategy $\tau$ and
with an arbitrary initial configuration $\conf{q}{n}$,
the process $\{\mar{i}\}_{i\geq0}$ is a submartingale.
\end{lemma}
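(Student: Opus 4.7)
The plan is to verify the two defining properties of a submartingale for $\{\mar{i}\}_{i \geq 0}$ with respect to the natural filtration $\{\mathcal{F}_i\}_{i \geq 0}$, where $\mathcal{F}_i$ is the $\sigma$-algebra generated by the history $\conf{q_0}{c_0}\cdots\conf{q_i}{c_i}$ of the run. Integrability is easy: since each transition changes the counter by at most~$1$, we have $|\Cnt{i}| \leq n + i$, and therefore $|\mar{i}| \leq n + i + \max_{r \in Q}|\bar{\poten}_r| + i\bar{x}$, which is finite for every $i$.

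The core of the argument is to show $\Ex{\tau}{\conf{q}{n}}{\mar{i+1} \mid \mathcal{F}_i} \geq \mar{i}$ almost surely. I would split on whether the ``active'' condition ``$\Cnt{j} > 0$ for all $0 \leq j < i+1$'' holds at step $i+1$. If it fails, then by definition $\mar{i+1} = \mar{i}$ and the inequality is trivial; this covers both the case that the process already froze before step $i$, and the case $\Cnt{i} \leq 0$ in which it freezes exactly at step $i+1$. So the only nontrivial case is when $\Cnt{j} > 0$ for all $0 \leq j \leq i$, in which case both $\mar{i}$ and $\mar{i+1}$ are given by the first branch of the definition, and one must verify that
\[
  \Ex{\tau}{\conf{q}{n}}{\Cnt{i+1} + \bar{\poten}_{\CState{i+1}} - (i+1)\bar{x} \mid \mathcal{F}_i} \ \geq \ \Cnt{i} + \bar{\poten}_{\CState{i}} - i\bar{x}.
\]
After cancelling $\Cnt{i} - i\bar{x}$, this reduces to showing $\Ex{\tau}{\conf{q}{n}}{(\Cnt{i+1} - \Cnt{i}) + \bar{\poten}_{\CState{i+1}} \mid \mathcal{F}_i} \geq \bar{\poten}_{\CState{i}} + \bar{x}$, which is precisely what the system $\sL$ was designed to guarantee.

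Concretely, write $r = \CState{i}$. If $r \in \sP{Q}$, the conditional expectation equals $\sum_{(r,k,r')\in\rules} P((r,k,r')) \cdot (k + \bar{\poten}_{r'})$, which by the second inequality of $\sL$ is at least $\bar{\poten}_r + \bar{x}$. If $r \in \sMax{Q}$, then under $\tau$ the next rule $(r,k,r')$ is drawn from some distribution chosen by $\tau$ on the basis of the history; but the first inequality of $\sL$ says $k + \bar{\poten}_{r'} \geq \bar{\poten}_r + \bar{x}$ for \emph{every} outgoing rule, so the (weighted) expectation again exceeds $\bar{\poten}_r + \bar{x}$. Since $\A$ is a maximizing OC-MDP there are no Min states, so this exhausts all cases. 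I do not expect any real obstacle here; the delicate work has already been done in Lemma~\ref{lem:sol}, where the constants $\bar{\poten}_q$ and $\bar{x}>0$ were chosen so that $\sL$ holds. The verification above is essentially a direct translation of the inequalities of $\sL$ into the submartingale condition.
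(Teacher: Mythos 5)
Your proposal is correct and follows essentially the same path as the paper: condition on the history up to step $i$, freeze trivially when the counter has already hit zero, and otherwise reduce the submartingale inequality to the constraints of $\sL$ after cancelling the common terms. You add a couple of small refinements that the paper leaves implicit — you state the integrability bound explicitly, and you spell out that for Max states the per-rule inequality from $\sL$ implies the inequality for any randomized choice of $\tau$ by taking a convex combination — but these are minor clarifications, not a different argument.
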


\begin{proof}
Consider a fixed path, $u$, of length $i\geq 0$.
For all $j,\ 0\leq j\leq i$
the values $\Cnt{j}(\omega)$ are the same
for all $\omega\in\runs{u}$.
We denote these common values by $\Cnt{j}(u)$,
and similarly for $\CState{j}(u)$ and $\mar{j}(u)$.
If $\Cnt{j}(u)=0$ for some $j\leq i$,
then $\mar{i+1}(\omega)=\mar{i}(\omega)$
for every $\omega \in \runs{u}$.
Thus $\Ex\tau{\conf{q}{n}}{\mar{i+1}\mid \runs{u}}=\mar{i}(u)$.
Otherwise,
consider the last configuration, $\conf{r}{l}$,
of $u$.
For every possible successor, $\conf{r'}{l'}$, set
\[
p_{\conf{r'}{l'}}
\coloneqq
\begin{cases}
\tau(u)(\conf{r}{l}\to\conf{r'}{l'})
&\text{if $r\in\sMax{Q}$,}\\
Prob(\conf{r}{l}\to\conf{r'}{l'})
&\text{if $r\in\sP{Q}$.}
\end{cases}
\]
Then
\[
\Ex\tau{\conf{q}{n}}{
\Cnt{i+1} -
\Cnt{i}
+
\bar{\poten}_{\CState{i+1}} - \bar{x}
\mid \runs{u}} %
=\quad 
- \bar{x}
+
\sum_{(r,k,r')\in\rules}
p_{\conf{r'}{l+k}}\cdot
(k+\bar{\poten}_{r'})
\quad \geq\quad \bar{\poten}_r
.
\]
This allows us to derive the following:
\begin{align*}
\Ex\tau{\conf{q}{n}}{\mar{i+1} \mid \runs{u}}
&=
\Ex\tau{\conf{q}{n}}{\Cnt{i+1} + \bar{\poten}_{\CState{i+1}} - (i+1)\cdot \bar{x}
\mid \runs{u}}
\\
&=
\Cnt{i}(u)
+
\Ex\tau{\conf{q}{n}}{
\Cnt{i+1} -
\Cnt{i}
+
\bar{\poten}_{\CState{i+1}} - \bar{x}
\mid \runs{u}}
- i\cdot \bar{x} 
\\
&\geq
\Cnt{i}(u)
+
\bar{\poten}_{\CState{i}(u)}
- i\cdot \bar{x} 
\quad
=
\quad
\mar{i}(u)
.
\end{align*}
\end{proof}

Now we have prepared all that we need to bound
\(
\sup_\sigma \Pr\sigma{\conf{q}{i}}{\Term}
\) for rising OC-MDPs.

\begin{lemma}
\label{lem:c-max-no-T}
Given a rising OC-MDP, $\A$, one can compute a rational constant
$c<1$, and an integer $h \geq 0$ such that for all $i\geq h$ and $q\in Q$ 
\[
\sup_\sigma \Pr\sigma{\conf{q}{i}}{\Term}
\leq \frac{ c^{i} }{1-c}.
\]
Moreover, $c \in \exp(1/2^{\size{\A}^{O(1)}})$ and $h \in \exp(\size{\A}^{O(1)})$.
\end{lemma}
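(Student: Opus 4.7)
The plan is to bound $\sup_\sigma \Pr\sigma{\conf{q}{i}}{\Term}$ by applying the Azuma--Hoeffding inequality to the submartingale $\{\mar{k}\}_{k\geq 0}$ just constructed, together with a union bound over the first time the counter drops to $0$.  Write $V:=\max_{q\in Q}\bar\poten_q$ and $\Delta_{\max}:=1+V+\bar{x}$; since in one step $\Cnt{}$ changes by at most $1$, $\bar\poten_{\CState{}}$ changes by at most $V$ in absolute value (both potentials lie in $[0,V]$ by the non-negativity in Lemma~\ref{lem:sol}), and the offset $-k\bar{x}$ shifts by exactly $\bar{x}$, the increments satisfy $|\mar{k+1}-\mar{k}|\leq\Delta_{\max}$ almost surely; this trivially becomes $0$ after the counter has hit $0$, where $\mar{}$ is frozen by definition.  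Moreover, since $\A$ is rising we have $T=\emptyset$, so $\Term = \Term\cap\neg\ReachT$ and bounding the termination probability alone is exactly what the lemma asks for.

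Fix $\sigma$, $q\in Q$, and an initial counter value $i > V$, and set $\tau:=\inf\{k\geq 1\mid\Cnt{k}\leq 0\}$, so $\Term=\bigcup_{n\geq 1}\{\tau=n\}$.  On $\{\tau=n\}$ one has $\mar{n}=\Cnt{n}+\bar\poten_{\CState{n}}-n\bar{x}\leq V-n\bar{x}$ while $\mar{0}=i+\bar\poten_q\geq i$, hence $\mar{0}-\mar{n}\geq t_n:=i+n\bar{x}-V > 0$.  Inequality~(\ref{eq:azuma}) (applied to $-\mar{k}$ as a supermartingale, as explained in the footnote accompanying~(\ref{eq:azuma})) then yields $\Pr\sigma{\conf{q}{i}}{\tau=n}\leq\exp(-t_n^{\,2}/(2n\Delta_{\max}^2))$, so the union bound gives
\[
  \Pr\sigma{\conf{q}{i}}{\Term}\;\leq\;\sum_{n\geq 1}\exp\!\left(-\frac{t_n^{\,2}}{2n\Delta_{\max}^2}\right).
\]
Expanding $t_n^{\,2}/n=(i-V)^2/n+2(i-V)\bar{x}+n\bar{x}^2$ and dropping the first (non-negative) term separates each summand into a product $\exp(-(i-V)\bar{x}/\Delta_{\max}^2)\cdot\exp(-n\bar{x}^2/(2\Delta_{\max}^2))$; summing the geometric series in $n$ gives $\Pr\sigma{\conf{q}{i}}{\Term}\leq C\cdot c_1^{\,i}$ where $c_1:=\exp(-\bar{x}/\Delta_{\max}^2)<1$ and $C:=e^{V\bar{x}/\Delta_{\max}^2}\cdot c_0/(1-c_0)$ with $c_0:=\exp(-\bar{x}^2/(2\Delta_{\max}^2))$.

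To recast this as the required bound $c^i/(1-c)$, I would pick any rational $c\in(c_1,1)$ with a little slack (for instance $c:=1-\bar{x}/(4\Delta_{\max}^2)$, which exceeds $c_1$ via elementary estimates on $1-e^{-z}$), and take $h$ to be the smallest integer with $C\cdot c_1^{\,i}\leq c^{\,i}/(1-c)$ for all $i\geq h$; solving gives $h=\lceil\log(C/(1-c))/\log(c/c_1)\rceil$.  The size estimates then follow directly from Lemma~\ref{lem:sol}: $\bar{x}\geq 2^{-\mathrm{poly}(\size\A)}$ and $V,\Delta_{\max}\leq 2^{\mathrm{poly}(\size\A)}$, so $1-c\geq 2^{-\mathrm{poly}(\size\A)}$ (placing $c$ in the claimed range), while $\log C$ and $1/\log(c/c_1)$ are each at most singly exponential in $\size\A$, giving $h\in\exp(\size\A^{O(1)})$.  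I expect the main technical obstacle to be the arithmetic separation of the quadratic Azuma exponent into an $i$-dependent decay and an $n$-dependent factor that sums to a computable geometric series with rational base; once this separation is pinned down, the remaining work is routine parameter bookkeeping and a final rational rounding step to meet the exact form of the statement.
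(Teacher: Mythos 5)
Your proof is fundamentally correct and rests on the same core idea as the paper's: apply Azuma--Hoeffding to the submartingale $\{\mar{k}\}$ constructed from the linear-program solution of Lemma~\ref{lem:sol}, then sum over possible termination times. However, the endgame diverges in a way that makes your version noticeably heavier. The paper exploits the simple but crucial observation that, since the counter starts at $i$ and decreases by at most $1$ per step, termination at time $j$ is \emph{impossible} for $j<i$. Consequently the event $\Term_j$ satisfies $j\geq i$, and on $\Term_j$ one has $\Cnt{j}\leq 0$, so $\mar{j}-\mar{0}\leq -j\bar{x}$ and Azuma gives directly $\Pr{\sigma}{\conf{q}{i}}{\Term_j}\leq c^j$ with $c=\exp\bigl(-\bar{x}^2/(2(\vdiff+\bar{x}+1))\bigr)$. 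Because the sum over termination times starts at $j=i$, the geometric series is literally $\sum_{j\geq i} c^j = c^i/(1-c)$, with no rounding or rebalancing needed; the threshold $h=\lceil\vdiff\rceil$ exists solely to guarantee $\mar{0}\geq\Cnt{0}-\vdiff\geq 0$.

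Your argument, by contrast, sums over all $n\geq 1$ and therefore must recover the $i$-dependence by expanding $t_n^2/n=(i-V)^2/n+2(i-V)\bar{x}+n\bar{x}^2$, dropping the first term, factoring the summand into an $i$-decay and a summable $n$-geometric series, and finally rounding $c_1$ up to a rational $c$ while choosing $h$ to absorb the prefactor $C$. This is all sound (the size estimates also work out, since $\bar{x}\geq 2^{-\mathrm{poly}(\size\A)}$ and $\Delta_{\max}\leq 2^{\mathrm{poly}(\size\A)}$ give $1-c\geq 2^{-\mathrm{poly}(\size\A)}$), but you are reconstructing, by hand, information that is available for free from the structure of the walk. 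Two minor bookkeeping points to tidy up if you keep your route: (i) the condition $C c_1^i\leq c^i/(1-c)$ is $C(1-c)\leq (c/c_1)^i$, so the threshold should be $\lceil\log(C(1-c))/\log(c/c_1)\rceil$ (you wrote $C/(1-c)$ inside the log); and (ii) $h$ must also dominate $\lceil V\rceil$ (or $\lceil\vdiff\rceil$), since your Azuma application assumes $i>V$ to ensure $t_n>0$. Neither changes the asymptotics, but both are needed for the claimed form of the bound.
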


\begin{proof}
Denote by $\Term_j$ the event of terminating
after \emph{exactly} $j$ steps.
Further set
\(
\vdiff
\coloneqq
\max_{q\in Q} \bar{\poten}_q
-
\min_{q\in Q} \bar{\poten}_q
,
\)
and assume that $\Cnt{0} \geq \vdiff$.
Then the event $\Term_j$ implies that
\(
\mar{j}-\mar{0}
=
\bar{\poten}_{\CState{j}} - j\cdot \bar{x}
-
\Cnt{0} -
\bar{\poten}_{\CState{0}}
\leq
- j\cdot \bar{x}.
\)
Finally, observe that we can bound the one-step change
of the submartingale value by
\(
\vdiff
+
\bar{x}
+
1
.
\)
Using the Azuma-Hoeffding inequality for the submartingale           
$\{\mar{n}\}_{n\geq 0}$  (see, e.g., Theorem 12.2.3 in \cite{GriSti92}), 
we thus obtain the following bound for every strategy
$\sigma$ and initial configuration $\conf{q}{i}$ with
$i\geq\vdiff$:
\[
\Pr\sigma{\conf{q}{i}}{\Term_j}
\leq
\Pr\sigma{\conf{q}{i}}{\mar{j}-\mar{0} \leq -j\cdot\bar{x}}
\leq
\exp
\left(
\frac%
{-\bar{x}^2\cdot j^2}%
{2j\cdot(\vdiff+\bar{x}+1)}
\right)
.
\]
We choose
\(
c
\coloneqq
\exp
\left(
\frac%
{-\bar{x}^2}%
{2\cdot(\vdiff+\bar{x}+1)}
\right)
<1
\)
and
$h\coloneqq \lceil \vdiff \rceil$,
and observe that
for all $q\in Q$, $i\geq h$:
\[
\Pr\sigma{\conf{q}{i}}{\Term}
=
\sum_{j\geq i}
\Pr\sigma{\conf{q}{i}}{\Term_j}
\leq
\sum_{j\geq i} c^j
=
\frac{c^i}{1-c}
.
\]
The given bounds on $c$ and $h$ are easy to check,
and the detailed computation can be found
\ifApp{in Section~\ref{sec:app-bounds}.}{in~\cite{fullversion}.}
\end{proof}

As a final step, we extend the results to the general case of (not necessarily rising) OC-MDPs.
\begin{lemma}
\label{lem:c-max}
Given a maximizing OC-MDP, $\A'=\OCMDP{Q'}{\rules'}{P'}$, one can compute a rational constant
$c<1$, and an integer $h \geq 0$ such that for all $i\geq h$ and $q\in
Q$ 
\[ \sup_\sigma \Pr\sigma{\conf{q}{i}}{\Term \cap \neg\ReachT}
\leq \frac{ c^{i} }{1-c}.
\]
Moreover, $c \in \exp(1/2^{\size{\A'}^{O(1)}})$ and $h \in
\exp(\size{\A'}^{O(1)})$.
As a consequence,
a number $N$ such that \(
\left| \vt{q}{i} -
\sup_\sigma\Pr\sigma{q}{\CN}
\right|
<
\eps
\)
for all $q\in Q'$ and $i\geq N$ satisfies
\(
N
\leq
\max\{
h,
\lceil
\log_c(\eps\cdot(1-c))
\rceil
\}
\in \exp(\size{\A'}^{O(1)})\cdot O(\log(1/\eps))
.
\)
\end{lemma}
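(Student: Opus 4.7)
The plan is to reduce the general case to the rising case of Lemma~\ref{lem:c-max-no-T} by constructing a polynomially-larger rising OC-MDP $\bar\A$ from $\A'$ such that, for every $q \in Q' \setminus T$ and every sufficiently large $i$,
\[
\sup_\sigma \Pr{\sigma}{\conf{q}{i}}{\Term \cap \neg\ReachT}\ =\ \sup_\sigma \Pr{\sigma}{\conf{q}{i}}{\Term}\ \text{evaluated in } \bar\A.
\]
Granted $\bar\A$ of polynomial size, Lemma~\ref{lem:c-max-no-T} delivers the constants $c<1$ and $h$ of the required form. The bound on $N$ then follows by solving $c^N/(1-c)\leq\eps$; using $\log(1/c) \geq 1/2^{\size{\A'}^{O(1)}}$ from that lemma, one obtains $N \in \exp(\size{\A'}^{O(1)}) \cdot O(\log(1/\eps))$.

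The first stage of the construction removes every state of $T$ from $\A'$ and adds a fresh non-terminating sink state $\qd$ whose only transition is a $+1$ self-loop; every transition of $\A'$ whose target lies in $T$ is redirected to $\qd$, all other transitions being preserved. In the resulting intermediate OC-MDP $\tilde\A$, a run starting from $\conf{q}{i}$ with $q \in Q'\setminus T$ terminates iff the corresponding run of $\A'$ hits counter $0$ without ever visiting $T$, so the displayed equality already holds for $\tilde\A$ for all $i$. Moreover $T^{\tilde\A} = \emptyset$: any strategy in $\tilde\A$ achieving $\CN$ from $q$ must avoid $\qd$ and thus corresponds to a strategy in $\A'$ achieving $\CN$ while avoiding $T$, whose probability is at most $\nu_q^{\A'} < 1$.

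The subtle point is that $\tilde\A$ may still admit an \emph{idling} pure \stratCL{} strategy, so it is not yet rising. The crucial structural observation is that if $\sigma$ is idling at $q$ with BSCC $C$, then the defining condition $\Pr{\sigma}{\conf{q}{0}}{\CState{i} = q \implies \Cnt{i} = 0} = 1$ forces a potential function $\phi \colon C \to \Zset$ with $\phi(q) = 0$ such that every $\sigma$-transition $r \to r'$ in $C$ with counter change $k$ satisfies $\phi(r') - \phi(r) = k$; equivalently, along any $\sigma$-run from $\conf{q}{0}$ the counter coincides deterministically with $\phi$ of the current state while in $C$, and hence stays bounded. Consequently, starting from $\conf{q}{i}$ with $i > -\min_{r \in C}\phi(r)$, the counter under $\sigma$ remains strictly positive, so $\Pr{\sigma}{\conf{q}{i}}{\Term} = 0$. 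Idling strategies are therefore useless for the supremum once $i$ exceeds a polynomially bounded threshold $i_0$.

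The second stage then prunes $\tilde\A$ to a rising $\bar\A$: using polynomial-time SCC and feasibility analysis, enumerate all candidate idling BSCCs $(C,\phi)$ in $\tilde\A$ satisfying the potential constraints, and for each such $C$ cut the Max transitions that lie inside $C$, and reroute through $\qd$ any Max transition whose target is a stochastic state whose own outgoing distribution forces it to stay inside $C$. Any $\bar\A$-strategy is a valid $\tilde\A$-strategy, so the $\bar\A$-supremum is at most the $\tilde\A$-supremum; conversely, any $\tilde\A$-strategy either uses only retained transitions (and is thus a $\bar\A$-strategy) or commits to an idling BSCC with positive probability and hence contributes $0$ to the $\tilde\A$-supremum once $i > i_0$. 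Hence the two suprema coincide for $i > i_0$, and Lemma~\ref{lem:c-max-no-T} applied to $\bar\A$ closes the argument. The main obstacle is precisely the correctness of this second stage: the deterministic-counter characterization of idling is the essential lever, and the delicate point is handling stochastic states whose outgoing probabilities cannot be altered directly; one has to verify that any $\tilde\A$-behaviour forced through such a stochastic state is itself idling-like and gives termination probability $0$ from $\conf{q}{i}$ with $i > i_0$, which makes the redirection to $\qd$ a sound over-approximation.
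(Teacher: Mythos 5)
Your first stage (redirecting $T$ to a $+1$-looping trap $\qd$) matches the paper's construction, and your observation that an idling pure counterless strategy forces a potential function $\phi$ on its BSCC, so the counter stays bounded and termination probability drops to $0$ once $i$ is large, is correct and essentially appears in the paper's proof that the augmented machine has no idling strategies. The second stage, however, has a fatal gap, and it is precisely where the paper deviates from what you propose.

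Your second stage \emph{prunes} transitions: cut Max transitions inside candidate idling BSCCs, reroute Max entries through $\qd$. But an idling BSCC can consist entirely of stochastic states, in which case there is nothing to prune. Concretely, take $\tilde\A$ to be the two-state OC-MC with $a \tran{+1} b$ and $b \tran{-1} a$ (each with probability $1$): the trivial strategy is idling at $a$, $\valO{\CN}{a}=\valO{\CN}{b}=0$ so nothing is removed by stage one, and stage two cannot remove anything because there are no Max states. The output $\bar\A$ is not rising, and Lemma~\ref{lem:c-max-no-T} cannot be invoked. Even in OC-MDPs with Max states, your dichotomy ``any $\tilde\A$-strategy either uses only retained transitions or commits to an idling BSCC with positive probability'' does not hold: a Max transition may lie inside a BSCC that is idling under some \emph{other} strategy $\sigma'$ while the strategy $\sigma$ at hand uses it as a transient edge and has an entirely different, non-idling BSCC structure; cutting such transitions can lower the $\bar\A$-supremum below the $\tilde\A$-supremum, and your argument does not show otherwise.

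The paper instead \emph{augments} the state space (Lemma~\ref{lem:reduction} and Section~\ref{sec:no-idling}). Every control state is replaced by tuples $\stateIII{q}{n}{m}$ and $\stateV{q}{n}{m}{k}{r}$, where $m$ counts steps and $n$ tracks the counter drift since the last ``reset''; if the counter has not dropped below its checkpoint within $|Q|^2+1$ steps (which by the analogue of Lemma~\ref{lem:quadr} is impossible for a genuinely decreasing excursion), the run is forcibly sent to $\qd$. This removes idling even when the BSCC is purely stochastic, since it is the bookkeeping—not the strategy—that forces the detour to the trap. Exact preservation of the termination value (Lemma~\ref{lem:val-no-change}) is then established by a nontrivial argument showing that finite-horizon $\eps$-optimal strategies always admit, conditionally on every non-dead history, a positive-probability ``short decreasing path'' (inequality~(\ref{eq:shortP})), and can therefore be simulated in $\bar\A$ without ever being forced into $\qd$. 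This is the part you would need to replicate or replace; the pruning route as stated does not.
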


\begin{proof}
The heart of the proof is a reduction which computes a polynomially bigger
rising OC-MDP $\bar\A=\OCMDP{\bar{Q}}{\bar\rules}{\bar{P}}$
from $\A'$, uses the algorithm from Lemma~\ref{lem:c-max-no-T}
to compute the bounds $c$ and $h$ for $\bar\A$, and returns the very same numbers for $\A'$.
The reduction itself is in two steps, first computing an OC-MDP $\A=\ocmdpA$
from $\A'$ such that $\valO\CN{q}<1$ for all $q\in Q$ in $\A$, and then
$\bar\A$ from $\A$.

The first step, from $\A'$ to $\A$ is easier.
Recall that we called $T=T(Q)$ the set of all $q\in Q$
such that $\valO\CN{q}=1$.
Here we use it also to denote the analogous subset
$T(Q')$
of $Q'$
of all states $q\in Q'$
such that $\valO\CN{q}=1$ in $\A'$.
Theorem~3.1 from~\cite{BBEKW10} guarantees that we
can compute the set $T(Q')$ in time polynomial in $\size{\A'}$.
Then we set $Q\coloneqq (Q'\setminus T(Q'))\cup\{\qd\}$, with $\sP{Q}=\sP{Q'}\setminus T(Q')$.
The state $\qd$ has a unique outgoing rule in $\rules$: $(\qd,+1,\qd)$.
The rest of the rules in $\rules$ are derived from $\rules'$ by redirecting
all rules ending in $T(Q')$ to $\qd$. $P'$ is derived from $\bar{P}$
accordingly.
It is easy to see that $T(Q)=\emptyset$, because
$T(Q')\cap Q=\emptyset$ and by construction, $T(Q)\subseteq T(Q')$.

A technique to achieve the second step was already partially developed in 
our previous work~\cite{BBEKW10}, where
we used the term ``decreasing'' for rising strategies.
There we gave a construction which preserves the property of
optimal termination probability being $=1$.
We in fact can establish that a similar construction preserves the 
exact termination value. 
Because the idea is not new, we leave details to
\ifApp{Section~\ref{sec:no-idling}.}{\cite{fullversion}.}
The important properties of $\bar\A$ are stated in the
following lemma, the proof of which can be found in
\ifApp{Section~\ref{sec:no-idling},}{\cite{fullversion},}
along with the formal definition of $\bar\A$.

\begin{lemma}
\label{lem:reduction}
There is a deterministic polynomial-time algorithm which given a maximizing
OC-MDP, $\A=\ocmdpA$, computes another maximizing OC-MDP,
$\bar\A=\OCMDP{\bar{Q}}{\bar\rules}{\bar{P}}$,
and a map $f:Q\to\bar{Q}$
satisfying:
\begin{itemize}
\item
$\size{\bar\A}\in\mathcal{O}(\size{\A}^4).$
\item
There are no idling pure \stratCL{} strategies in $\bar\A$.
\item
$\vt{q}{i}=\vt{f(q)}{i}$
for all $q\in Q$ and $i\geq 0$.
\item
If $\valO\CN{q}<1$ for all $q\in Q$ in $\A$, then $\bar\A$ is rising.
\end{itemize}
\end{lemma}

In particular, note that $\bar\A$ obtained from our $\A$ is rising.
Now let $q\in Q'$ be a state of $\A'$, such that
$q \notin T(Q')$.
We know that
\(
\sup_\sigma \Pr\sigma{\conf{q}{i}}{\Term \cap \neg\ReachT}
\) in $\A'$
equals
\(
\sup_\sigma \Pr\sigma{\conf{q}{i}}{\Term}
\) in $\A$, which in turn equals
\(
\sup_\sigma \Pr\sigma{\conf{f(q)}{i}}{\Term}
\)
in $\bar\A$.
Note that $\size{\bar\A}\in\size{\A'^{O(1)}}$.
Applying Lemma~\ref{lem:c-max-no-T} to $\bar\A$ finishes
the proof of the first part of Lemma~\ref{lem:c-max}.

In the second part the inequality
\(
N
\leq
\max\{
h,
\lceil
\log_c(\eps\cdot(1-c))
\rceil
\}
\)
is an easy computation. Verifying that
\(
\lceil
\log_c(\eps\cdot(1-c))
\rceil
\in \exp(\size{\A'}^{O(1)})
\)
is also easy\ifApp{ and can be found in Section~\ref{sec:app-bounds}.}{, see~\cite{fullversion}.}
\end{proof}

Also, as an immediate consequence of Lemma~\ref{lem:gap} and Lemma~\ref{lem:c-max}
we obtain the following:
\begin{corollary}
\label{cor:MDP-term-lim}
  For every $q \in Q$, $\nu_q = \lim_{i \rightarrow \infty} \vt{q}{i}$. 
\end{corollary}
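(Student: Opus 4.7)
The plan is to combine Lemma~\ref{lem:gap} and Lemma~\ref{lem:c-max} and pass to the limit. First I would observe that the limit $\mu_q \coloneqq \lim_{i\to\infty} \vt{q}{i}$ exists: the sequence $\{\vt{q}{i}\}_{i\geq 0}$ is non-increasing in $i$ (because a strategy starting at counter value $j > i$ can be shifted to yield at least the same termination probability starting at $i$) and bounded below by $0$, so the monotone convergence principle applies. This was already noted at the top of Section~\ref{sec:res}.

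Second, I would pass to the limit in the sandwich inequality from Lemma~\ref{lem:gap}:
\[
\nu_q \;\leq\; \vt{q}{i} \;\leq\; \sup_\sigma \Pr\sigma{\conf{q}{i}}{\Term \cap \neg\ReachT} + \nu_q .
\]
The lower bound $\nu_q \leq \mu_q$ is immediate. For the upper bound, Lemma~\ref{lem:c-max} supplies a rational $c < 1$ and integer $h \geq 0$ such that, for all $i \geq h$,
\[
\sup_\sigma \Pr\sigma{\conf{q}{i}}{\Term \cap \neg\ReachT} \;\leq\; \frac{c^{i}}{1-c},
\]
and the right-hand side tends to $0$ as $i \to \infty$. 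Hence $\mu_q \leq \nu_q$, and the two inequalities together give $\mu_q = \nu_q$.

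There is essentially no obstacle here: the whole point of Section~\ref{sec-OC-MDP} was to produce the exponentially decaying bound in Lemma~\ref{lem:c-max}, and the corollary is the one-line consequence that justifies why approximating the termination value at large counter values reduces to computing $\nu_q$. The only thing worth double-checking in writing up the proof is the monotonicity claim $\vt{q}{i} \geq \vt{q}{j}$ for $i \leq j$, which is the sole ingredient not explicitly packaged as a lemma, but which follows by translating any strategy from $(q,j)$ down to $(q,i)$ and noting that termination from the higher starting counter implies termination from the lower one.
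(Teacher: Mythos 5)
Your proof is correct and follows exactly the same route the paper intends: the paper states the corollary as ``an immediate consequence of Lemma~\ref{lem:gap} and Lemma~\ref{lem:c-max},'' and your write-up simply unpacks that by taking $i\to\infty$ in the sandwich inequality of Lemma~\ref{lem:gap} and using the exponential bound of Lemma~\ref{lem:c-max} to kill the middle term. The monotonicity of $\vt{q}{i}$ that you double-check is also noted explicitly at the start of Section~\ref{sec:res}, so nothing is missing.
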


\subsection{Bounding $N$ for general SSGs}
\label{sec:OC-SSG}

By~\cite[Proposition~7]{BBE10}, player Min always has an optimal pure \stratCL{}
strategy, $\opi$, such that
\[
   \valO\CN{q} =  \sup_\sigma \Pr{\sigma,\opi}{q}{\CN}.
\]
By fixing the choices of $\opi$ in $\A$
we obtain a maximizing OC-MDP, 
$\A^* = \OCMDP{Q^*}{\delta^*}{P^*}$, where
$\sP{Q^*}=\sP{Q}\cup\sMin{Q}$, $\sMax{Q^*}=\sMax{Q}$,
$\delta^* \coloneqq 
\{ (q,k,r) \in \delta \mid q \in \sP{Q}\cup\sMax{Q} \lor \opi(q)=r \}$,
and $P^*$ is the unique (for $\A^*$) extension of $P$ to states from
$\sMin{Q}$.

\begin{lemma}
\label{lem:ssg}
Let $\A=\ocssgA$ be an OC-SSG, $\opi$ a $\CN$-optimal strategy
for Min, and $\A^*$ the minimizing OC-MDP given by fixing $\opi$ in $\A$
as described above.
Then for all $q\in Q$:
\begin{gather}
   \label{eq:game-values}
   \forall i\geq0:
   \lim_{j \to \infty}  \vt[\A]{q}{j} \leq
   \vt[\A]{q}{i} \leq \vt[\A^*]{q}{i}.\\
\label{eq:game-values-lim}
  \lim_{j \rightarrow \infty}  \vt[\A]{q}{j} = 
   \lim_{j \rightarrow \infty}  \vt[\A^*]{q}{j}.
\end{gather}
\end{lemma}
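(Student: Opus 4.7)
The plan is to prove the two displayed statements in order, exploiting monotonicity, the fact that restricting Min to play $\opi$ can only benefit Max, and Corollary~\ref{cor:MDP-term-lim} applied to the maximizing OC-MDP $\A^*$.

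For the inequality chain in (\ref{eq:game-values}), I would first establish monotonicity of $\vt[\A]{q}{i}$ in $i$: for any pair of strategies $(\sigma,\pi)$, the event $\Term$ is ``harder'' with a larger starting counter (concretely, any run from $\conf{q}{i{+}1}$ that hits $0$ can be shifted to a run from $\conf{q}{i}$ that hits $0$ at the same step). Hence $i\mapsto \vt[\A]{q}{i}$ is non-increasing, so the limit exists and is a lower bound for every $\vt[\A]{q}{i}$. For the upper bound $\vt[\A]{q}{i}\leq\vt[\A^*]{q}{i}$, I would observe that any Max strategy $\sigma$ in $\A$ induces a Max strategy in $\A^*$ with the same termination probability against the fixed $\opi$, and in $\A^*$ Min is forced to play $\opi$ while in $\A$ Min can freely minimize; formally,
\[
\vt[\A]{q}{i}
=\sup_\sigma\inf_\pi \Pr{\sigma,\pi}{\conf{q}{i}}[\A]{\Term}
\leq \sup_\sigma \Pr{\sigma,\opi}{\conf{q}{i}}[\A]{\Term}
= \vt[\A^*]{q}{i}.
\]

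For the limit equality (\ref{eq:game-values-lim}), the key observation is that $\A^*$ is a maximizing OC-MDP, so Corollary~\ref{cor:MDP-term-lim} gives
\[
\lim_{j\to\infty}\vt[\A^*]{q}{j}\ =\ \nu^{\A^*}_q\ =\ \valO[\A^*]{\CN}{q}.
\]
Since Min in $\A^*$ is hard-wired to play $\opi$, the value of $\CN$ in $\A^*$ from $q$ is $\sup_\sigma\Pr{\sigma,\opi}{q}[\A]{\CN}$, which by the assumed $\CN$-optimality of $\opi$ in $\A$ equals $\valO[\A]{\CN}{q}=\nu^{\A}_q$. Thus $\lim_j\vt[\A^*]{q}{j}=\nu^\A_q$.

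To close the loop I need $\lim_j\vt[\A]{q}{j}=\nu^\A_q$ as well. The inequality $\lim_j\vt[\A]{q}{j}\leq \lim_j\vt[\A^*]{q}{j}=\nu^\A_q$ is immediate from (\ref{eq:game-values}). For the reverse inequality I would use the trivial inclusion $\CN\cap\runs{\conf{q}{j}}\subseteq \Term\cap\runs{\conf{q}{j}}$ (a run whose counter $\liminf$ is $-\infty$ must cross $0$ from any positive start), whence $\Pr{\sigma,\pi}{\conf{q}{j}}{\Term}\geq \Pr{\sigma,\pi}{\conf{q}{j}}{\CN}$ for every $(\sigma,\pi)$, and therefore $\vt[\A]{q}{j}\geq \valO[\A]{\CN}{\conf{q}{j}}=\nu^\A_q$ for every $j$. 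Taking the limit yields $\lim_j\vt[\A]{q}{j}\geq \nu^\A_q$, completing the sandwich.

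I do not expect a real obstacle: the content is a routine composition of monotonicity, the ``restrict Min to $\opi$'' comparison, and Corollary~\ref{cor:MDP-term-lim}. The only place to be careful is the second equality in (\ref{eq:game-values-lim}): one must verify that the $\CN$-optimality of $\opi$ for Min in the game $\A$ translates correctly into the claim $\valO[\A^*]{\CN}{q}=\valO[\A]{\CN}{q}$, which is where the hypothesis on $\opi$ is used and which is the linchpin of the whole argument.
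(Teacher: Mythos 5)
Your proof is correct and follows essentially the same route as the paper: monotonicity of $\vt[\A]{q}{i}$ in $i$ plus the ``restrict Min's moves'' comparison for \eqref{eq:game-values}, and then a sandwich using $\CN\subseteq\Term$, the $\CN$-optimality of $\opi$ (so $\valO[\A]{\CN}{q}=\valO[\A^*]{\CN}{q}$), and Corollary~\ref{cor:MDP-term-lim} applied to the maximizing OC-MDP $\A^*$ for \eqref{eq:game-values-lim}. The only difference is that you spell out the ``immediate'' steps the paper leaves terse, which is fine.
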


\begin{proof}
Since for all $j$ we have
\(
\vt[\A]{q}{j+1} \leq \vt[\A]{q}{j},
\), we obtain the first inequality in (\ref{eq:game-values}).
The second inequality in (\ref{eq:game-values}) follows from the fact that in $\A^*$ we restricted the
possible moves of Min.
The ``$\leq$'' direction in (\ref{eq:game-values-lim}) follows directly
from~(\ref{eq:game-values}), and the other direction is obtained as follows: 
\[
 \begin{array}{lclcl}
  \lim_{i \rightarrow \infty}  \vt[\A]{q}{i} & \geq &
     \valO[\A]\CN{q} & \hspace*{1em} & \mbox{(immediate)}\\
  & = & \valO[\A^*]\CN{q} && \mbox{(immediate)}\\
  & = & \lim_{i \rightarrow \infty}  \vt[\A^*]{q}{i} &&
  \mbox{(by Corollary~\ref{cor:MDP-term-lim})}
 \end{array}
\]
\end{proof}

\begin{corollary}
\label{cor:term-lim}
  For every control state $q$ of an OC-SSG $\A$ we have that
  \[
    \lim_{i \rightarrow \infty}  \vt[\A]{q}{i} = \valO[\A]\CN{q}.
  \]
\end{corollary}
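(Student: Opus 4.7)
The plan is to derive this corollary directly by chaining together Lemma~\ref{lem:ssg} and Corollary~\ref{cor:MDP-term-lim}, since the essential analytical work has already been carried out for the maximizing OC-MDP case. The existence of an optimal pure counterless strategy $\opi$ for Min with respect to $\CN$ is guaranteed by \cite[Proposition~7]{BBE10}, so the auxiliary maximizing OC-MDP $\A^*$ from Lemma~\ref{lem:ssg} is well-defined and has the properties needed.

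Concretely, I would argue as follows. By equation~(\ref{eq:game-values-lim}) of Lemma~\ref{lem:ssg}, the limit $\lim_{i\to\infty}\vt[\A]{q}{i}$ equals $\lim_{i\to\infty}\vt[\A^*]{q}{i}$. Since $\A^*$ is a maximizing OC-MDP, Corollary~\ref{cor:MDP-term-lim} applies and yields $\lim_{i\to\infty}\vt[\A^*]{q}{i} = \valO[\A^*]\CN{q}$. Finally, because $\opi$ is $\CN$-optimal for Min in $\A$, fixing $\opi$ preserves the $\CN$-value, i.e.\ $\valO[\A^*]\CN{q} = \valO[\A]\CN{q}$. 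Composing these three equalities gives the claim. This last equality is exactly the ``immediate'' observation already invoked in the proof of Lemma~\ref{lem:ssg}, so nothing new is required.

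There is essentially no new obstacle to overcome here: all the difficulty has been absorbed into Corollary~\ref{cor:MDP-term-lim} (which relies on the delicate martingale/Azuma argument of Section~\ref{sec-OC-MDP}) and into Lemma~\ref{lem:ssg} (which reduces OC-SSGs to maximizing OC-MDPs through the optimal Min strategy). The corollary is really just the observation that once both of those pieces are in place, the limit characterization of the termination value extends verbatim from OC-MDPs to OC-SSGs. The only thing to be mildly careful about is that we are invoking Corollary~\ref{cor:MDP-term-lim} on $\A^*$ rather than directly on $\A$, which is legitimate precisely because $\A^*$ is a maximizing OC-MDP of size polynomial in $\size\A$.
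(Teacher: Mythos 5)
Your proposal is correct and takes essentially the same route as the paper: the corollary follows by combining equation~(\ref{eq:game-values-lim}) of Lemma~\ref{lem:ssg} with Corollary~\ref{cor:MDP-term-lim} applied to the maximizing OC-MDP $\A^*$, together with the ``immediate'' observation $\valO[\A^*]\CN{q}=\valO[\A]\CN{q}$ that is already present inside the proof of Lemma~\ref{lem:ssg}. (The size bound on $\A^*$ you mention is not needed here; it only matters for the complexity analysis.)
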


\subsection{Analyzing a Finite Segment of Configurations}
\label{sec:reach-analysis}

\begin{lemma}
\label{lem:fin-seg}
There is a nondeterministic algorithm\footnote{Again, see
footnote \ref{foot:nondet-explain} for a precise explanation
of what we mean by a nondeterministic algorithm in this context.} that given
an OC-SSG, $\A=\ocssgA$, and a rational $\eps>0$
as input, and, in addition, given the following
precomputed values:
\begin{itemize}
\item
$\nu_q=\valO{\CN}{q}$ for every $q\in Q$
\item
an integer $N\geq 0$
such that
\(
0\leq\vt{q}{i}-\nu_q\leq\eps
\)
for all $q\in Q$ and $i\geq N$,
\item
and a pair of strategies $(\sigma^*,\pi^*)$ for Max and Min
which are optimal for $\CN$ in all $q\in Q$;
\end{itemize}
computes the following output:
\begin{itemize}
\item
a number $\nu_{\conf{q}{i}}$
for each $q\in Q$ and $i\leq N$
such that
\(
0\leq\vt{q}{i}-\nu_{\conf{q}{i}}\leq\eps
,
\)
\item
and a pair of strategies $(\bar\sigma,\bar\pi)$ for Max and Min, respectively,
which are $\eps$-optimal for termination
in all configurations.
\end{itemize}
The algorithm runs in time polynomial in $N{\cdot}\size{\A}$.
Furthermore,
if $\A$ is an OC-MDP then the algorithm is deterministic.
\end{lemma}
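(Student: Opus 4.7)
The plan is to reduce the problem to computing optimal reachability values and pure memoryless strategies in a finite-state SSG, $\G$, that captures $\A$ restricted to counter values in $[0, N]$ and uses the precomputed $\nu_q$ as a proxy for the eventual fate of runs whose counter climbs to~$N$. The states of $\G$ are $\{\conf{q}{i} : q \in Q, 0 \leq i \leq N\}$ together with two fresh absorbing stochastic states $s_0$ and $s_1$. For $0 < i < N$, the state $\conf{q}{i}$ keeps its owner and its outgoing transitions from $\A$ (these necessarily remain in $[0, N]$ because counter updates lie in $\{-1, 0, +1\}$). Each $\conf{q}{0}$ becomes stochastic with a single transition to $s_0$, and each $\conf{q}{N}$ becomes stochastic with transition probability $\nu_q$ to $s_0$ and $1-\nu_q$ to $s_1$. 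The reachability objective targets $s_0$. Since $|\G| = O(N \cdot |Q|)$, Fact~\ref{fact:reach} applied to $\G$ delivers optimal values $v_{\conf{q}{i}}$ and pure memoryless optimal strategies $\hat\sigma, \hat\pi$ in polynomial time when $\A$ is an OC-MDP. For full OC-SSGs, I would nondeterministically guess a pure memoryless strategy $\hat\sigma$ for Max in $\G$, fix it to obtain a minimizing MDP, compute Min's optimal response $\hat\pi$ and the values by linear programming, and verify the Bellman inequalities at Max's states to certify optimality.

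The output will be $\nu_{\conf{q}{i}} := v_{\conf{q}{i}}$ for $i \leq N$, together with the $\eps$-optimal strategies $\bar\sigma$ and $\bar\pi$ obtained by splicing: play $\hat\sigma$ (respectively $\hat\pi$) while the counter lies in $[1, N-1]$, and switch to $\sigma^*$ (respectively $\pi^*$) once the counter reaches $N$. For the correctness claim $0 \leq \vt[\A]{q}{i} - v_{\conf{q}{i}} \leq \eps$, I plan a sandwich argument. Let $v^+_{\conf{q}{i}}$ be the value of an auxiliary finite SSG $\G^+$ identical to $\G$ except that the boundary payoff at each $\conf{q}{N}$ is $\min(1, \nu_q + \eps)$; since the two games have identical dynamics and boundary payoffs within $\eps$, one gets $v_{\conf{q}{i}} \leq v^+_{\conf{q}{i}} \leq v_{\conf{q}{i}} + \eps$. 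For the lower bound $v_{\conf{q}{i}} \leq \vt[\A]{q}{i}$, I would verify that $\bar\sigma$ already realises $v_{\conf{q}{i}}$ in $\A$ against every Min strategy: the executions of $\A$ and $\G$ coincide up to the first time the counter reaches $N$, and from any $\conf{q'}{N}$ the $\CN$-optimal $\sigma^*$ guarantees termination probability at least $\nu_{q'}$ regardless of Min's play (because $\CN \subseteq \Term$ when counter updates are in $\{-1, 0, +1\}$). For the matching upper bound $\vt[\A]{q}{i} \leq v^+_{\conf{q}{i}}$, I would argue symmetrically that $\bar\pi$ caps every Max strategy at $v^+_{\conf{q}{i}}$: up to counter $N$ the dynamics match $\G^+$, and from $\conf{q'}{N}$ the strategy $\pi^*$ bounds Max's termination probability by $\vt[\A^*]{q'}{N} \leq \nu_{q'} + \eps$, the last inequality being exactly the bound on $N$ delivered by Lemma~\ref{lem:c-max} applied to $\A^*$ (which the algorithm may re-derive internally from $\pi^*$). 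The $\eps$-optimality of $\bar\sigma$ and $\bar\pi$ at an arbitrary starting configuration then follows: at $i \leq N$ from the sandwich, and at $i > N$ directly from $\nu_q \leq \vt[\A]{q}{i} \leq \nu_q + \eps$ combined with the $\CN$-optimality of $\sigma^*, \pi^*$.

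The main technical obstacle I anticipate is making the ``dynamics of $\A$ and $\G$ coincide up to counter $N$'' step fully rigorous. One must decompose the termination event in $\A$ according to the first-passage time $\tau$ to $\{0\} \cup \{\text{level } N\}$, handle the $\tau = \infty$ branch (the counter trapped in $[1, N-1]$, contributing zero termination mass in both $\A$ and $\G$), and push the supremum/infimum over strategies through this decomposition so that expectations in $\A$ and in $\G$ (or $\G^+$) are compared under the same probability law on finite-passage histories. Once this bookkeeping is in place, the two-sided $\eps$-bound drops out by linearity of expectation and the $\eps$-gap between the boundary payoffs of $\G$ and $\G^+$, and the complexity claim reduces to the polynomial cost of LP on $\G$ plus the cost of guessing a pure memoryless strategy in the SSG case.
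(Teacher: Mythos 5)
Your proposal follows essentially the same route as the paper's proof: truncate $\A$ to counter values in $[0,N]$, use the precomputed $\nu_q$ as boundary payoffs at level $N$, compute optimal reachability values and pure memoryless strategies via Fact~\ref{fact:reach} (guessing Max's strategy nondeterministically in the SSG case), and splice these with $\sigma^*$, $\pi^*$ above level $N$; your auxiliary game $\G^+$ plays exactly the role that the paper's idealised game $\G$ (with exact boundary payoffs $\vt{q}{N}$) plays in its two-sided sandwich. You also correctly flag a subtlety the paper glosses over, namely that $\eps$-optimality of $\bar\pi$ really needs $\vt[\A^*]{q}{N}\leq\nu_q+\eps$ rather than merely $\vt{q}{N}\leq\nu_q+\eps$ — a property guaranteed by how $N$ is produced in Theorem~\ref{thm:main}, but used tacitly in the paper's own proof of this lemma.
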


\begin{proof}
The first idea is to analyze the following SSG, $\G$,
which is essentially $\A$ restricted to configurations
with counter value between $0$ and $N$.
The set of states of $\G$ is $\{ \conf{q}{i} \mid q\in Q, 0\leq i \leq N\}
\cup \{s_0, s_1\}$.
The ownership of the states of the form $\conf{q}{i}$, $0<i<N$ is the same
as in $\A$, the states $s_0, s_1$ and $\conf{q}{i}$ for $q\in Q$, $i\in\{0,N\}$ are stochastic.
For $0<i<N$, there is a transition $\conf{q}{i}\btran{}\conf{r}{j}$
iff $(q,j-i,r)\in\delta$.
Probabilities of these transitions, where applicable, are derived from $P$.
Vertices of the form $\conf{q}{0}$, and the state $s_0$ have only one transition,
to $s_0$.
Vertices of the form $\conf{q}{N}$ have transitions to both $s_0$
and $s_1$, and the state $s_1$ has only the self-loop transition.
The probability of a transition $\conf{q}{N}\btran{}s_0$
equals $\vt{q}{N}$ for all $q$.

Clearly we have
\(
\sup_\sigma
\inf_\pi
\Pr{\sigma,\pi}{\conf{q}{i}}{\SSGReach}
=
\vt{q}{i}
\)
for all $q\in Q$ and $i\leq N$.
The problem is that the transition probabilities from $\conf{q}{N}$
in $\G$ are unknown (and may even be irrational).   We will not actually construct $\G$.
To use such reachability
analysis for approximating the termination values we have to switch to
a slightly perturbed SSG, $\G'$.

$\G'$ is almost identical to $\G$: it has the same sets of states and
transitions. The only difference is that in $\G'$
the probability of a transition $\conf{q}{N}\btran{}s_0$
equals $\nu_{q}$ for every $q$
(and the probability of $\conf{q}{N}\btran{}s_1$ changes
appropriately to make the sum $1$).
Observe that since $\nu_q \leq  \vt{q}{N}$, for every $\conf{q}{i}$ where $i\leq N$:
\[
\sup_\sigma
\inf_\pi
\Pr{\sigma,\pi}{\conf{q}{i}}{\SSGReach{}\text{ in $\G'$}}
\leq
\sup_\sigma
\inf_\pi
\Pr{\sigma,\pi}{\conf{q}{i}}{\SSGReach{}\text{ in $\G$}}
=
\vt{q}{i}
.
\]
On the other hand, by our assumption on the values $\nu_q$ and $N$,
\[
\sup_\sigma
\inf_\pi
\Pr{\sigma,\pi}{\conf{q}{i}}{\SSGReach{}\text{ in $\G$}}
-\eps
\leq
\sup_\sigma
\inf_\pi
\Pr{\sigma,\pi}{\conf{q}{i}}{\SSGReach{}\text{ in $\G'$}}
.
\]
Thus
\(
\vt{q}{i}-
\sup_\sigma
\inf_\pi
\Pr{\sigma,\pi}{\conf{q}{i}}{\SSGReach{}\text{ in $\G'$}}
\leq \eps
,
\)
and we may output
\(
\nu_{\conf{q}{i}}
\coloneqq
\sup_\sigma
\inf_\pi
\Pr{\sigma,\pi}{\conf{q}{i}}{\SSGReach{}\text{ in $\G'$}}.
\)
By standard results, see, e.g.,~\cite{C92},
such reachability values have a binary encoding polynomial
in $\size{\G'}$, and after a memoryless optimal strategy (having size polynomial in $\size{\G'}$) is guessed, 
the values can be computed
in time polynomial in $\size{\G'}$.
If $\A$ is an OC-MDP, then $\G'$ is an MDP, and for MDPs the
reachability values, and optimal strategies, can be computed in deterministic polynomial
time.
Let us suppose we have computed
the optimal strategies $\sigma_R$, $\pi_R$ for reachability in $\G'$. The resulting strategy
$\bar\sigma$ for the given OC-SSG $\A$ is defined as follows:
In configurations with counter value between $0$ and $N$
it plays according to the optimal reachability strategy of Max
in $\G'$. Once a configuration with a counter value above $N$ is
visited it switches to playing as $\sigma^*$ forever, where $\sigma^*$ is
the optimal stategy we assume we are given for $\CN$.
Now for all configurations $\conf{q}{i}$, $0\leq i< N$, and strategies $\pi$ for Min,
the number $\nu_{\conf{q}{i}}$ gives a lower bound on the probability
that under $(\bar\sigma,\pi)$ a run either terminates without exceeding counter value $N$,
or hits some $\conf{r}{N}$ and then satisfies $\CN$.
This probability itself is a lower bound for the probability
that a run either terminates without exceeding counter value $N$,
or hits some $\conf{r}{N}$ and then terminates, which is in other words
the probability of termination.
This means that $\bar\sigma$ is $\eps$-optimal, because
$\vt{q}{i}-\nu_{\conf{q}{i}}\leq\eps$.

Analogously we define the strategy $\bar\pi$.
Consider again
some $\conf{q}{i}$, $0\leq i< N$, and $\sigma$ for Max.
The number $\nu_{\conf{q}{i}}$ gives now an upper bound on the probability
that under $(\sigma,\bar\pi)$ a run either terminates without exceeding counter value $N$,
or hits some $\conf{r}{N}$ and then satisfies $\CN$.
From the properties of $N$, this probability is by at most
$\eps$ lower than the probability of termination.
Because $\nu_{\conf{q}{i}}\leq\vt{q}{i}$ we obtain that also $\bar\pi$ is
$\eps$-optimal.
\myqed\end{proof}

\section{Conclusions}
\label{sec:concl}

We have shown that one can $\epsilon$-approximate the termination value
for OC-MDP (and for OC-SSG) termination games, and compute
$\epsilon$-optimal strategies for them,  in exponential time
(and in nondeterministic exponential time, respectively).

Our results leave open several intriguing problems.
An obvious remaining open problem is to obtain better complexity bounds. 
In particular, we know of no non-trivial lower bounds for OC-MDP approximation problems,
and it remains possible that approximation of the value for OC-MDPs can be computed in polynomial time. 
Our results also leave open the decidability of the quantitative
termination \emph{decision} problem 
for OC-MDPs and OC-SSGs, which asks: 
``is the termination value $\geq p$?'' for a given
rational probability $p$.    
Furthermore, our results leave open the computability of approximating the
value of \emph{selective termination} objectives for OC-MDPs, where the goal
is to terminate (reach counter value $0$) in a specific subset
of the control states. Qualitative versions of selective termination problems
were studied in \cite{BBEKW10,BBE10}.

\bibliography{bibliography}

\ifApp{
\newpage
\appendix
\section{Appendix}

\subsection{Non-existence of Optimal Strategies for Termination}
\label{sec:non-opt-ex}
In the following example we show that even in the special case
of OC-MDPs there may not be any optimal strategies for maximizing
the termination values. More precisely, there is a maximizing OC-MDP, $\A$,
and (infinitely many) configurations $\conf{s}{i}$ such that
for all strategies $\sigma$:
\(
\Pr\sigma{\conf{s}{i}}{\Term}
<
\vt{q}{i}
.
\)

\begin{figure}
\begin{center}
\begin{tikzpicture}[x=3cm,y=3cm]
\node	(s)	at (0.5, 0)	[max]	{$s$};
\node	(r)	at (  0.5, 1)	[ran]	{$r$};
\node	(t)	at (2.5, 0)	[ran]	{$t$};
\node	(g)	at (  2, 1)	[ran]	{};
\node	(b)	at (  3, 1)	[ran]	{};

\path[->] (s) edge  node [left] {$+0$} (r)
 (r) edge [bend left=45] node [near start,right] {$\frac{2}{3}$} node [right] {$+1$} (s)
 (r) edge [bend right=45] node [near start,left] {$\frac{1}{3}$} node [left] {$-1$} (s)
 (s) edge node [above] {$+0$} (t)
 (t) edge node [near start,left] {$\frac{1}{2}$} node [left] {$+0$} (g)
 (t) edge node [near start,right] {$\frac{1}{2}$} node [right] {$+0$} (b)
 (g) edge [loop above] node [above] {$-1$} (g)
 (b) edge [loop above] node [above] {$+1$} (b);
\end{tikzpicture}
\end{center}
\caption{An OC-MDP where Player Max does not have optimal strategies for termination.
Signed numbers represent counter increments, unsigned are probabilities of transitions.}
\label{fig:ocmdp-non-max}
\end{figure}

\begin{example}
\label{ex:noopt}
Consider the maximizing OC-MDP, $\A$, given in Figure~\ref{fig:ocmdp-non-max}.
In the graph, round nodes represent stochastic states,
the only square node is a state of Player Max, $s$.
The arrows represent the rules, with signed numbers representing the
increments, and non-signed the probabilities.
For example the arrow from $s$ to $r$ represents the rule
$(s,0,r)$, whereas the right arrow from $r$ to $s$ represents
the rule $(r,+1,s)$, which has probability $P(r,1,s)=2/3$.

\begin{claim}
If the rule $(s,0,t)$ is removed then
$\vt{s}{i}=2^{-i}$.
\end{claim}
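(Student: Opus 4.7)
The plan is to observe that removing the rule $(s,0,t)$ eliminates all nondeterminism at the Max state $s$ (its only remaining outgoing rule is $(s,0,r)$), so the OC-MDP collapses to a one-counter Markov chain. From any configuration $\conf{s}{i}$ the system moves deterministically to $\conf{r}{i}$ and then, via the stochastic state $r$, returns to $s$ with counter $i+1$ with probability $2/3$ or $i-1$ with probability $1/3$. Equivalently, the counter values observed at successive visits to $s$ perform a biased random walk on $\Zset$ with up-probability $2/3$ and down-probability $1/3$, killed upon hitting $0$.

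First I would set $p_i := \vt{s}{i}$ and apply (two-step) first-step analysis to obtain the linear recurrence
\[
p_i = \tfrac{2}{3}\, p_{i+1} + \tfrac{1}{3}\, p_{i-1}, \qquad i \geq 1,
\]
together with the initial condition $p_0 = 1$. The characteristic polynomial $\tfrac{2}{3}x^2 - x + \tfrac{1}{3}$ has roots $1$ and $1/2$, so the general solution has the form $p_i = A + B\cdot 2^{-i}$.

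To pin down the constants I would use a boundary condition at $+\infty$. Since the walk has strictly positive drift $\tfrac{2}{3} - \tfrac{1}{3} = \tfrac{1}{3}$, the strong law of large numbers implies $\Cnt{n} \to +\infty$ almost surely, so the termination probability from arbitrarily large initial counter tends to $0$; thus $\lim_{i\to\infty} p_i = 0$, forcing $A = 0$, and then $p_0 = 1$ gives $B = 1$, yielding $p_i = 2^{-i}$. The only real (and quite mild) obstacle is justifying the boundary condition at infinity, since otherwise the recurrence admits the spurious constant solution $p_i \equiv 1$; this can be handled either by the SLLN argument just sketched, or by a direct gambler's-ruin style upper bound $\Pr{}{}{\conf{s}{i}}{\Term} \leq (q/p)^i = 2^{-i}$ that pins $A$ to zero immediately.
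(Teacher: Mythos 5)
Your proof is correct. Both you and the paper observe that removing $(s,0,t)$ leaves a single strategy, collapsing the OC-MDP to an OC-MC, and both set up the same second-order linear recurrence $p_i = \tfrac{1}{3}p_{i-1} + \tfrac{2}{3}p_{i+1}$ with general solution $A + B\cdot 2^{-i}$. The difference is in how the two free constants are pinned down. The paper supplies a second boundary condition at the origin: it computes $p_1$ directly as the \emph{least} non-negative solution of $x = \tfrac{1}{3} + \tfrac{2}{3}x^2$, getting $p_1 = 1/2$, and then solves the two-point boundary problem $p_0=1$, $p_1=1/2$. This implicitly invokes the least-fixed-point characterization of termination probabilities. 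You instead supply a boundary condition at infinity, arguing $\lim_{i\to\infty}p_i = 0$ from the strictly positive drift (via SLLN or a gambler's-ruin bound). Both routes are standard and correct; yours is arguably slightly more self-contained, since it does not appeal to the least-solution principle, but it does require the small extra observation about drift to rule out the spurious constant solution. Either way the conclusion $p_i = 2^{-i}$ follows.
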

\begin{proof}
Observe that there is only one strategy when the rule
above is removed. We will omit writing its name.
Clearly
\(
\Pr{}{\conf{s}{0}}{\Term}=1=2^0.
\)
Further, the assignment
\(
x \coloneqq \Pr{}{\conf{s}{1}}{\Term}
\)
is the least non-negative solution
of the equation $x = \frac{1}{3} + \frac{2x^2}{3}$,
which is $\frac{1}{2}$.
Finally,
\(
\Pr{}{\conf{s}{i}}{\Term}
=
\frac{1}{3}
\cdot \Pr{}{\conf{s}{i-1}}{\Term}
+
\frac{2}{3}
\cdot \Pr{}{\conf{s}{i+1}}{\Term}.
\)
Given the initial conditions for $i=0,1$,
we obtain
\(
\Pr{}{\conf{s}{i}}{\Term}=2^{-i}
\)
as a unique solution of this recurrence.
\myqed\end{proof}

\begin{claim}
$\vt{s}{1}=\frac{3}{4}$.
\end{claim}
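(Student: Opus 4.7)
The plan is to prove $\vt{s}{1}=3/4$ by establishing matching lower and upper bounds. For the lower bound, I would exhibit, for each integer $K\geq 2$, the pure strategy $\sigma_K$ that at state $s$ plays rule $(s,0,r)$ whenever the counter lies strictly between $0$ and $K$, and switches to $(s,0,t)$ once the counter first reaches $K$. So long as the switch has not occurred, the counter values observed at $s$ under $\sigma_K$ evolve as the biased random walk on $\{0,\ldots,K\}$ absorbed at both endpoints, with up-probability $2/3$ and down-probability $1/3$. The classical gambler's-ruin formula (with ratio $q/p=1/2$) then gives, starting from counter $1$, a probability $(2^{K-1}-1)/(2^K-1)$ of hitting $0$ first and $2^{K-1}/(2^K-1)$ of hitting $K$ first; in the latter case the subsequent play of $(s,0,t)$ terminates with probability exactly $1/2$ by the one-step analysis at $t$, yielding
\[
\Pr{\sigma_K}{\conf{s}{1}}{\Term} \;=\; \frac{2^{K-1}-1}{2^K-1} \,+\, \frac{1}{2}\cdot\frac{2^{K-1}}{2^K-1} \;=\; \frac{3\cdot 2^{K-2}-1}{2^K-1},
\]
which tends to $3/4$ as $K\to\infty$, so $\vt{s}{1}\geq 3/4$.

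For the upper bound $\vt{s}{1}\leq 3/4$, I would use a supermartingale argument. Define a potential $w$ on configurations by $w(q,i)=1$ whenever $i\leq 0$, and for $i\geq 1$ by $w(s,i)=(1+2^{-i})/2$, $w(r,i)=\tfrac{1}{3}w(s,i-1)+\tfrac{2}{3}w(s,i+1)$, $w(t,i)=1/2$, $w(g,i)=1$, and $w(b,i)=0$. A short case analysis then shows that for every Max strategy $\sigma$ the process $M_n:=w(\CState{n},\Cnt{n})$ is a supermartingale bounded in $[0,1]$: at the purely stochastic states $r,t,g,b$ one has $\mathbb{E}[M_{n+1}\mid\mathcal{F}_n]=M_n$ by construction of $w$, while at $s$ the rule $(s,0,r)$ also preserves $M_n$ (using $w(r,i)=w(s,i)$ for $i\geq 2$ by the linear recurrence, together with the direct boundary check $w(r,1)=\tfrac{1}{3}+\tfrac{2}{3}\cdot\tfrac{5}{8}=3/4=w(s,1)$), and the rule $(s,0,t)$ gives $M_{n+1}=w(t,i)=1/2\leq w(s,i)=M_n$ for $i\geq 1$. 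Letting $\tau:=\inf\{n:\Cnt{n}\leq 0\}$, the stopped process $\tilde{M}_n:=M_{n\wedge\tau}$ is a bounded supermartingale, and hence converges almost surely to a limit $\tilde{M}_\infty$ with $\mathbb{E}[\tilde{M}_\infty]\leq \tilde{M}_0=M_0=w(s,1)=3/4$. On $\Term=\{\tau<\infty\}$ we have $\tilde{M}_\infty=w(\CState{\tau},\Cnt{\tau})=1$ because $\Cnt{\tau}\leq 0$, and $\tilde{M}_\infty\geq 0$ on $\{\tau=\infty\}$; hence $\Pr{\sigma}{\conf{s}{1}}{\Term}\leq\mathbb{E}[\tilde{M}_\infty]\leq 3/4$, as required.

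The main obstacle is guessing the potential $w$, which I would discover as follows. The $(s,0,r)$-recurrence for $w(s,\cdot)$ is linear with characteristic roots $1$ and $1/2$, so its general solution has the form $w(s,i)=A+B\cdot 2^{-i}$. The constants $A,B$ are pinned down by the boundary condition $w(s,0)=1$ and by the ``value at infinity'' $\lim_{i\to\infty}w(s,i)=\nu_s=\valO{\CN}{s}=1/2$, where $\nu_s=1/2$ holds because the $r$-walk has positive drift $+1/3$ (so the event $\CN$ has probability $0$ under any strategy that never leaves the $s$-$r$ loop), while playing $(s,0,t)$ drives the counter to $-\infty$ with probability exactly $1/2$ via the $g$-branch. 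This yields $A=B=1/2$ and hence $w(s,i)=(1+2^{-i})/2$; it only remains to verify that this $w$ also dominates the alternative $(s,0,t)$ action, i.e.\ that $w(s,i)\geq 1/2$ for all $i\geq 1$, which is immediate from the closed form.
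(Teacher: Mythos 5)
Your proof is correct, and while the lower bound is essentially the paper's argument repackaged, the upper bound takes a genuinely different route. For the lower bound you exhibit the same family of switching strategies $\sigma_K$ the paper uses (play $(s,0,r)$ until the counter reaches $K$, then switch to $(s,0,t)$), just invoking the gambler's-ruin formula directly rather than setting up and solving the recurrence $p_{i+1}=\tfrac{2}{3}(p_i + (1-p_i)p_{i+1})$; both give $p_K = 2^{K-1}/(2^K-1)\to 1/2$ and hence $3/4$ in the limit. For the upper bound, the paper's argument is a short probability decomposition: it splits $\Pr{\sigma}{\conf{s}{1}}{\Term}$ into the probability $p_a$ of terminating without visiting $t$ and the probability $p_b$ of terminating after visiting $t$, observes $p_b = \tfrac{1}{2}p_c$ with $p_a+p_c\leq 1$, and imports the bound $p_a\leq 1/2$ from the preceding Claim about the reduced system without the rule $(s,0,t)$. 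You instead construct an explicit potential $w$ (which, as you note, is the exact value function $(1+2^{-i})/2$ on the $s$-states), verify the supermartingale inequality state by state, and conclude via bounded convergence and optional stopping. Your route requires a bit more setup (guessing $w$, checking the boundary case $w(r,1)=3/4$) and invokes martingale machinery where the paper uses only elementary disjointness of events, but in exchange it is self-contained — it does not lean on the preceding Claim — and it already yields the upper bound $\vt{s}{i}\leq (1+2^{-i})/2$ for \emph{all} $i$, which is most of the content of the paper's third Claim in the same example. Both approaches are sound; the paper's is the more economical for this specific statement, yours generalizes more cleanly.
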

\begin{proof}
First we prove that
$\vt{s}{1}\geq\frac{3}{4}$.
For any $n$ consider the pure strategy, $\sigma_n$,
given for all histories ending in $\conf{s}{i}$ by
$\sigma_n(u)(\conf{s}{i}\tran{}\conf{r}{i})=1$ if $i<n$ and 
$\sigma_n(u)(\conf{s}{i}\tran{}\conf{t}{i})=1$ if $i\geq n$.
Set
\[
p_i \coloneqq \Pr{\sigma_i}{\conf{s}{1}}{\text{reach $\conf{s}{i}$}}.
\]
Observe that
$p_i$ stays the same number if we define it using any $\sigma_n$
with $n\geq i$, and that
\(
1-p_i = \Pr{\sigma_i}{\conf{s}{1}}{\text{terminate before reaching $\conf{s}{i}$}}.
\)
Moreover, $p_1=1$ and
\(
p_{i+1}
\coloneqq
\frac{2}{3} \cdot \left( p_i + (1-p_i)\cdot p_{i+1}\right).
\)
This uniquely determines that
\(
p_i = \frac{2^{i-1}}{2^i-1}.
\)
Note that $\lim_{i\to\infty} p_i=\frac{1}{2}$.
Finally, observe that
\[
\Pr{\sigma_n}{\conf{s}{1}}{\Term}
=
(1-p_n) + p_n\cdot\frac{1}{2}
\;.
\]
Thus, as $n\to\infty$ the probability
of termination under $\sigma_n$ approaches $\frac{3}{4}$.

Now we prove that
$\vt{s}{1}\leq\frac{3}{4}$
by proving that
\(
\Pr{\sigma}{\conf{s}{1}}{\text{terminate}}
\leq
\frac{3}{4}
\)
for every $\sigma$.
Consider the following probabilities:
\begin{align*}
p_a &\coloneqq \Pr\sigma{\conf{s}{1}}{\text{terminate without visiting $t$}},\\
p_b &\coloneqq \Pr\sigma{\conf{s}{1}}{\text{terminate after visiting $t$}},\\
p_c &\coloneqq \Pr\sigma{\conf{s}{1}}{\text{visit $t$}}.
\end{align*}
Clearly $p_b = \frac{p_c}{2}$.
Due to the first Claim, applied to $i=1$, we also have that $p_a \leq \frac{1}{2}$.
Finally, $p_a+p_c\leq 1$ since the events are disjoint.
We conclude that
\[
\Pr{\sigma}{\conf{s}{1}}{\Term}
=
p_a+p_b
\leq
p_a + \frac{1}{2}\cdot(1-p_a)
=
\frac{1}{2}\cdot p_a
+
\frac{1}{2}
\leq
\frac{3}{4}
.
\]
\myqed
\end{proof}

\begin{claim}
For all $i\geq0$,
\(
\vt{s}{i} = \frac{2^i+1}{2^{i+1}}.
\)
\end{claim}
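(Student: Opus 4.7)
The plan is to extend the argument of Claim~2 (the $i=1$ case) to arbitrary $i$. The same event decomposition works uniformly: any run from $\conf{s}{i}$ either terminates without ever visiting $t$, or visits $t$ and then either terminates or not. The key structural observation, which I would state once and reuse, is that from $t$ at any counter value $j\geq 1$ the conditional termination probability is exactly $\tfrac{1}{2}$, because the stochastic state $t$ routes with equal probability to the decreasing self-loop (which terminates with probability~$1$) and to the increasing self-loop (which terminates with probability~$0$), independently of~$j$.

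For the upper bound, I would fix an arbitrary strategy $\sigma$ and consider
$p_a = \Pr\sigma{\conf{s}{i}}{\text{terminate without visiting } t}$,
$p_b = \Pr\sigma{\conf{s}{i}}{\text{terminate after visiting } t}$ and
$p_c = \Pr\sigma{\conf{s}{i}}{\text{visit } t}$.
By the observation above $p_b = p_c/2$, and the events ``terminate without visiting $t$'' and ``visit $t$'' are disjoint so $p_a + p_c \leq 1$. Crucially, $p_a \leq 2^{-i}$, because on the restricted event of terminating without visiting $t$ only the rule $(s,0,r)$ is used at $s$; hence $p_a$ is bounded by the termination probability from $\conf{s}{i}$ in the MDP where the rule $(s,0,t)$ is removed, which equals $2^{-i}$ by Claim~1. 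Combining, $\Pr\sigma{\conf{s}{i}}{\Term} = p_a + p_c/2 \leq p_a + (1-p_a)/2 = \tfrac{1}{2} + p_a/2 \leq (2^i+1)/2^{i+1}$.

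For the lower bound, I would generalise the family $\sigma_n$ from Claim~2: $\sigma_n$ plays $s \to r$ whenever the counter is strictly below $n$ and $s\to t$ once it reaches $n$. The induced process on successive visits to $s$ is a random walk on $\{0,1,\dots,n\}$ with up-probability $\tfrac{2}{3}$ and down-probability $\tfrac{1}{3}$. A standard gambler's-ruin computation with ratio $1/2$ gives that the probability $q_n^{(i)}$ of hitting $\conf{s}{n}$ before $\conf{s}{0}$, starting from $\conf{s}{i}$, equals $(1-2^{-i})/(1-2^{-n})$. Thus $\Pr{\sigma_n}{\conf{s}{i}}{\Term} = (1-q_n^{(i)}) + q_n^{(i)}\cdot\tfrac{1}{2} = 1 - q_n^{(i)}/2 \xrightarrow{n\to\infty} 1 - (1-2^{-i})/2 = (2^i+1)/2^{i+1}$, yielding the matching lower bound.

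The only slightly delicate step is the inequality $p_a \leq 2^{-i}$: one must argue that runs avoiding $t$ under an arbitrary $\sigma$ are dominated by runs of the deterministic $t$-free process of Claim~1. This follows by a straightforward coupling (replace each $s\to t$ choice by $s\to r$ and discard the resulting extra mass), so the main obstacle is a bookkeeping one rather than a substantive one; the gambler's ruin and algebraic manipulations are routine.
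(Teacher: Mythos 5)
Your proof is correct, but it takes a genuinely different route from the paper. You generalise the machinery of Claim~2 wholesale: for the upper bound, the three-event decomposition $(p_a,p_b,p_c)$ with the bound $p_a\leq 2^{-i}$ inherited from the $t$-free process of Claim~1; for the lower bound, the switching strategies $\sigma_n$ together with a gambler's-ruin computation for the probability $q_n^{(i)}=(1-2^{-i})/(1-2^{-n})$ of hitting $\conf{s}{n}$ before $\conf{s}{0}$. The paper instead leverages the already-established values at $i=0,1$ as boundary data for a recurrence: it first argues that the move $\conf{s}{i}\to\conf{r}{i}$ is always value-optimal (via a small case distinction on whether $\vt{s}{i}=\tfrac12$ or $\vt{s}{i}>\tfrac12$, using $\vt{r}{i}\geq\tfrac12$), unfolds the Bellman equations through $r$ to get $\vt{s}{i}=\tfrac13\vt{s}{i-1}+\tfrac23\vt{s}{i+1}$, and solves that linear recurrence. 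The trade-off: your argument is more self-contained and gives explicit matching bounds without appealing to the subtle optimality claim for $s\to r$, at the cost of redoing the gambler's-ruin calculation for arbitrary starting height; the paper's argument is shorter given Claims~1 and~2, and as a by-product records the structural fact that $s\to r$ is always a value-preserving choice. One remark on the step you flag as delicate: the inequality $p_a\leq 2^{-i}$ doesn't need a coupling in the probabilistic sense — it's a direct path-by-path comparison. Every minimal terminating finite path avoiding $t$ has, under $\sigma$, a weight which is a product of the fixed stochastic factors at $r$ and factors $\sigma(\cdot)(s\to r)\leq 1$ at visits to $s$, whereas in the $t$-free OC-MDP each of those latter factors equals $1$; summing over such paths gives $p_a\leq 2^{-i}$ immediately. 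Framing it that way avoids having to make the "discard extra mass" step precise.
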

\begin{proof}
The case $i=0$ is trivial, and $i=1$ is by the previous Claim.
Observe that $\vt{s}{i}\geq\frac{1}{2}$
for all $i$, because there is always the transition to
$\conf{t}{i}$ from where the system terminates with probability $\frac{1}{2}$.
Consequently, $\vt{r}{i}\geq\frac{1}{2}$ for all $i$ as well.

\begin{sloppypar}
Thus, for a fixed $i$, either
$\vt{s}{i}=\frac{1}{2}$.
or
$\vt{s}{i}>\frac{1}{2}$
In the first case, taking the transition $\conf{s}{i}\tran{}\conf{r}{i}$
is still value-optimal, i.e., $\vt{s}{i}=\frac{1}{2}\leq\vt{r}{i}$.
In the second case the transition
$\conf{s}{i}\tran{}\conf{t}{i}$
is not value-optimal, and
thus the transition
$\conf{s}{i}\tran{}\conf{r}{i}$
has to be value optimal.
\end{sloppypar}

Thus we know that 
$\conf{s}{i}\tran{}\conf{r}{i}$
always preserves the termination value, and we may unfold two steps of
the Bellman-style equations satisfied by the value to obtain
\[
\vt{s}{i}
=
\frac{1}{3}
\cdot \vt{s}{i-1}
+
\frac{2}{3}
\cdot \vt{s}{i+1}.
\]
Given the initial conditions for $i=0,1$,
we obtain
\(
\vt{s}{i} = \frac{2^i+1}{2^{i+1}}
\)
as a unique solution of this recurrence.
\myqed
\end{proof}

Thus for all $n\geq 1$ we have
$\vt{s}{n} = 2^{-(n+1)}\cdot(2^n+1)$,
and also, obviously, 
$\vt{t}{n} = 1/2$.
As a conclusion,
$\vt{s}{n} > \vt{t}{n}$.
Thus no termination-optimal strategy may choose a transition generated by the rule $(s,0,t)$.
On the other hand, as shown in the first Claim, without the rule $(s,0,t)$
we would have
$\vt{s}{n} = 2^{-n}<2^{-(n+1)}\cdot(2^n+1)$.
Consequently, there are no termination-optimal strategies in $\conf{s}{n}$.

\end{example}

\subsection{Reduction to Rising OC-MDPs}
\label{sec:no-idling}
Recall from Definition~\ref{def:idling} that a pure \stratCL{} strategy, $\sigma$,
is called \emph{idling}
if there is a state $q\in Q$, such that
\(
\Pr\sigma{\conf{q}{0}}{\exists i>0:\CState{i}=q}=1
\)
and for all $i\geq 0$:
\(
\Pr\sigma{\conf{q}{0}}{\CState{i}=q \implies \Cnt{i}=0}=1
.
\)
Also recall that a maximizing OC-MDP $\A$ is \emph{rising}
if there is no idling strategy for $\A$ and, moreover,
$\valO\CN{q}<1$ for all states $q$ of $\A$.
Before we start proving Lemma~\ref{lem:reduction} let us prove an auxiliary result.

\begin{lemma}
\label{lem:quadr}
Let $w$ be a finite path of length $n$ such that for all $\omega\in\runs{w}$:
\begin{itemize}
\item $\Cnt{i}(\omega)>\Cnt{0}(\omega)$ for all $i<n$, and
\item if $0\leq t<t'\leq n$ and $\CState{t}(\omega)=\CState{t'}(\omega)$
then $\Cnt{t}(\omega)>\Cnt{t'}(\omega)$.
\end{itemize}
Then $n\leq |Q|^2$
and $\max_{0\leq i \leq n}\Cnt{i}(\omega)-\Cnt{0}(\omega)\leq |Q|$
for all $\omega\in\runs{w}$.

\end{lemma}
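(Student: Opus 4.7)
The plan is to fix an arbitrary run $\omega \in \runs{w}$, write $c_0 = \Cnt{0}(\omega)$ and $K = \max_{0\leq i\leq n}\Cnt{i}(\omega) - c_0$, and derive both claimed bounds by counting the (state, counter) pairs that can appear along $w$.

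First, I would bound $K$. Since every rule in $\rules$ changes the counter by at most one, a discrete intermediate-value argument shows that for each $k \in \{0, 1, \ldots, K\}$ there is some earliest position $t_k$ with $\Cnt{t_k}(\omega) = c_0 + k$. I claim the states $\CState{t_0}(\omega), \ldots, \CState{t_K}(\omega)$ are pairwise distinct: if $\CState{t_i}(\omega) = \CState{t_j}(\omega)$ with $i < j$, then $t_i < t_j$ and the second hypothesis forces $\Cnt{t_i}(\omega) > \Cnt{t_j}(\omega)$, contradicting $c_0 + i < c_0 + j$. Hence $K + 1 \leq |Q|$, which is exactly the second claimed bound.

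Second, I would bound $n$. The second hypothesis says that at any fixed control state $q$ the counter values on the visits to $q$ are strictly decreasing in time, hence pairwise distinct. Thus the $n+1$ pairs $(\CState{i}(\omega), \Cnt{i}(\omega))$ for $i = 0, 1, \ldots, n$ are all distinct. By the first hypothesis (interpreted in the evident way, so that no position drops below $c_0$) together with the previous paragraph, the counter values that appear on $w$ all lie in $\{c_0, c_0 + 1, \ldots, c_0 + K\}$, a set of cardinality $K + 1 \leq |Q|$. Therefore $n + 1 \leq |Q| \cdot (K + 1) \leq |Q|^2$, giving $n \leq |Q|^2 - 1 < |Q|^2$.

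The only technical point worth flagging is the discrete intermediate-value step producing the witnesses $t_k$, but this is immediate from the fact that counter updates lie in $\{-1,0,+1\}$. Beyond that, the proof is a compact two-step counting argument driven entirely by the two structural hypotheses on $w$: the first pins $K$ below $|Q|$ via distinct first-hitting states for each level, and the second converts monotonicity of the counter at repeated states into distinctness of (state, counter) pairs.
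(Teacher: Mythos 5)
Your proof is correct and takes essentially the same route as the paper's: pigeonhole on distinct first-hitting states bounds the upward excursion by $|Q|$, and pigeonhole on (state, counter) pairs bounds the length by $|Q|^2$. One small slip worth flagging: the first hypothesis, read sensibly (so as not to be vacuous at $i=0$), constrains only positions $0<i<n$, so the final configuration's counter may be $\Cnt{0}(\omega)-1$ and its pair escapes your grid $Q\times\{c_0,\dots,c_0+K\}$; restricting the pair count to positions $0,\dots,n-1$ still yields $n\le|Q|\cdot(K+1)\le|Q|^2$, so the conclusion survives.
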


\begin{proof}
From the fact that the maximal positive counter change is $+1$
and the second property of $w$, we have that
$\Cnt{i}(\omega)-\Cnt{0}(\omega)<|Q|$ for all $i<n$.
Again by the second property, every control state is thus visited
at most $|Q|$ times before the counter drops below $\Cnt{0}$.
By the first property we now have
$n\leq |Q|^2$ .
\myqed\end{proof}

\begin{proof}[Proof of Lemma~\ref{lem:reduction}]
We first intuitively explain the idea for the construction of $\bar\A$:
Using some added information in the control states, the OC-MDP will offer 
the following possibilities as long as a \stratCL{} strategy is chosen:
either the chosen \stratCL{} strategy somehow makes sure that
after any state $s$ is reached with positive probability 
the counter will thereafter either be decreased by at least one 
in at most $|Q|^2$ steps with positive probability,
or else after $s$ is reached the play will be forced to enter the
``trap'' state with positive probability.
The ``trap'' state is an extra absorbing state
that keeps increasing the counter value forever thereafter. 

This, firstly, ensures that given a OC-MDP, $\A$,
the newly constructed OC-MDP, $\bar\A$ that is derived from
it has no idling strategies.
Secondly, 
the construction ensures the following: for every state $q$ of
the original OC-MDP, $\A$, there is a corresponding state $\bar{q}$
of the newly constructed OC-MDP, $\bar\A$, such that the optimal termination
probability starting at configuration $\conf{q}{i}$ in $\A$ is equal to the 
optimal terimation probability starting at configuration $\conf{\bar{q}}{i}$ in
$\bar\A$.

In more detail,
the set $\bar{Q}$ of control states of $\bar\A$ will consist
of one special state ``$\qd$'', and of
multiple copies of states $Q$ enhanced with two counters.
These enhanced states are
3- and 5-tuples of the form
$\stateIII{q}{n}{m}$, $\stateV{q}{n}{m}{k}{r}$,
where $q \in Q$, $(q,k,r)\in\rules$,
$0\leq m\leq |Q|^2+1$ is a counter measuring
the number of steps until it exceeds $|Q|^2+1$, and
$0\leq n\leq |Q|+1$ is a counter measuring
the difference of the current counter value minus the initial one, until it drops below $0$
or goes above $|Q|$.

The triples and 5-tuples alternate in the transitions of
$\bar\A$. First comes a triple, $\stateIII{q}{n}{m}$, indicating
the current configuration of the simulation of a play
in $\A$. Then the player has to commit
to an outgoing rule, $(q,k,r)$, used on the short path 
(which it claims exists) which
decreases the counter.
This results in entering $\stateV{q}{n}{m}{k}{r}$.
If $q\in\sMax{Q}$ then the play must
move in the next step to $\stateIII{r}{n+k}{m+1}$.
If $q\in\sP{Q}$ then all outgoing rules for $q$
are used in the next step of the simulation, but the counters $m$ and $n$
are reset to $0$ for all steps following rules other than $(q,k,r)$.
Thus the next possible triples to visit are
$\stateIII{r}{n+k}{m+1}$,
corresponding to rule $(q,k,r)$, and 
states $\stateIII{r'}{0}{0}$  corresponding to rules 
$(q,k',r')$, where $(k',r')\neq (k,r)$.
The  state in $\bar\A$ corresponding to a $q$ in $\A$  is 
$\stateIII{q}{0}{0}$.  Starting at state $\stateIII{q}{0}{0}$
the states along a run in $\bar\A$ keep track
of the number of steps and the change in counter value, 
and if the number of steps ``overflows'' before the counter
decreases to $-1$, this
indicates that the
path selected by the player is not a short decreasing path,
and the simulation is aborted by transiting to the $\qd$ state, 
which results in an incrementing
self-loop.
Otherwise,  if within a short number of $m$ of steps we reach
a state $\stateV{q'}{0}{m}{q''}{k}$, where $m \leq |Q|^2$,
and the next transition decreases the counter (i.e., $k=-1$)
then the two ``internal counters'' are reset to $0$ and we start
all over again.

We now give a formal definition of $\bar\A$, which is an adaptation of
a similar contruction 
given in ~\cite{BBEKW10}, where it appeared as $\mathcal{D}'$.
The set of control states of $\bar\A$ is
\begin{multline*}
\bar{Q}
=
\{\qd\}
\cup
\{\stateIII{q}{n}{m}
\mid
q \in Q,
0\leq m\leq |Q|^2+1,
0\leq n\leq |Q|+1\}
\\
\{\stateV{q}{n}{m}{k}{r}
\mid
(q,k,r)\in\rules,
0\leq m\leq |Q|^2+1,
0\leq n\leq |Q|+1
\}
.
\end{multline*}
The stochastic states are
\(
\sP{\bar{Q}}
\coloneqq
\{\qd\}
\cup
\{
\stateV{q}{n}{m}{k}{r} \in \bar{Q}
\}
.
\)
The rules, $\bar\rules$, is the smallest set containing
\begin{align*}
&
\{
(\stateV{q}{n}{|Q|^2+1}{k}{r},1,\qd)
\mid (q,k,r)\in \rules, 0\leq n\leq |Q|+1
\}
\\
&
\cup
\{
(\stateV{q}{|Q|+1}{m}{k}{r},1,\qd)
\mid (q,k,r)\in \rules, 0\leq m\leq |Q|^2
\}
\\
&
\cup
\{
(\stateIII{q}{n}{m},0,\stateV{q}{n}{m}{k}{r}) \mid (q,k,r)\in\rules, 0\leq n\leq|Q|+1, 0\leq m\leq|Q|^2+1
\}
\\
&
\cup
\{
(\stateV{q}{n}{m}{k}{r},k,\stateIII{r}{n+k}{m+1}) \mid (q,k,r)\in\rules, 0\leq n\leq |Q|, n+k\geq0, 0\leq m\leq|Q|^2
\}
\\
&
\cup
\{
(\stateV{q}{n}{m}{k}{r},k,\stateIII{r}{0}{0}) \mid (q,k,r)\in\rules, n+k=-1, 0<m\leq|Q|^2
\}
\\
&
\cup
\{
(\stateV{q}{n}{m}{k}{r},k',\stateIII{r'}{0}{0}) \mid (q,k',r')\in\rules, q\in \sP{Q}, r'\neq r, 0\leq n \leq |Q|, 0\leq m \leq |Q|^2
\}
\\
&
\cup
\{
(\qd,1,\qd)
\}
,
\end{align*}
and also containing the rule $(\bar{q},1,\bar{q})$ for each state
not having an outgoing rule in the set above.
Finally, $\bar{P}$ is derived from $P$ as follows:
for all $\bar{q}\in\sP{\bar{Q}}$ which only have one outgoing rule
the probability of such rule is $1$.
Otherwise we know $\bar{q}=\stateV{q}{n}{m}{k}{r}$, $q\in\sP{Q}$ and can set
\(
\bar{P}((\stateV{q}{n}{m}{k}{r},k',\stateIII{r'}{n'}{m'}))=P((q,k',r'))
\)
for each
\(
(\stateV{q}{n}{m}{k}{r},k',\stateIII{r'}{n'}{m'})\in\bar\rules
.
\)

Clearly,
$\size{\bar\A}\in\mathcal{O}(\size{\A}^4).$
For $f$ we choose the function $f(q)=\stateIII{q}{0}{0}$.
The remaining three properties of $\bar\A$ are delivered
by Lemma~\ref{lem:no-idling},
Lemma~\ref{lem:val-no-change},
and Lemma~\ref{lem:pres-non-idling}.
\end{proof}

\begin{lemma}
\label{lem:no-idling}
There are no idling strategies in $\bar{A}$.
\end{lemma}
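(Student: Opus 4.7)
The plan is a proof by contradiction: suppose $\sigma$ is a pure \stratCL{} idling strategy in $\bar{\A}$ with witness $\bar q \in \bar Q$. First, $\bar q \neq \qd$ is immediate, because the unique outgoing rule at $\qd$ is $(\qd,1,\qd)$, so every later visit to $\qd$ from $\conf{\qd}{0}$ happens at a strictly positive counter value, contradicting the idling requirement. Since $\sigma$-runs revisit $\bar q$ almost surely, $\bar q$ must lie in some bottom strongly connected component $C$ of the finite Markov chain induced by $\sigma$ on $\bar Q$; and because $\qd$ is absorbing with $\bar q \neq \qd$, necessarily $\qd \notin C$ and no trajectory inside $C$ reaches $\qd$ with positive probability.

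The contradiction then comes from a structural analysis of $\bar\A$ in terms of ``rounds'' (maximal sub-runs between consecutive visits to reset states of the form $\stateIII{r}{0}{0}$). By construction each round terminates by either (a) a type-A reset $(\stateV{q}{0}{m}{-1}{r},-1,\stateIII{r}{0}{0})$, contributing $-1$ to the counter; (b) a stochastic-split reset $(\stateV{q}{n}{m}{k}{r},k',\stateIII{r'}{0}{0})$ with $r' \neq r$, contributing $n+k' \geq -1$; or (c) a forced $\qd$-transition when $m = |Q|^2+1$ or $n = |Q|+1$. Option (c) is excluded inside $C$, so on every realization inside $C$ each round terminates via (a) or (b). Exploiting the fact that at each 5-tuple the ``preferred'' continuation probability $P((q,k,r))$ is strictly positive --- so a sufficiently long sequence of consecutive continuations has strictly positive probability of reaching the $m$-boundary --- and combining this with the length and counter-deviation bounds of Lemma~\ref{lem:quadr} applied to the simulated play in $\A$, one concludes that the expected counter change per round in $C$ is strictly negative: the $-1$ contributions of type-A terminations strictly dominate the bounded stochastic-split contributions.

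A strictly negative expected per-round drift, together with the ergodic theorem applied to the Markov chain restricted to $C$, yields $\lim_{i\to\infty} \Cnt{i}/i < 0$ almost surely under $\Prb{\sigma}{\conf{\bar q}{0}}$, and hence $\Cnt{i} \to -\infty$ almost surely. This contradicts the idling condition, which forces $\Cnt{i}=0$ at the infinitely many indices $i$ with $\CState{i}=\bar q$. The main obstacle in turning this plan into a rigorous proof is the averaging step, because stochastic-split resets can contribute positive counter changes of up to $|Q|+1$; establishing strict dominance of the type-A contributions requires careful accounting of the round-tree structure that intertwines the $|Q|^2$-step bound, the $|Q|$-deviation bound, and the positivity of committed transition probabilities, and will follow the analogous no-idling analysis developed for the construction $\mathcal{D}'$ in \cite{BBEKW10}.
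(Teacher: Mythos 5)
Your setup is sound: ruling out $\bar q = \qd$, placing $\bar q$ in a BSCC $C$ of the finite Markov chain induced by the counterless strategy $\sigma$, noting $\qd\notin C$, and decomposing runs into ``rounds'' between visits to states $\stateIII{\cdot}{0}{0}$ are all correct steps. But the core quantitative claim --- that the expected per-round counter change inside $C$ is \emph{strictly negative}, so that $\Cnt{i}/i\to$ a negative limit --- is false in general, and the acknowledged gap is not a technicality but a genuinely wrong lemma. Under the idling hypothesis, the counter value is a deterministic function of the current state $\bar t\in C$ (this is precisely what the paper establishes first, via the sets $C_{\bar t}$ and their singleton property), so $\Cnt{i}$ stays in a \emph{bounded} set and $\Cnt{i}/i\to 0$ almost surely; the per-round drift is exactly zero, not negative. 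There is also no structural reason the type-A contributions must ``dominate'': a committed strategy can have type-A ($-1$) and type-B ($\geq -1$, possibly $>0$) resets occurring with probabilities that exactly balance, avoid $\qd$ entirely, and still produce zero drift. (Such zero-drift, $\qd$-avoiding strategies exist; they are simply not idling, because the counter performs a recurrent random walk rather than being pinned.) So the ergodic-theorem / negative-drift route does not close.

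The paper instead obtains the contradiction from an extremal, not an averaging, argument. Because each visited state $\bar t$ is seen at a unique counter value $c_{\bar t}$ (uniqueness follows directly from idling), one may pick a reachable state $\bar r$ with $c_{\bar r}$ minimal. Then from $\conf{\bar r}{0}$ the counter never goes negative, which --- given the $(n,m)$-bookkeeping of $\bar\A$ --- means the ``committed'' chain out of $\bar r$'s current round can never produce the type-A reset $n+k=-1$ (that would drop the counter below the minimum) and must therefore, with positive probability, ride the bounded $m$-counter into $\qd$. That contradicts $\bar q$ being an idling witness. Your drift computation plays no role in this argument; to salvage your write-up you would need to replace step~3 (negative drift) and step~4 (ergodic theorem) with exactly this minimum-$c$ pigeonhole argument, which is the content of the paper's Lemma~\ref{lem:no-idling} and of the analogous no-idling analysis in \cite{BBEKW10} that you cite but did not in fact reproduce.
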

\begin{proof}
By contradiction, assume there is a pure \stratCL{} idling
strategy, $\sigma$.
From the definition of idling, there is a control state $\bar{q}\in\bar{Q}$ which
is almost surely revisited under $\sigma$, and upon every revisit, the counter
has the same value as at the beginning.
For every state $\bar{r}$ visited from $\bar{q}$, i.e., such that
$\Pr\sigma{\conf{\bar{q}}{0}}{\exists i\geq 0:\CState{i}=\bar{r}}>0$,
we define the set of possible counter values seen at a visit from $\conf{\bar{q}}{0}$ to $\bar{r}$
as
\(
C_{\bar{r}}
\coloneqq
\{
c\in\Zset
\mid
\exists i\geq 0:
\Pr\sigma{\conf{\bar{q}}{0}}{\CState{i}=\bar{r} \land \Cnt{i}=c}>0
\}
.
\)
First we observe that $|C_{\bar{r}}|=1$ for all such $\bar{r}$.
Indeed, it has obviously at least one element. On the other hand,
if $c,d\in C_{\bar{r}}$, $c\neq d$, then
\(
\Pr\sigma{\conf{\bar{q}}{0}}{\exists i>0: \CState{i}=\bar{q} \land \Cnt{i}=c-d}>0
\)
which contradicts our choice of $\bar{q}$ because $c-d\neq 0$.
From now on we denote by $c_{\bar{r}}$ the only number such that
$C_{\bar{r}}=\{c_{\bar{r}}\}$.

Now we choose $\bar{r}$ so that $c_{\bar{r}}$ is minimal.
Observe that
$\Pr\sigma{\conf{\bar{r}}{0}}{\Cnt{i}\geq 0}=1$ for all $i\geq 0$,
otherwise there is a state, $\bar{t}$, reachable under $\sigma$ from $\bar{r}$ such that
$c_{\bar{t}}<c_{\bar{r}}$.
But this means that a run from $\bar{r}$
under $\sigma$ visits $\qd$ almost surely.
It is easy to see that this implies that a run from $\bar{q}$
visits $\qd$ with a positive probability.\footnote{Actually this probability
is again $1$, but we only need to know that it is positive.}
This contradicts $\sigma$ being idling and $\bar{q}$ being the witnessing state for idling.
\myqed\end{proof}

Before we prove the second important property of $\bar\A$ we promised,
we note that although technically it is not true that $Q\subseteq \bar{Q}$,
we may insert $Q$ into $\bar{Q}$ by mapping $q$ to $\stateIII{q}{0}{0}$.
\begin{lemma}
\label{lem:val-no-change}
\(
\vt{q}{i}
=
\vt{\stateIII{q}{0}{0}}{i}
\)
for all $q\in Q$ and $i\geq 0$.
\end{lemma}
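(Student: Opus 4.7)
The plan is to establish $\vt{q}{i} = \vt{\stateIII{q}{0}{0}}{i}$ by proving the two inequalities separately. I exploit the view of $\bar\A$ as essentially $\A$ augmented with two ``ghost'' counters $n$ and $m$ tracking progress within a phase, together with a mechanism that aborts the simulation to the absorbing state $\qd$ whenever those ghost counters overflow. The commitment choices at $\stateIII{q'}{n}{m}$ states never affect the actual control state and counter dynamics directly, but only govern when a phase resets (via stochastic mismatch) or aborts.

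For the easier direction $\vt{q}{i} \geq \vt{\stateIII{q}{0}{0}}{i}$, I would proceed by projection. Given any strategy $\bar\sigma$ for $\bar\A$ starting at $\conf{\stateIII{q}{0}{0}}{i}$, construct a strategy $\sigma$ for $\A$ starting at $\conf{q}{i}$ by maintaining, along each history in $\A$, a shadow simulation of the corresponding history in $\bar\A$ (recovering the ghost counters from the actual play), and consulting $\bar\sigma$'s choice at each Max step. The coupled processes then project identically onto the actual control state and counter value until the shadow $\bar\A$-run would enter $\qd$. From that moment the $\bar\A$-run can never terminate, while the $\A$-run continues freely and may still terminate later, giving $\Pr{\sigma}{\conf{q}{i}}{\Term} \geq \Pr{\bar\sigma}{\conf{\stateIII{q}{0}{0}}{i}}{\Term}$; taking the supremum over $\bar\sigma$ yields the inequality.

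For the harder direction $\vt{q}{i} \leq \vt{\stateIII{q}{0}{0}}{i}$, I would lift an $\eps$-optimal strategy $\sigma$ for $\A$ to a strategy $\bar\sigma$ for $\bar\A$. At the beginning of each phase, $\bar\sigma$ plans a short straightened path of length at most $|Q|^2$ and counter range at most $|Q|$ that drops the counter below the phase-start value; by Lemma~\ref{lem:quadr}, such a straightening exists whenever $\sigma$ decreases the counter from the current configuration with positive probability. At each $\stateIII{q'}{n}{m}$ state within the phase, $\bar\sigma$ commits to the next rule of the planned path: at Max states the commitment is executed, while at stochastic states a divergence from the commitment triggers a reset in $\bar\A$, after which $\bar\sigma$ plans afresh from the new state. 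Because commitments do not alter the stochastic transitions, the coupled projections onto actual control state and counter value coincide until a ghost counter overflows.

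The main obstacle is controlling this overflow probability in the lifting. The crux of the argument is that any run in $\A$ whose current phase survives more than $|Q|^2$ steps or rises above the phase-start value by more than $|Q|$ necessarily contains a cuttable loop in the sense of Lemma~\ref{lem:quadr}, and any cuttable loop that occurs at a stochastic step forces a mismatch with $\bar\sigma$'s commitment and hence a reset of the ghost counters before overflow can occur. A careful coupling and bookkeeping argument, using the quantitative bounds $|Q|^2$ and $|Q|$ from Lemma~\ref{lem:quadr}, then shows $\Pr{\bar\sigma}{\conf{\stateIII{q}{0}{0}}{i}}{\Term} \geq \Pr{\sigma}{\conf{q}{i}}{\Term} - O(\eps)$; taking $\eps \to 0$ and the supremum over $\sigma$ concludes the second inequality and hence the lemma.
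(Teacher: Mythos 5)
Your easy direction ($\vt{q}{i} \geq \vt{\stateIII{q}{0}{0}}{i}$, by projecting a $\bar\A$-strategy down to $\A$ and noting that entering $\qd$ kills termination in $\bar\A$ but not in $\A$) is exactly the paper's argument and is fine.

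The hard direction, however, has a genuine gap. You start from an \emph{arbitrary} $\eps$-optimal strategy $\sigma$ for $\A$ and claim that ``by Lemma~\ref{lem:quadr}, such a straightening exists whenever $\sigma$ decreases the counter from the current configuration with positive probability.'' Lemma~\ref{lem:quadr} provides no such existence statement: it only says that a path which already happens to avoid counter-nondecreasing revisits must be short. It does not say that such a path lies in the support of $\sigma$, and for a generic $\eps$-optimal $\sigma$ it need not. An $\eps$-optimal $\sigma$ may freely wander: it can revisit the same Max state at the same or higher counter value, or take stochastic steps that happen to coincide with the committed plan, without ever incurring the reset you rely on. The claim that ``any cuttable loop that occurs at a stochastic step forces a mismatch'' does not cover loops that pass only through Max states, nor stochastic loops that follow the plan; in either case the $m$-counter overflows into $\qd$ with no compensating reset, and the coupling you propose no longer bounds $\Pr{\bar\sigma}{\conf{\stateIII{q}{0}{0}}{i}}{\Term}$ from below in terms of $\Pr{\sigma}{\conf{q}{i}}{\Term}$. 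There is also a second problem: even if a straightened path exists in the graph, if it is not in $\sigma$'s support, committing to it can steer the lifted strategy into configurations from which $\sigma$ (and hence your error accounting) gives no termination guarantee.

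The paper closes this gap with a specific choice of strategy and an optimality argument that your proposal lacks. It does \emph{not} lift an arbitrary $\eps$-optimal $\sigma$; it builds $\sigma_\eps$ as an optimal strategy for the \emph{finite-horizon} objective $\TermB{t_\eps}$ (terminate within $t_\eps$ steps), and sets $T_\eps \coloneqq \TermB{t_\eps}$. The crucial step is then Claim~\ref{cl:Ek}: using the Bellman optimality equations for the finite-horizon problem, one shows that if a run conditioned on $T_\eps$ ever revisited a control state at a counter value that has not strictly decreased, the conditional probability of meeting the remaining deadline would have to drop strictly, contradicting $\sigma_\eps$'s optimality. Hence conditioned on $\runs{u}\cap T_\eps$, there is always a positive-probability witnessing extension with no bad revisit, and Lemma~\ref{lem:quadr} then bounds its length by $|Q|^2$ and its counter range by $|Q|$. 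This is precisely what makes the lift to $\bar\A$ consistent (the committed rules are ones $\sigma_\eps$ actually plays on that extension) and overflow-free. Without the finite-horizon construction and the Bellman argument behind Claim~\ref{cl:Ek}, the overflow probability in your lifting cannot be controlled, and the inequality $\vt{q}{i} \leq \vt{\stateIII{q}{0}{0}}{i}$ does not follow.
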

\begin{proof}
The inequality
\(
\vt{q}{i}
\geq
\vt{\stateIII{q}{0}{0}}{i}
\)
is easy, because a strategy in $\A$ can simulate
a strategy in $\bar\A$ (by ``projecting'' it onto states of $\A$), except for the case
when the run in $\bar\A$ reaches $\qd$. But after
reaching $\qd$ no run terminates, so the simulation in $\A$
may continue arbitrarily without producing a lower probability of termination.

To prove
\(
\vt{q}{i}
\leq
\vt{\stateIII{q}{0}{0}}{i}
,
\)
we need to show that there are $\eps$-optimal strategies for $\A$,
for arbitrarily small $\epsilon > 0$, 
which can be simulated in $\bar\A$ while keeping the termination
probability $\eps$-close to the original optimal termination value in $\A$.
In the simulation we will use a natural correspondence of paths in $\bar\A$
to paths in $\A$, given by dropping the odd steps and all additional information.
As an example, the path
\(
\conf{\stateIII{q}{0}{0}}{0}
\tran{}
\conf{\stateV{q}{0}{0}{+1}{r}}{0}
\tran{}
\conf{\stateIII{r}{1}{1}}{1}
\)
corresponds to
\(
\conf{q}{0}
\tran{}
\conf{r}{1}
.
\)

In the proof, we give for every $\eps>0$ a pure strategy $\sigma_\eps$,
and a measurable set of runs, $T_\eps \subseteq \Term$  in $\A$, such that
for all $q\in Q$ and $i\geq 0$:
\begin{itemize}
\item
$\Pr{\sigma_\eps}{\conf{q}{i}}{T_\eps}\geq\vt{q}{i}-\eps$, and
\item
for all finite paths $u$, $\len{u}=n$,
such that
$\Pr{\sigma_\eps}{\conf{q}{i}}{\runs{u}\cap T_\eps}>0$,
there is some $k$, $n<k\leq n+|Q|^2+1$ for which
\end{itemize}
\begin{equation}
\label{eq:shortP}
\Pr{\sigma_\eps}{\conf{q}{i}}{
\Cnt{k}<\Cnt{n}
\land
\forall j, n<j<k:
0\leq\Cnt{j}-\Cnt{n}\leq|Q|
\mid \runs{u}
}>0
.
\end{equation}
Once we have proved the above, we can simulate the strategy $\sigma_\eps$
in $\bar\A$.

Let us define the simulation in detail.
Let $\bar{u}$ be a path from configuration 
$\conf{\stateIII{q}{0}{0}}{i}$ in $\bar\A$, ending in
some configuration $\conf{\stateIII{r}{0}{0}}{j}$, and $u$ the corresponding path in $\A$,
ending in $\conf{r}{j}$.
Let $n = \len{u}$. If 
$\Pr{\sigma_\eps}{\conf{q}{i}}{\runs{u}\cap T_\eps}=0$,
then the rest of the simulating strategy after intial path $\bar{u}$ 
can be defined arbitrarily.
For later reference we call $\bar{u}$ and all its extensions
\emph{dead} in this case.
Otherwise
let $w$ be some extension of $u$ witnessing (\ref{eq:shortP}), i.e.,
$w$ of length $k\leq n+|Q|^2+1$ with a prefix $u$,
such that
\(
\Pr{\sigma_\eps}{\conf{q}{i}}{\runs{w}}>0
\)
and
\(
\Pr{\sigma_\eps}{\conf{q}{i}}{
\Cnt{k}<\Cnt{n}
\land
\forall j, n<j<k:
0\leq\Cnt{j}-\Cnt{n}\leq|Q|
\mid \runs{w}
}=1
.
\)

We fix a unique choice of such a $w$ (which depends on $u$), and we 
define the simulating strategy in $\bar\A$ for all
histories $\bar{v}$ such that the path $v$ in $\A$ which corresponds
to $\bar{v}$ is an extension of $u$ and a proper prefix of $w$.
The definition is by induction on the length of $\bar{v}$.
Such a history $\bar{v}$ ends in some $\stateIII{s}{h}{m}$, where $0\leq m<|Q|^2+1$
and $h\geq 0$.
Let $\conf{s}{c}$ be the last configuration on $v$ and
$\conf{s}{c}\tran{}\conf{s'}{c'}$ the next step in $w$ after completing $v$.
Then the rule chosen with probability $1$ by the simulating strategy 
in $\bar{A}$ for the history $\bar{v}$
is $(\stateIII{s}{h}{m},0,\stateV{s}{h}{m}{c'-c}{s'})$.
Observe that due to our choice of $w$, 
the control state visited in the simulation after completing
$\bar{v}$ and visiting $\stateV{s}{h}{m}{c'-c}{s'}$
is either (a) 
$\stateIII{s'}{h+c'-c}{m+1}$, where $m+1<|Q|^2+1$ and $h+c'-c\geq 0$,
or else (b) a state $\stateIII{s''}{0}{0}$, for some state $s'' \in Q$.
In the former case (a) we continue with a new $\bar{v}$ as above.
In the latter case (b), we are again
back in a state of the form $\stateIII{r}{0}{0}$,
and thus we need to find a new extension $w'$ 
(unless now we are in a dead history)
and start the process all over again.
Because every history in the simulation is either dead, or ends in some
$\conf{\stateIII{r}{0}{0}}{j}$, or is some short extension $\bar{v}$ of such a
history which is not dead and ends in some state $\stateIII{r}{0}{0}$, 
as above, we have now defined the simulating strategy for every history 
in $\bar\A$.

Moreover, consider a path $\bar{u}$ in $\bar\A$, which is not dead.
Because we could not possibly hit $\qd$ in $\bar\A$ before reaching
a dead history, and because $\sigma_\eps$ is pure,
the probability of $\runs{\bar{u}}$ in the simulation
is $\Pr{\sigma_\eps}{\conf{q}{i}}{\runs{u}}$, where $u$ is the path corresponding
to $\bar{u}$ in $\A$.
As a consequence, once we prove the existence of $\sigma_\eps$ and validity of
its properties, we have proven that the termination value in the simulation is
at least $\vt{q}{i}-\eps$.
Because $\eps>0$ can be chosen arbitrarily, this proves
\(
\vt{q}{i}
\leq
\vt{\stateIII{q}{0}{0}}{i}
.
\)
In the rest of the proof we show how to construct such a 
strategy $\sigma_\eps$ in $\A$
for every $\epsilon > 0$ .

Let $t\geq 0$.
By $\TermB{t}$ we denote the event that $\Cnt{t'}=0$ for some $t'\leq t$.
By standard facts (see, e.g.,~\cite[Theorem~4.3.3]{Puterman94}),
for all $t$ there is a pure strategy $\tau_t$ optimal for $\TermB{t}$,
i.e., such that for all $q\in Q$, $i\geq 0$:
$\Pr{\tau_t}{\conf{q}{i}}{\TermB{t}} = \vtb{q}{i}{t}$.
Also, easily
\(
\lim_{t\to\infty} \vtb{q}{i}{t}
=
\vt{q}{i}
,
\)
thus for all $\eps>0$ there is $t_\eps$ such that
\(
\vtb{q}{i}{t_\eps}
\geq
\vt{q}{i} - \eps
.
\)
We set $T_\eps \coloneqq \TermB{t_\eps}$.

Let us fix an $\eps>0$, and consider the corresponding $t_\eps$.
We now define $\sigma_\eps$.
Let $u$ be a path in $\A$, of length $n<t_\eps$, ending
in a configuration $\conf{r}{j}$.
Pick the least $t\leq t_\eps-n$ such that
\(
\vtb{q}{i}{t}
=
\vtb{q}{i}{t_\eps-n}
.
\)
Then $\sigma_\eps(u)=\tau_t(\conf{r}{j})$.
For $u$ where $\len{u}\geq t_\eps$ we define $\sigma_\eps(u)$ arbitrarily.
Due to the Bellman-equation characterization of optimality for finite-horizon objectives,
given, e.g., in~\cite[Section~4.3]{Puterman94}, we obtain that
for all configurations $\conf{q}{i}$:
\(
\Pr{\sigma_\eps}{\conf{q}{i}}{\Term}
\geq
\Pr{\sigma_\eps}{\conf{q}{i}}{\TermB{t_\eps}}
=
\vtb{q}{i}{t_\eps}
\geq
\vt{q}{i}-\eps
,
\)
as required.

For $k\geq0$, let $E_k$ be the event that there are times $t,t'$,
$k\leq t<t'$,
such that $\CState{t}=\CState{t'}$
and $0<\Cnt{t}\leq\Cnt{t'}$.
\begin{claim}
\label{cl:Ek}
\begin{sloppypar}
Let $k\geq 0$, and $\conf{q}{i}$ be a configuration.
Further, let $u$ be an arbitrary path such that $\len{u}=k$,
\(
\Pr{\sigma_\eps}{\conf{q}{i}}{\runs{u}}>0
\)
and
\(
\Pr{\sigma_\eps}{\conf{q}{i}}{\TermB{t_\eps} \mid \runs{u}}>0
.
\)
Then
$\Pr{\sigma_\eps}{\conf{q}{i}}{E_k \mid \runs{u}\cap\TermB{t_\eps}}<1$.
\end{sloppypar}
\end{claim}
\begin{proof}
\begin{sloppypar}
By contradiction.
For $k\geq t_\eps$ the statement is obvious.
Fix some $k,\ 0\leq k< t_\eps$ and $\conf{q}{i}$.
Assume that
$\Pr{\sigma_\eps}{\conf{q}{i}}{E_k \mid \runs{u}\cap\TermB{t_\eps}}=1$.
Let $w$ be an arbitrary extension of $u$ such that
$\len{w}=t_\eps$,
$\runs{w}\subseteq\TermB{t_\eps}$, and
$\Pr{\sigma_\eps}{\conf{q}{i}}{\runs{w}}>0$.
Then clearly $\runs{w}\subseteq E_k$.
This means that there are times $t,t'$,
$k\leq t < t'\leq t_\eps$
such that
$\Pr{\sigma_\eps}{\conf{q}{i}}{\CState{t}=\CState{t'} \mid \runs{w}}=1$, and
$\Pr{\sigma_\eps}{\conf{q}{i}}{0<\Cnt{t}\leq \Cnt{t'} \mid \runs{w}}=1$.
Consider the prefixes $\bar{w}$, $\bar{w}'$ of $w$ of lengths
$t$ and $t'$, respectively. There is some state $r\in Q$, and counter values $0<j\leq j'$
such that $\bar{w}$ ends in $\conf{r}{j}$, and $\bar{w}'$ ends in $\conf{r}{j'}$.
By the construction of $\sigma_\eps$, there are
$h\leq t_\eps-t$ and $h'\leq t_\eps-t'$ such that
$h'<h$, $\vtb{r}{j}{h}>\vtb{r}{j}{h-1}\geq\vtb{r}{j'}{h'}$, and
\(
\Pr{\sigma_\eps}{\conf{q}{i}}{\TermB{t_\eps} \mid \runs{\bar{w}}}
=
\vtb{r}{j}{h}
,
\)
\(
\Pr{\sigma_\eps}{\conf{q}{i}}{\TermB{t_\eps} \mid \runs{\bar{w}'}}
=
\vtb{r}{j'}{h'}
.
\)
In other words, on every extension of $u$ which eventually satisfies
$\TermB{t_\eps}$ there is a moment where the probability of $\TermB{t_\eps}$,
conditionally on the current history, sharply decreases.
This is in contradiction with the fact that $\sigma_\eps$ is optimal
wrt.\ $\TermB{t_\eps}$, and thus satisfies the Bellman optimality equations
(cf.~\cite[Section~4.3]{Puterman94}).
\end{sloppypar}
\end{proof}

Let us fix an arbitrary path $u$, $\len{u}=n$, such that
$\Pr{\sigma_\eps}{\conf{q}{i}}{\runs{u}\cap\TermB{t_\eps}}>0$.
We apply Claim~\ref{cl:Ek}, and obtain a witnessing extension, $w$, $\len{w}=m$, of
$u$ so that $\Pr{\sigma_\eps}{\conf{q}{i}}{\runs{w}}>0$,
$\Pr{\sigma_\eps}{\conf{q}{i}}{\Cnt{m}=0 \mid \runs{w}}=1$, and
$\Pr{\sigma_\eps}{\conf{q}{i}}{E_n \mid \runs{w}}=0$.
By
Lemma~\ref{lem:quadr} this implies that
there is some $k$, $n<k\leq n+|Q|^2+1$
such that (\ref{eq:shortP}) is satisfied.
Thus we proved all the required properties of $\sigma_\eps$ and
$T_\eps = \TermB{t_\eps}$, and the proof is done.
\myqed\end{proof}

As a consequence, we obtain the last promised property of $\bar\A$.
\begin{lemma}
\label{lem:pres-non-idling}
If $\valO\CN{q}<1$ for all $q\in Q$ in $\A$, then $\bar\A$ is rising.
\end{lemma}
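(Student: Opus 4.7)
The plan is to prove Lemma~\ref{lem:pres-non-idling} by contradiction, lifting a hypothetical $\CN$-optimal pure counterless strategy in $\bar\A$ back to a history-dependent strategy in $\A$ that would violate the hypothesis. Being rising requires both the absence of idling pure counterless strategies and that $T(\bar Q)=\emptyset$; the former is supplied by Lemma~\ref{lem:no-idling}, so the only remaining task is to show $\valO{\CN}{\bar q} < 1$ for every $\bar q \in \bar Q$. First I would dispose of the trap state: from $\qd$ the only outgoing rule is the self-loop $(\qd,+1,\qd)$, so $\Cnt{i}\to+\infty$ almost surely and $\valO{\CN}{\qd} = 0$. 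Next, assume for contradiction that $\valO{\CN}{\bar q} = 1$ for some $\bar q \in \bar Q \setminus \{\qd\}$. By \cite[Proposition~7]{BBE10}, Max has a pure counterless strategy $\bar\sigma$ which is $\CN$-optimal at every state of $\bar Q$, hence $\Pr{\bar\sigma}{\bar q}{\CN} = 1$. Because $\CN$ cannot be satisfied once the run enters $\qd$, the run under $\bar\sigma$ stays outside $\qd$ almost surely. By the construction of $\bar\A$, every segment either reaches $\qd$ or re-enters a state of the form $\stateIII{r}{0}{0}$ (via the ``reset'' transitions, i.e.\ cases~3 and~4 in the definition of $\bar\rules$) within a bounded number of steps, so almost surely the run visits some $\stateIII{r}{0}{0}$ in finite time. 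Prefix-independence of $\CN$ together with the tower property then yield some $r \in Q$ with $\valO{\CN}{\stateIII{r}{0}{0}} = 1$, and I would replace $\bar q$ by this state.

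The central step is to convert $\bar\sigma$ into a history-dependent strategy $\sigma$ in $\A$ starting from $r$ which still achieves $\CN$ with probability one. While reading a history $u$ in $\A$ I would carry along the pair $(n(u),m(u))$ that the mirrored $\bar\A$-run would currently sit at, initialised to $(0,0)$ and updated as follows: whenever the latest $\A$-transition $(q,k',q')$ matches the commitment $(q,k,\rho)$ made by $\bar\sigma$ at $\stateIII{q}{n}{m}$, continue the segment by setting $(n,m)\gets(n+k,m+1)$ provided $n+k\ge 0$; otherwise (stochastic mismatch, or $n+k=-1$) reset $(n,m)\gets(0,0)$. At a Max state $q$ with current info $(n,m)$, let $\sigma(u)$ pick exactly the rule $(q,k,\rho)$ to which $\bar\sigma$ commits at $\stateIII{q}{n}{m}$; on the (measure-zero) histories where the bookkeeping would force $\bar\A$ to transition into $\qd$, define $\sigma$ arbitrarily. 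By construction of $\bar{P}$, the distribution over (counter increment, next $Q$-state) at each stochastic 5-tuple $\stateV{q}{n}{m}{k}{\rho}$ coincides with the distribution at $q\in\sP{Q}$ in $\A$, so the law of the $\A$-run under $\sigma$ from $(r,c)$ is precisely the projection of the $\bar\A$-run under $\bar\sigma$ from $(\stateIII{r}{0}{0},c)$, conditioned on never reaching $\qd$. Since $\Pr{\bar\sigma}{\stateIII{r}{0}{0}}{\text{reach }\qd} = 0$, this conditioning is trivial and $\Pr{\sigma}{r}{\CN} = \Pr{\bar\sigma}{\stateIII{r}{0}{0}}{\CN} = 1$, whence $\valO{\CN}{r} \ge 1$, contradicting $\valO{\CN}{r} < 1$.

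The main subtlety I anticipate is the distributional equivalence in the simulation: one must verify carefully that the segment bookkeeping in $\A$ really does reproduce the $\bar\A$-distribution at every stochastic step (the assigned probabilities come from $\bar{P}$, which is designed precisely to preserve $P$ on the $(k',q')$-marginal), and that the overflow histories---where $\bar\A$ would divert to $\qd$ but $\A$ has no such escape---form a null set under $\bar\sigma$ because $\bar\sigma$ satisfies $\CN$ almost surely. The remaining ingredients (eliminating $\qd$, reducing to a $\stateIII{r}{0}{0}$-shaped starting state, and converting $\valO{\CN}{\bar q}<1$ to $T(\bar Q)=\emptyset$ via Fact~\ref{fact:CN}) are routine given the construction of $\bar\A$ already developed for Lemma~\ref{lem:reduction}.
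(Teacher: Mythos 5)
Your proof is correct, but it takes a genuinely different route from the paper's. The paper disposes of $\qd$ and of states other than $\stateIII{q}{0}{0}$ exactly as you do (prefix-independence of $\CN$ plus almost-sure reachability of $\{\qd\}\cup\{\stateIII{q}{0}{0}\mid q\in Q\}$), and of course invokes Lemma~\ref{lem:no-idling} for the no-idling half of ``rising''. The divergence is in how the core case $\bar q=\stateIII{q}{0}{0}$ is handled. The paper leans on two already-available facts: the termination-value preservation $\vt{q}{i}=\vt{\stateIII{q}{0}{0}}{i}$ from Lemma~\ref{lem:val-no-change}, and a characterization from \cite{BBE10} (their Lemma~14) equating $\valO\CN{q}<1$ with $\vt{q}{i}<1$ for some $i$; chaining these (in both directions) gives the result in three lines. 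You instead assume $\valO\CN{\stateIII{r}{0}{0}}=1$, take a $\CN$-optimal pure counterless $\bar\sigma$, and lift it to a history-dependent strategy $\sigma$ in $\A$ via the $(n,m)$-bookkeeping, obtaining $\Pr{\sigma}{r}{\CN}=1$, a direct contradiction with $\valO\CN{r}<1$. This lift is essentially the same simulation used in the easy direction of Lemma~\ref{lem:val-no-change}, just applied to $\CN$ rather than $\Term$; but since $\CN$ holds with probability one, $\qd$ is avoided almost surely, so your simulation is exact and needs no ``after $\qd$ play arbitrarily'' caveat. The upshot is that your argument is self-contained relative to the construction of $\bar\A$ and does not import BBE10 Lemma~14 nor reuse Lemma~\ref{lem:val-no-change}; the price is that you re-derive the simulation bookkeeping inside this proof rather than quoting it. Your distributional-equivalence claim for the stochastic 5-tuples is exactly what $\bar P$ is designed to guarantee, and your observation that the ``about to enter $\qd$'' histories are $\sigma$-null (because they are $\bar\sigma$-null and the coupling is exact up to that point) is the right justification for the arbitrary extension of $\sigma$ there.
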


\begin{proof}
By Lemma~\ref{lem:no-idling} there are no idling strategies in $\bar\A$.
It remains to prove that
$\valO\CN{\bar{q}}<1$ for all $q\in \bar{Q}$ in $\bar\A$.
First we prove it for $\bar{q}$ of the form $\stateIII{q}{0}{0}$.
If $\valO\CN{q}<1$ then there is $i\geq 0$ such that
$\vt{q}{i}<1$, by, e.g., Lemma 14 of \cite{BBE10}.
By Lemma~\ref{lem:val-no-change} we thus have
$\vt{\bar{q}}{i}<1$, and, thus again by Lemma 14 of~\cite{BBE10},
$\valO\CN{\bar{q}}<1$.
If $\bar{q}=\qd$ we are done immediately, as
$\valO\CN{\qd}=0$.
Finally, in all remaining cases of $\bar{q}$ the play will almost surely
reach some states from $\{\qd\}\cup\{\stateIII{q}{0}{0} \mid q\in Q\}$.
Because $\CN$ is prefix independent,
$\valO\CN{\bar{q}}<1$ also in this case, and the proof is finished.
\end{proof}

\subsection{Bounds on $N$}
\label{sec:app-bounds}

Here we derive an exponential upper bound on the value $N$, introduced
in Section~\ref{sec:res}.
Recall that, given a OC-SSG, $\A=\ocssgA$, and an $\eps>0$, we want $N$
to satisfy:
\[
\vt{q}{i}
-
\text{Val}(\CN,q)
\leq
\eps
\quad
\text{for all $q\in Q$ and $i\geq N$.}
\]
By results of Section~\ref{sec:OC-SSG}, it suffices to consider only
the case when $\A$ is a maximizing OC-MDP.
From Section~\ref{sec-OC-MDP} we know that
\(
N
\coloneqq
\max\{
h,
\lceil
\log_c(\eps\cdot(1-c))
\rceil
\}
,
\)
where
\(
c
=
\exp
\left(
\frac%
{-\bar{x}^2}%
{2\cdot(\vdiff+\bar{x}+1)}
\right)
<1
\)
and
\(
h
=
\lceil \vdiff \rceil
,
\)
and $\bar{x}$ and $\vdiff$ are solutions to a linear program
with coefficients polynomial in $\size\A$.
Thus there is a positive polynomial, $p$, such that
\(
c
\leq
\expo{-\expo{-p(\size\A)})}
\)
and
\(
h
\leq
\expo{p(\size\A)}
.
\)
If $N\leq h$ we have clearly that it is exponentially bounded
in $\size\A$.
Otherwise
\[
N
\leq
f_\eps(c)
\coloneqq
\frac{\ln(\eps) + \ln(1-c)}{\ln(c)}.
\]
Observe that $f_\eps(c)$ is growing with $c\to1^{-}$ and fixed $\eps$,
because
$\frac{c^{f_\eps(c)}}{1-c}=\eps$
and
$\frac{c^{i}}{1-c}$ grows with $c\to1^{-}$ and fixed $i$.
Thus
\begin{multline}
\label{eq:Nb}
N
\leq
f_\eps(c)
\leq
f_\eps(\expo{-\expo{-p(\size\A)}})
\\
=
\frac{\ln(\eps) + \ln(1-\expo{-\expo{-p(\size\A)}})}{-\expo{-p(\size\A)}}
=
\expo{p(\size\A)}
\cdot
\ln(1/\eps) 
-
\ln(1-\expo{-\expo{-p(\size\A)}})
\cdot
\expo{p(\size\A)}
.
\end{multline}

Before we prove that this is indeed an exponential bound on $N$,
let us prove two auxiliary claims.

\begin{claim}
For all $n\geq 0$ the following inequality holds:
\begin{equation}
\label{eq:exp-gap}
\expo{-1}-\expo{-1-\expo{-n}}
\leq
1-\expo{-\expo{-(n+1)}}
.
\end{equation}
\end{claim}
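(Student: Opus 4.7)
The plan is to substitute $t = \expo{-n}$ (so $t \in (0, 1]$ for $n \geq 0$) and reduce the claim to a one-variable calculus exercise. Writing $\expo{-(n+1)} = t \cdot \expo{-1}$, inequality (\ref{eq:exp-gap}) becomes $\expo{-1}(1 - \expo{-t}) \leq 1 - \expo{-t\expo{-1}}$, equivalently $g(t) \geq 0$ where
\[
g(t) \;:=\; 1 - \expo{-t\expo{-1}} - \expo{-1} + \expo{-1-t}.
\]
In fact I would prove this for all $t \geq 0$, since the argument is no harder than restricting to $(0,1]$.

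The verification then splits into two short steps. First, a direct computation at the boundary gives $g(0) = 1 - 1 - \expo{-1} + \expo{-1} = 0$. Second, differentiating yields
\[
g'(t) \;=\; \expo{-1}\bigl(\expo{-t\expo{-1}} - \expo{-t}\bigr),
\]
and since $t \geq 0$ and $\expo{-1} < 1$ imply $-t\expo{-1} \geq -t$, monotonicity of the exponential gives $g'(t) \geq 0$ on $[0,\infty)$. Thus $g$ is non-decreasing on $[0,\infty)$ with $g(0) = 0$, so $g(t) \geq 0$ throughout, and substituting back $t = \expo{-n}$ yields (\ref{eq:exp-gap}).

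There is no real obstacle in this argument; the whole proof is a few lines once the substitution is made. The only point requiring any thought is recognizing that setting $t = \expo{-n}$ collapses the doubly-nested exponentials on both sides of (\ref{eq:exp-gap}) into expressions sharing the same base variable $t$, at which point comparing the exponents $-t$ and $-t\expo{-1}$ by standard monotonicity finishes the job.
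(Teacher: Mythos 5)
Your proof is correct, but it takes a genuinely different route from the paper's. You substitute $t = \expo{-n}$, which collapses both nested exponentials to share the base variable $t$, reduce the claim to showing $g(t) := 1 - \expo{-t\expo{-1}} - \expo{-1} + \expo{-1-t} \geq 0$ for $t \geq 0$, and finish via $g(0)=0$ and $g'(t) = \expo{-1}\bigl(\expo{-t\expo{-1}} - \expo{-t}\bigr) \geq 0$, the last step being a single application of monotonicity of $\expo{x}$. The paper instead argues discretely: it rewrites the claim as $d(n) \leq 1 - \expo{-1}$ where $d(n) := \expo{-\expo{-(n+1)}} - \expo{-1-\expo{-n}}$, observes $\lim_{n\to\infty} d(n) = 1 - \expo{-1}$, and so only needs $d(n)$ to be increasing; this last fact is established by bounding the ratio $\bigl(\expo{a}-\expo{b}\bigr)/\bigl(\expo{b}-\expo{c}\bigr) \geq (a-b)/(b-c)$ (a convexity-of-$\expo{}$ inequality) with $a = -\expo{-(n+2)}$, $b = -\expo{-(n+1)}$, $c = -\expo{-n}$. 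Your version is more elementary and self-contained: it avoids the limit computation and the two-sided convexity estimate, replacing both by anchoring at $t=0$ (where equality holds exactly) and a one-line sign check on the derivative. The paper's version is structured to feed directly into the next lemma in that appendix (which bounds $-\ln(1-\expo{-\expo{-n}})$ by induction on $n$ using this discrete increment), so keeping the argument in the discrete variable $n$ has a presentational payoff there, but as a standalone proof of the inequality yours is cleaner.
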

\begin{proof}
We set
\(
d(n)
\coloneqq
\expo{-\expo{-(n+1)}}
-
\expo{-1-\expo{-n}}
.
\)
The inequality (\ref{eq:exp-gap}) is equivalent to $d(n)\leq 1-\expo{-1}$.
Because $\lim_{n\to\infty}d(n)=1-\expo{-1}$, it suffices to
prove that $d(n)$ is increasing:
Observe that
\begin{equation}
\label{eq:dif}
d(n+1)-d(n)
=
(
\expo{-\expo{-(n+2)}}
-
\expo{-\expo{-(n+1)}}
)
-\expo{-1}\cdot
(
\expo{-\expo{-(n+1)}}
-
\expo{-\expo{-n}}
)
.
\end{equation}
Also, because the exponential function $\expo{x}$ is increasing and has increasing derivation $\expo{x}\geq 0$,
we know that
\[
\frac{
\expo{a}-\expo{b}
}{
\expo{b}-\expo{c}
}
\geq
\frac{
a-b
}{
b-c
}
\quad
\text{for all $a>b>c$.}
\]
In particular, setting
$a=-\expo{-(n+2)}$,
$b=-\expo{-(n+1)}$, and
$c=-\expo{-n}$
yields
\[
\frac{
\expo{-\expo{-(n+2)}}-\expo{-\expo{-(n+1)}}
}{
\expo{-\expo{-(n+1)}}-\expo{-\expo{-n}}
}
\geq
\expo{-1}
.
\]
By (\ref{eq:dif}), this implies $d(n+1)\geq d(n)$ as required.
\myqed\end{proof}

\begin{claim}
For all $n\geq 0$ the following inequality holds:
\begin{equation}
\label{eq:lin}
n+1\geq -\ln(1-\expo{-\expo{-n}}).
\end{equation}
\end{claim}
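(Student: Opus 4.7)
The plan is to reduce the claim to a routine one-variable calculus exercise on a bounded interval. Substituting $u := \expo{-n}$, which lies in $(0,1]$ whenever $n \geq 0$, we have $\expo{-(n+1)} = u\cdot\expo{-1}$. Exponentiating both sides of the desired inequality $n+1 \geq -\ln(1-\expo{-\expo{-n}})$ and rearranging shows that it is equivalent to
\[
u\cdot\expo{-1} \;\leq\; 1 - \expo{-u} \qquad \text{for all } u \in (0,1].
\]

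To prove this reduced inequality, I would define $g(u) := 1 - \expo{-u} - u\cdot\expo{-1}$. One immediately checks that $g(0) = 0$, and
\[
g'(u) \;=\; \expo{-u} - \expo{-1},
\]
which is non-negative precisely when $u \leq 1$. Hence $g$ is non-decreasing on $[0,1]$, so $g(u) \geq g(0) = 0$ throughout $[0,1]$. This yields the reduced inequality and therefore (\ref{eq:lin}).

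Since this is a short single-variable calculus argument on a compact interval, I do not expect any real obstacle. An equally short alternative route would be to combine the Taylor-type bound $1 - \expo{-u} \geq u - u^2/2$ (valid for $u \geq 0$, as one sees from the fact that the second derivative of $h(u) := 1 - u + u^2/2 - \expo{-u}$ equals $1 - \expo{-u} \geq 0$, whence $h'(u) \geq h'(0) = 0$ and $h(u) \geq h(0) = 0$) with the elementary observation that $u - u^2/2 \geq u\cdot\expo{-1}$ for $u \in [0,1]$, which simplifies to $u/2 \leq 1 - \expo{-1} \approx 0.632$ and thus holds on the entire interval. Either path establishes (\ref{eq:lin}) directly.
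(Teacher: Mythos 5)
Your proof is correct, and it takes a genuinely different and cleaner route than the paper. The paper proves the inequality by induction on $n$, first establishing an auxiliary inequality $\expo{-1}-\expo{-1-\expo{-n}} \leq 1-\expo{-\expo{-(n+1)}}$ (their preceding claim), which itself requires a monotonicity-of-difference-quotients argument, and then feeding that into the inductive step. Your approach bypasses the induction entirely: the substitution $u=\expo{-n}$ turns the target into the single-variable inequality $u\,\expo{-1}\leq 1-\expo{-u}$ on $(0,1]$, which falls to a one-line derivative check. This is shorter, avoids the auxiliary lemma, and — as a side benefit — establishes the claim for all real $n\geq 0$, not just integers (the paper's induction only directly covers nonnegative integers, though that suffices for their application). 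Both of your suggested routes (the direct derivative argument and the Taylor-bound route) are sound; the details check out.
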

\begin{proof}
By induction.
A direct computation for $n=0$ shows
\(
-\ln(1-\expo{-\expo{-0}})
=
-\ln(1-\expo{-1})
\leq
0.46
<
1
.
\)
Consider now $n=k+1$ for some $k\geq 0$.
Using (\ref{eq:exp-gap}) and the inductive hypothesis, we obtain
\[
(k+1)+1
\geq
-\ln(1-\expo{-\expo{-k}}) + 1
=
-\ln(\expo{-1}-\expo{-1-\expo{-k}})
\geq
-\ln(1-\expo{-\expo{-(k+1)}})
.
\]
\myqed\end{proof}

Finally, using (\ref{eq:lin}) in (\ref{eq:Nb}) we get
\(
N
\leq
\expo{p(\size\A)}
\cdot
\ln(1/\eps) 
+
(1+p(\size\A))
\cdot
\expo{p(\size\A)}
.
\)

}{}

\end{document}